\documentclass[11pt,a4paper]{article}

\usepackage{latexsym}
\usepackage{amsfonts}
\usepackage{amssymb}
\usepackage{amsmath}
\usepackage{wasysym}
\usepackage[mathscr]{eucal}
\usepackage{theorem}
\usepackage{enumerate}
\usepackage{cite}
\usepackage{url}
\usepackage{graphicx}
\usepackage[all]{xy}

\setlength{\parskip}{1ex plus0.5ex minus0.2ex}
\setlength{\textwidth}{16cm}
\setlength{\textheight}{22cm}
\setlength{\topmargin}{0cm}
\setlength{\oddsidemargin}{0cm}
\setlength{\evensidemargin}{0cm}

\sloppy

\theoremstyle{plain}
\theorembodyfont{\itshape}
\newtheorem{thm}{Theorem}

\newtheorem{prp}{Proposition}

\newtheorem{dfn}{Definition}

\theorembodyfont{\upshape}

\theorembodyfont{\upshape}
\newtheorem{exam}{Example}

\newcommand{\qed}{\hfill\mbox{\raggedright $\Box$}\medskip}

\newcommand{\mydate}{
 \ifcase\month \or
 January\or February\or March\or April\or May\or June\or
 July\or August\or September\or October\or November\or December\fi
 \space \number\year}

\newcommand{\smin}{\,\raisebox{0.06em}{${\scriptstyle \in}$}\,}

\newcommand{\smcirc}{{\scriptstyle \,\circ\,}}
\newcommand{\circledstar}{\mathbin{\mathchoice%
 {\ooalign{\hss$\displaystyle\ocircle$\hss\cr\hss$\displaystyle\star$\hss\cr}}%
 {\ooalign{\hss$\textstyle\ocircle$\hss\cr\hss$\textstyle\star$\hss\cr}}%
 {\ooalign{\hss$\scriptstyle\ocircle$\hss\cr\hss$\scriptstyle\star$\hss\cr}}%
 {\ooalign{\hss$\scriptscriptstyle\ocircle$\hss\cr%
           \hss$\scriptscriptstyle\star$\hss\cr}}}}
\newcommand{\bwedge}{\raisebox{0.2ex}{${\textstyle \bigwedge}$}}
\newcommand{\bvee}{\raisebox{0.2ex}{${\textstyle \bigvee}$}}

\newcommand{\mathslf}[1]{\ensuremath{\mbox{\slshape\textsf{#1}}}}

\newcommand{\dbwedge}[2]{\raisebox{0.2ex}{${\textstyle \bigwedge}$}
 \ensuremath{_{\scriptstyle #1}^{\raisebox{-0.2ex}{${\scriptstyle #2}$}}\,}}

\begin{document}

\title{\mbox{} \\
       Lie Groupoids in Classical Field Theory I: \\
       Noether's Theorem \\[4ex]}

\author{Bruno T.\ Costa$^{\,1}$, Michael Forger$^{\,1}$~and~%
        Luiz Henrique P.\ P\^egas$^{\,1,2}$ \\[2ex]}
\date{\normalsize
      $^{1\,}$ Instituto de Matem\'atica e Estat\'{\i}stica, \\
      Universidade de S\~ao Paulo, \\
      Caixa Postal 66281, \\
      BR--05315-970~ S\~ao Paulo, SP, Brazil \\[4mm]
      $^2\,$ Departamento de Matem\'atica, \\
      Universidade Federal do Paran\'a, \\
      Caixa Postal 19081, \\
      BR--81531-980 Curitiba, PR, Brazil
      }
\maketitle

\thispagestyle{empty}

\vspace*{\fill}

\begin{abstract}
\noindent
 In the two papers of this series, we initiate the development of a new
 approach to implementing the concept of symmetry in classical field theory,
 based on replacing Lie groups/algebras by Lie groupoids/algebroids, which
 are the appropriate mathematical tools to describe local \linebreak symmetries
 when gauge transformations are combined with space-time transformations.
 \linebreak
 Here, we outline the basis of the program and, as a first step, show how to
 (re)formulate Noether's theorem about the connection between symmetries
 and conservation laws in this approach.
\end{abstract}

\vspace*{\fill}

\begin{flushright}
 \parbox{12em}{
  \begin{center}
   Universidade de S\~ao Paulo \\
   RT-MAP-1501 \\
   Revised Version \\
   October 2017
  \end{center}
 }
\end{flushright}

\vspace*{\fill}

\pagebreak

\vspace*{\fill}

\begin{center}
 \sc{this page intentionally left blank}
\end{center}
\thispagestyle{empty}

\vspace*{\fill}

\pagebreak

\setcounter{page}{1}

\section{Introduction}

Symmetry is a fundamental concept of science deeply rooted in human
culture, as can be verified by contemplating, e.g., the monumental
collection of papers assembled in~\cite{Har1,Har2}, testifying the
innumerous ways in which it permeates all areas of knowledge.
In mathematics, it has been the driving force for the development
of group theory, which began in the $19^{\mathrm{th}}$ century
with the work of E.~Galois and S.~Lie, formalizing the idea that
symmetry transformations can be assembled into groups.
And in physics, it has in the course of the $20^{\mathrm{th}}$
century become one of the most influential guiding principles
for the development of new theories, used in a wide variety of
contexts, and now plays an important role in practically all areas,
including mechanics and field theory, both classical and quantum.

Correspondingly, and not surprisingly, symmetries in physics appear
in different variants. \linebreak
For example, they may act through transformations in 3-dimensional
physical space such as translations, rotations and reflections (spatial
symmetries), or in the case of relativistic physics where space and time
merge into a single space-time continuum, transformations in 4-dimensional
space-time such as Lorentz transformations (space-time symmetries), or
else they may act only through transformations in an abstract internal
space that has nothing to do with physical space or space-time, being
instead related to the dynamical variables of the theory under
consideration (internal symmetries).
Similarly, they can be distinguished according to whether the
corresponding group of transformations is continuous, such as
the group of spatial translations or rotations (parametrized by
vectors or by the Euler angles, say), or is discrete, such as the
group of spatial translations or rotations in a crystal lattice or
a reflection group.
And finally, a very important extension of the usual symmetry
concept arises from the idea that the parameters characterizing a
specific element within the pertinent group may depend on the point
in space or space-time where the symmetry transformation is performed.
This notion of ``gauging a symmetry'', thus extending it from a global
symmetry to a local symmetry by allowing different transformations
to be performed at different points, is at the heart of gauge theories,
which occupy a central position in modern field theory, classical as
well as quantum.
(We think of this as an extension of the symmetry concept because
gauge transformations do not really represent symmetry transformations
in the strict sense of the word, since they do not relate observable quantities.
Instead, their presence reflects the fact that the system under consideration
is being described in terms of redundant variables which are not observable,
such as the potentials in electrodynamics, and the amount of redundancy
in the choice of these variables is controlled by the principle of gauge
invariance: the observable, physical content of the theory is encoded
in its gauge invariant part.)

The mathematical implementation of different types of symmetries
requires different types of groups.
In particular, continuous symmetries are described in terms of Lie
groups and, infinitesimally, of Lie algebras, and local symmetries
involve infinite-dimensional Lie groups and Lie algebras.
But more exotic mathematical entities have also begun to appear
on the scene, essentially since the 1970s, and have come to be
associated with generalized notions of symmetry, such as Lie
superalgebras and quantum groups.

In this paper, we propose to employ yet another mathematical tool
that is particularly well adapted to the concept of a local symmetry,
namely that of Lie groupoids and, infinitesimally, of Lie algebroids.
The main advantage of such an approach is that it eliminates the
need for working from the very beginning with infinite-dimensional
objects such as diffeomorphism groups of manifolds and automorphism
groups of bundles, whose mathematical structure is notoriously difficult
to handle: such groups do arise along the way but are now derived from
a more fundamental underlying entity, which is purely finite-dimensional.

This procedure of ``reduction to finite dimensions'' is by no means new
and has in fact been used in differential geometry for a long time, namely
when one considers a representation of a Lie group $G$ by diffeomorphisms
of a manifold~$M$ not as a group homomorphism $\, G \longrightarrow
\mathrm{Diff}(M)$, whose continuity (let alone smoothness) is hard
to define and even harder to control, but rather as a group action
$\, G \times M \longrightarrow M$, for which the definition of
continuity (and smoothness) is the obvious one.
A similar procedure can be applied when $G$ is replaced by one of the
aforementioned infinite-dimensional groups which appear naturally in
differential geometry and in gauge theories, the main difference being
that in this case the corresponding action is no longer that of a Lie
group on a manifold but rather that of a Lie groupoid on a fiber bundle.

The history of the theory of groupoids is to a certain extent parallel
to that of group theory itself. The notion of an abstract groupoid goes
back to a paper by H.~Brandt in 1927~\cite{Bra}; it applies to discrete
as well as to continuous symmetries.
Lie groupoids seem to have been introduced by C.~Ehresmann in the
1950s, in conjunction with principal bundles and connections, but in
contrast to these did not find their way into mainstream differential
geometry for several decades.
Lie algebroids and their relation to Lie groupoids were apparently 
first discussed by J.~Pradines in 1968~\cite{Pra}, but the main
result (Lie's third theorem) as stated there is incorrect and was
only rectified much later~\cite{CF}.
At the time of this writing, the standard textbook in the area is
Ref.~\cite{Mac}, whose results we shall use freely in this paper.

We conclude this introduction with an outline of the contents.
In Section~2, we briefly review the modern formulation of classical
field theory in a geometric framework, where fields are sections of
fiber bundles over space-time and hence coordinate invariance and
gauge invariance (in the sense of invariance under changes of local
trivializations of these fiber bundles) are built in from the very
beginning.
As in the standard geometric formulation of classical mechanics~%
\cite{AM,Arn}, we shall for the sake of simplicity restrict ourselves
to a first order formalism~-- lagrangian as well as hamiltonian.
In Section~3, we discuss a few basic topics from the theory of
Lie groupoids and Lie algebroids.
In particular, we present the construction of the jet groupoid
of a Lie groupoid, which is badly neglected in the mathematical
literature (for instance, it does not appear at all in Ref.~\cite{Mac})
but which turns out to be of crucial importance for the entire theory;
it can be iterated to produce a construction of the second order jet
groupoid of a Lie groupoid and of a non-holonomous extension
thereof which is needed in the sequel.
Apart from that, our goal in these two introductory sections
is essentially just to delineate the concepts we shall be using
and to fix the notation.
In Section~4, we discuss how an action of a Lie groupoid on
a fiber bundle induces actions of certain Lie groupoids derived
from the original one on certain fiber bundles derived from the
original one.
Here, the central point is that apart from obvious functorial
procedures such as defining the action of the jet groupoid on the
jet bundle or the tangent groupoid on the tangent bundle (of the
total space), we are also able to construct a new~-- and much
less obvious~-- induced action of the jet groupoid on the tangent
bundle (of the total space).
This construction is perhaps the most important result of the
paper because it allows us to give a precise mathematical
definition of invariance, under the action of a Lie groupoid
on some fiber bundle, of geometric structures on its total
space, at least when these are defined by some kind of
tensor field,  thus finally overcoming one of the major
obstacles in the theory that has for a long time jeopardized
its relevance for applications. \linebreak
The new feature as compared to group actions is that such
an invariance refers not to the original Lie groupoid itself
but rather to its jet groupoid, or some subgroupoid thereof.
We also exhibit the relation with the corresponding
representations of the appropriate groups of bisections and,
as an application, show in which sense the multicanonical
form $\theta$ and the multisymplectic form $\omega$ of
the covariant hamiltonian formalism, as presented in
Section~2, are invariant under the appropriate induced
actions.
In Section~5, we introduce the concept of momentum map
and prove Noether's theorem in this setting.
Finally, in Section~6, we illustrate some of the main concepts
and constructions introduced in the paper on the simplest
possible example: the theory of a single real scalar field.

In the second paper of this series, we shall specialize the
general formalism developed below to what may be regarded
as the most important general class of geometric field theories:
gauge theories.
There, all bundles that appear are derived from a given principal
bundle over space-time, either as an associated bundle (whose
sections are matter fields) or as the corresponding connection
bundle (whose sections are connections, or in physics language,
gauge potentials), and the Lie groupoid acting on them is the
corresponding gauge groupoid.
In this context, we shall then be able to discuss issues such as
the significance of the procedure of ``gauging a symmetry'',
already mentioned above, the prescription of minimal coupling
and Utiyama's theorem, generalizing the results of Ref.~\cite{FS}
from the context of Lie group bundles, which are sufficient to
handle internal symmetries, to Lie groupoids, as required to
deal with the general case of space-time symmetries mixed
with internal symmetries.

\section{Geometric formulation of classical field theory}

We start out by fixing a fiber bundle $E$ over a base manifold $M$,
with bundle projection denoted by $\, \pi_E^{}: E \longrightarrow M$:
it will be called the \emph{configuration bundle} since its sections
are the basic fields of the theory under consideration. This requires
that, in physics language, $M$ is to be interpreted as space-time~--
even though we do not assume it to carry any fixed metric, given
that in general relativity the metric tensor is itself a dynamical
variable and hence cannot be fixed ``a priori''.

In order to formulate the laws governing the dynamics of the fields,
we need to consider derivatives (velocities), as well as their duals
(momenta).

In order to do so, we begin by introducing the \emph{jet bundle} $JE$
of~$E$, together with the \emph{linearized jet bundle} $\vec{J} E$
of~$E$, as follows: for any point $e$ in~$E$ with base point
$\, x = \pi_E^{}(e) \,$ in~$M$, let $L(T_x^{} M,T_e^{} E)$
denote the space of linear maps from the tangent space
$T_x^{} M$ to the tangent space $T_e^{} E$ and consider
the affine subspace
\begin{equation} \label{eq:FJB1}
 J_e^{} E~=~\{ \, u_e^{} \smin L(T_x^{} M,T_e^{} E) \, | \;
               T_x^{} \pi_E^{} \smcirc u_e^{}
               = \mathrm{id}_{T_x^{} M}^{} \, \} \,,
\end{equation}
and its difference vector space
\begin{equation} \label{eq:LJB1}
 \vec{J}_e^{} E~=~\{ \, \vec{u}_e^{} \smin L(T_x^{} M,T_e^{} E) \, | \;
                     T_x^{} \pi_E^{} \smcirc \vec{u}_e^{} = 0 \, \} \,,
\end{equation}
i.e.,
\begin{equation} \label{eq:LJB2}
 \vec{J}_e E~=~L(T_x^{} M,V_e^{} E)~=~T_x^* M \otimes V_e^{} E \,,
\end{equation}
where $\, V_e^{} E = \ker T_e^{} \pi_E^{} \,$ is the vertical space
of~$E$ at~$e$.
Taking the disjoint union as $e$ varies over~$E$, this defines $JE$ and
$\vec{J} E$ as bundles in two different ways, which will collectively be
referred to as \emph{jet bundles}\/: over~$E$, $JE$ is an affine bundle
and $\vec{J} E$ is a vector bundle with respect to the corresponding
jet target projections $\, \pi_{JE}^{}: JE \longrightarrow E \,$ and
$\, \pi_{\vec{J} E}: \vec{J} E \longrightarrow E$, while over~$M$,
both of them are fiber bundles with respect to the corresponding jet
source projections (obtained from the former by composition with the
original bundle projection $\pi_E^{}$).%
\footnote{Here and throughout this paper, we face the problem that the
same expressions ``source'' and ``target'' are used in the theory of jets
and in the theory of groupoids, with different meanings. We shall avoid
confusion by adhering to the convention to use the prefix ``jet'' in the
first case.}
Moreover, composition with the appropriate tangent maps provides a
canonical procedure for associating with every strict homomorphism
$\, f: E \longrightarrow F \,$ of fiber bundles $E$ and~$F$ over~$M$ a
map $\, Jf: JE \longrightarrow JF$ (sometimes called its jet prolongation
or jet extension), which is a homomorphism of affine bundles (i.e.,
a fiberwise affine smooth map) covering~$f$, together with a map
$\, \vec{J} f: \vec{J} E \longrightarrow \vec{J} F$, \linebreak which
is a homomorphism of vector bundles (i.e., a fiberwise linear smooth
map) covering~$f$; \linebreak in particular, both are again strict
homomorphisms of fiber bundles over~$M$, so in fact $J$ and
$\vec{J}$ are \emph{functors} in the category of fiber bundles
over a fixed base manifold.
This is summarized in the following commuting diagrams:
\begin{equation} \label{eq:JFUNC1}
 \begin{array}{cc}
  \xymatrix{
   JE \ar[rr]^{Jf} \ar[d] & & JF \ar[d] \\
   E \ar[rr]^f \ar[dr]_{\pi_E^{}} & & F \ar[dl]^{\pi_F^{}} \\
   & M &
   }
  \quad & \quad
  \xymatrix{
   \vec{J} E \ar[rr]^{\vec{J} f} \ar[d] & & \vec{J} F \ar[d] \\
   E \ar[rr]^f \ar[dr]_{\pi_E^{}} & & F \ar[dl]^{\pi_F^{}} \\
   & M &
   }
 \end{array}
\end{equation}
Explicitly, given $\, e \in E \,$ with $\, \pi_E^{}(e) = x$,
$u_e^{} \in J_e^{} E \subset L(T_x^{} M,T_e^{} E)$,
$\, \vec{u}_e^{} \in \vec{J}_e^{} E = L(T_x^{} M,V_e^{} E)$,
we have
\begin{equation} \label{eq:FJB2}
 J_e^{} f (u_e^{})~=~T_e^{} f \smcirc u_e^{} \quad , \quad
 \vec{J}_e^{} f (\vec{u}_e^{})~=~T_e^{} f \smcirc \vec{u}_e^{}~.
\end{equation}
In particular, for any section $\varphi$ of~$E$, we have
\begin{equation} \label{eq:FJB3}
 j(f \smcirc \varphi)~=~Jf \smcirc j\varphi~.
\end{equation}

The important role jet bundles play in differential geometry is largely
due to the fact that they provide the adequate geometric setting for
taking derivatives of sections.
More precisely, any section of~$E$, say $\varphi$, induces canonically
a section of $JE$ which is often called its \emph{jet prolongation} or
\emph{jet extension} and which we may denote by $j\varphi$, as in
much of the mathematical literature, or by $(\varphi,\partial\varphi)$,
to indicate that it contains all the information about the values of
$\varphi$ and of its first order derivatives at each point of~$M$. 
But this prolongation is really just a reinterpretation of the tangent
map $T\varphi$ to~$\varphi$, since $\varphi$ being a section
of~$E$ implies that, for any $\, x \in M$, $T_x^{} \varphi \in
J_{\varphi(x)}^{} E \subset L(T_x^{} M,T_{\varphi(x)}^{} E)$.
Obviously, given $\, e \in E \,$ with $\, \pi_E^{}(e) = x$, every
jet $\, u_e^{} \in J_e^{} E \,$ can be represented as the
derivative at~$x$ of some section $\varphi$ of~$E$ satisfying
$\, \varphi(x) = e$, i.e., we can always find $\varphi$ such
that $\, u_e^{} = T_x^{} \varphi$, but this does of course
not mean that every section of $JE$, as a fiber bundle over~$M$,
can be written as the jet prolongation of some section of~$E$:
those that can be so written are called \emph{holonomous},
and it is then easy to see that a section $\tilde{\varphi}$
of $JE$ will be holonomous if and only if $\, \tilde{\varphi}
= j\varphi \,$ where $\,  \varphi = \pi_{JE}^{} \smcirc
\tilde{\varphi}$.

In passing, we note that sections of~$JE$ not as a fiber bundle
over~$M$ but as an affine bundle over~$E$ also have an important
role to play: they correspond to \emph{connections} in $E$, realized
through their \emph{horizontal lifting map} (of tangent vectors).
And if we fix a connection $\Gamma: E \longrightarrow JE$, we
can introduce the notion of \emph{covariant derivative} of a
section $\varphi$: this is then a section of $\vec{J} E$ which
we may denote by $(\varphi,D\varphi)$ and which is defined
as the difference $\, T\varphi - \Gamma \smcirc \varphi$.
Summarizing, we may specify the statement at the beginning
of the previous paragraph by saying that the jet bundle and the
linearized jet bundle provide the adequate geometric setting for
taking ordinary (partial) derivatives and for taking covariant
derivatives of sections, respectively.

In order to handle second order derivatives, we shall need the
\emph{second order jet bundle} $J^2 E$: this can be constructed
either directly or else by iteration of the previous construction and
subsequent reduction, which occurs in two steps.
First, observe that the iterated first order jet bundle $J(JE)$
admits two natural projections to $JE$, namely, the standard
projection $\, \pi_{J(JE)}^{}: J(JE) \longrightarrow JE \,$ and
the jet prolongation $\, J\pi_{JE}^{}: J(JE) \longrightarrow JE \,$
of the standard projection $\, \pi_{JE}^{}: JE \longrightarrow E$,
explicitly defined as follows: for $\, e \in E$, $u_e^{} \in J_e^{} E \,$
and $\, u'_{u_e} \in  J_{u_e}^{}(JE)$,
\begin{equation} \label{eq:SOJB1}
 (\pi_{J(JE)})_{u_e}^{} (u'_{u_e}) = u_e^{} \,,
\end{equation}
whereas
\begin{equation} \label{eq:SOJB2}
 (J\pi_{JE})_{u_e}^{} (u'_{u_e})
 = T_{u_e}^{} \pi_{JE}^{} \,\smcirc\, u'_{u_e} \,.
\end{equation}
They fit into the following commutative diagram:
\begin{equation} \label{eq:SOJB3}
 \begin{array}{c}
  \xymatrix{
   & J(JE) \ar[dl]_{\pi_{J(JE)}^{}} \ar[dr]^{J\pi_{JE}^{}} & \\
   JE \ar[dr]_{\pi_{JE}^{}} & & JE \ar[dl]^{\pi_{JE}^{}} \\
   & E &
  }
 \end{array}
\end{equation}
The first step of the reduction mentioned above is then to restrict to the 
subset of $J(JE)$ where the two projections coincide: for $\, e \in E \,$ 
and $\, u_e^{} \in J_e^{} E$, define
\begin{equation} \label{eq:SOJB4}
 \bar{J}_{u_e}^{\,2} E~
 =~\{ \, u'_{u_e} \smin J_{u_e}^{}(JE) \, | \;
         \pi_{J(JE)}^{}(u'_{u_e}) = J \pi_{JE}^{}(u'_{u_e}) \, \} \,.
\end{equation}
Taking the disjoint union as $u_e^{}$ varies over~$JE$, this defines what
we shall call the \emph{semiholonomic second order jet bundle} of~$E$,
denoted by $\bar{J}^{\,2} E$: it is naturally a fiber bundle over~$M$ and,
since $\pi_{J(JE)}$ and $J\pi_{JE}$ are both homomorphisms of affine
bundles, it is also an affine bundle over~$JE$.
The second step consists in decomposing this, as a fiber product of affine
bundles over~$JE$, into a symmetric part and an antisymmetric part:
the former is precisely $J^2 E$ and is an affine bundle over~$JE$, with
difference vector bundle equal to the pull-back to~$JE$ of the vector
bundle $\, \pi_E^* \bigl( \bvee^{2\,} T^\ast M \bigr) \otimes VE \,$
over~$E$ by the jet target projection $\pi_{JE}^{}$, whereas the
latter is a vector bundle over~$JE$, namely the pull-back to~$JE$
of the vector bundle $\, \pi_E^* \bigl( \bwedge^{\!2\,} T^\ast M \bigr)
\otimes VE \,$ over~$E$ by the jet target projection~$\pi_{JE}^{}$:
\begin{equation} \label{eq:SOJB6}
 \begin{array}{c}
  \bar{J}^2 E~\cong~J^2 E \; \times_{JE}^{} \; \pi_{JE}^*
                    \Bigl( \pi_E^* \bigl( \bwedge^{\!2\,} T^\ast M \bigr)
                                      \otimes VE \Bigr) \,, \\[2mm]
  \vec{J^2} E~\cong~\pi_{JE}^*
                    \Bigl( \pi_E^* \bigl( \bvee^{2\,} T^\ast M \bigr)
                                      \otimes VE \Bigr) \,.
 \end{array}
\end{equation}
But what is perhaps even more important is the following observation:
given any section $\tilde{\varphi}$ of $JE$, its jet prolongation
$j\tilde{\varphi}$ will be a section of $J(JE)$ which will take values
in $\bar{J}^{\,2} E$ if and only if $\tilde{\varphi}$ is holonomous.
(Indeed, let us set $\, \varphi = \pi_{JE}^{} \smcirc \tilde{\varphi}$.
If $\tilde{\varphi}$ is holonomous, then as already observed above, we
must have $\, \tilde{\varphi} = j\varphi \,$ and hence $\, J\pi_{JE}^{}
\smcirc j\tilde{\varphi} = j(\pi_{JE}^{} \smcirc \tilde{\varphi}) =
j\varphi = \tilde{\varphi} = \pi_{J(JE)}^{} \smcirc j\tilde{\varphi}$,
i.e., \linebreak $j\tilde{\varphi} = j^2 \varphi \,$ takes values in
$\bar{J}^{\,2} E$. Conversely, if $j\tilde{\varphi}$ takes values
in $\bar{J}^{\,2} E$, then $\, j\varphi = j(\pi_{JE}^{} \smcirc
\tilde{\varphi}) = J\pi_{JE}^{} \smcirc j\tilde{\varphi} =
\pi_{J(JE)}^{} \smcirc j\tilde{\varphi} = \tilde{\varphi}$.)
Of course, this means exactly that $\, j\tilde{\varphi} = j^2 \varphi$
is a section of~$J^2 E$, so we may characterize $J^2 E$ as the
unique ``maximal holonomous subbundle'' of $\bar{J}^{\,2} E$,
i.e., that subbundle of $\bar{J}^{\,2} E$ whose sections are precisely
the holonomous sections of $\bar{J}^{\,2} E$. For more details, see
\cite[Chapter~5]{Sau}.

Returning to the first order formalism, the next step consists in taking
duals. Briefly, the \emph{affine dual} $J^\star E$ of~$JE$ and the
\emph{linear dual} $\vec{J}^{\,\ast} E$ of~$\vec{J} E$ are defined as
follows: for any point $e$ in~$E$ with base point $\, x = \pi_E^{}(e) \,$
in~$M$, put
\begin{equation} \label{eq:ODFJB1}
 J_e^\star E~=~\{ \, z_e^{}: J_e^{} E \longrightarrow \mathbb{R} \, | \;
                            z_e^{}~\mbox{is affine} \, \} \,,
\end{equation}
and
\begin{equation} \label{eq:ODLJB1}
 \vec{J}_e^{\,\ast} E~
 =~\{ \, \vec{z}_e^{}: \vec{J}_e^{} E \longrightarrow \mathbb{R} \, | \;
                       \vec{z}_e^{}~\mbox{is linear} \, \} \,.
\end{equation}
However, the multiphase spaces of field theory are defined with an
additional twist, which consists in replacing the real line by the
one-dimensional space of volume forms on the base manifold $M$ at
the appropriate point. In other words, the \emph{twisted affine dual}
$J^{\circledstar} E$ of~$JE$ and the \emph{twisted linear dual}
$\vec{J}^{\,\circledast} E$ of~$\vec{J} E$ are defined as follows:
for any point $e$ in~$E$ with base point $\, x = \pi_E^{}(e) \,$
in~$M$, put
\begin{equation} \label{eq:TDFJB1}
 J_e^{\circledstar} E~
 =~\{ \, z_e^{}: J_e^{} E \longrightarrow
         \bwedge^{\!n\,} T_x^* M \, | \;
         z_e^{}~\mbox{is affine} \, \} \,,
\end{equation}
and
\begin{equation} \label{eq:TDLJB1}
 \vec{J}_e^{\,\circledast} E~
 =~\{ \, \vec{z}_e^{}: \vec{J}_e^{} E \longrightarrow
         \bwedge^{\!n\,} T_x^* M \, | \;
         \vec{z}_e^{}~\mbox{is linear} \, \} \,.
\end{equation}
Taking the disjoint union as $e$ varies over~$E$, this defines $J^\star E$,
$\vec{J}^{\,\ast} E$, $J^{\circledstar} E$ and $\vec{J}^{\,\circledast} E$
as bundles in two different ways, which will collectively be referred to
as \emph{cojet bundles}\/: all of them are vector bundles over~$E$ with
respect to the corresponding cojet target projections and fiber bundles
over~$M$ with respect to the corresponding cojet source projections
(obtained from the former by composition with the original bundle
projection $\pi_E^{}$).
Considered as vector bundles over $E$, we have
\begin{equation} \label{eq:TDFJB2}
 J^{\circledstar} E~
 =~J^\star E \otimes
   \pi_E^* \bigl( \bwedge^{\!n\,} T^\ast M \bigr) \,,
\end{equation}
and
\begin{equation} \label{eq:TDLJB2}
 \vec{J}^{\,\circledast} E~
 =~\vec{J}^{\,\ast} E \otimes
   \pi_E^* \bigl( \bwedge^{\!n\,} T^\ast M \bigr) \,.
\end{equation}
Moreover, $J^\star E$ is also an affine line bundle over $\vec{J}^{\,\ast} E$
and, similarly, $J^{\circledstar} E$ is also an affine line bundle over
$\vec{J}^{\,\circledast} E$, whose projections are defined, over each
point $e$ of $E$, by taking the linear part of an affine map.
In the twisted case, this bundle projection
\begin{equation} \label{eq:TDPROJ}
 \eta: J^{\circledstar} E~\longrightarrow~\vec{J}^{\,\circledast} E
\end{equation}
plays an important role because the \emph{hamiltonian} of any classical
field theory whose field content is captured by the configuration bundle~$E$
is a section
\begin{equation} \label{eq:HAMILT}
 \mathcal{H}: \vec{J}^{\,\circledast} E~\longrightarrow~J^{\circledstar} E
\end{equation}
of this projection~\cite{CCI}.
In a more physics oriented language, we call $J^{\circledstar} E$ the
\emph{extended multiphase space} and $\vec{J}^{\,\circledast} E$ the
\emph{ordinary multiphase space} associated with the given configuration
bundle~$E$.

Here, the term ``multiphase space'' is supposed to indicate
that the (twisted) cojet bundles $J^{\circledstar} E$ and
$\vec{J}^{\,\circledast} E$ carry a multisymplectic structure
which is analogous to the symplectic structure on cotangent
bundles that qualifies these as candidates for a ``phase space''
in classical mechanics.
Its definition relies on an immediate generalization of the
construction of the canonical $1$-form on a cotangent bundle
known from classical mechanics, namely, the fact that the bundle
$\bwedge^{\!r\,} T^\ast E$ of $r$-forms on any manifold $E$
carries a naturally defined ``tautological'' $r$-form $\theta$,
explicitly given~by
\begin{equation} \label{eq:TAUTF}
 \begin{array}{c}
  \theta_{\alpha} \bigl( w_1^{} , \ldots w_r^{} \bigr)~
  =~\alpha \bigl( T_\alpha \pi (w_1^{}) , \ldots ,
                       T_\alpha \pi (w_r^{}) \bigr) \\[1mm]
  \mbox{for $\, \alpha \in \bwedge^{\!r\,} T^\ast E \,,\,
                w_1,\ldots,w_r^{} \in
                T_\alpha \bigl( \bwedge^{\!r\,} T^\ast E \bigr)$} \,,
 \end{array}
\end{equation}
where $\pi$ denotes the bundle projection from $\bwedge^{\!r\,}
T^\ast E$ to~$E$, and that when $E$ is the total space of a fiber
bundle, then for $\, 0 \leqslant s \leqslant r$, this form restricts
to a ``tautological'' $(r-s)$-horizontal $r$-form $\theta$ on the
bundle $\dbwedge{s}{r} T^\ast E$ of $(r-s)$-horizontal
$r$-forms on~$E$,%
\footnote{An $r$-form on the total space of a fiber bundle is
said to be $(r-s)$-horizontal if it vanishes whenever one inserts
at least $s+1$ vertical vectors.}
which we shall continue to denote by $\theta$, combined with
the following canonical isomorphism of vector bundles over~$E$
\cite{GIM}:
\begin{equation} \label{eq:ISOJST1}
 J^{\circledstar} E~\cong~\dbwedge{1}{n} T^\ast E \,.
\end{equation}
This isomorphism allows us to transfer the form $\theta$, as well
as its exterior derivative (up to sign), $\omega = - d\theta$, to
forms on $J^{\circledstar} E$ which, for the sake of simplicity of
notation, we shall continue to denote by $\theta$ and by $\omega$,
respectively: then $\theta$ is called the \emph{multicanonical form}
and $\omega$ is called the \emph{multisymplectic form} on the
extended multiphase space $J^{\circledstar} E$.
Finally, we introduce the \emph{multicanonical form} $\theta_{\mathcal{H}}$
and the \emph{multisymplectic form} $\omega_{\mathcal{H}}$
on the ordinary multiphase space $\vec{J}^{\,\circledast} E$ by
pulling them back with the hamiltonian $\mathcal{H}$ (see equation~%
(\ref{eq:HAMILT}) above):
\begin{equation} \label{eq:MULTCS}
 \theta_{\mathcal{H}} = \mathcal{H}^* \theta \quad , \quad
 \omega_{\mathcal{H}} = \mathcal{H}^* \omega \,,
\end{equation}
noting that we still have $\, \omega = - d\theta \,$ and
$\, \omega_{\mathcal{H}} = - d\theta_{\mathcal{H}}$.

Within the context outlined above, the traditional method for fixing
the dynamics of a specific field theoretical model is by exhibiting
its \emph{lagrangian}, which is a homomorphism
\begin{equation} \label{eq:LAGRAN}
 \mathcal{L}: JE~\longrightarrow~\bwedge^{\!n\,} T^* M
\end{equation}
of fiber bundles over~$M$.
The hypothesis that $\mathcal{L}$, when composed with the jet
prolongation of a section $\varphi$ of~$E$, provides an $n$-form
on~$M$, rather than a function, allows us to define the \emph%
{action functional} $S$ directly by setting, for any compact
subset $K$ of~$M$,
\begin{equation} \label{eq:ACTION}
 S_K[\varphi]~=~\int_K \mathcal{L}(\varphi,\partial\varphi) \,,
\end{equation}
without the need of choosing a volume form on space-time: this
is supposed to be absorbed in the definition of the lagrangian.
Taking its fiber derivative gives rise to the \emph{Legendre
transformation}, which comes in two variants: as a homomorphism
\begin{equation} \label{eq:LEGT1}
 \mathbb{F} \mathcal{L}: JE~\longrightarrow~
                         J^{\circledstar} E
\end{equation}
or as a homomorphism
\begin{equation} \label{eq:LEGT2}
 \vec{\mathbb{F}} \mathcal{L}: JE~\longrightarrow~
                               \vec{J}^{\,\circledast} E
\end{equation}
of fiber bundles over~$E$. Explicitly, given $\, e \in E \,$
with $\, \pi_E^{}(e) = x \,$ and $u_e^{} \in J_e^{} E \subset
L(T_x^{} M,T_e^{} E)$, the latter is defined as the usual fiber
derivative of $\mathcal{L}$ at $u_e^{}$, which is the linear map
from $\vec{J}_e^{} E$ to $\bwedge^{\!n\,} T_x^* M$ given by
\begin{equation} \label{eq:LEGT3}
 \vec{\mathbb{F}} \mathcal{L}(u_e^{}) \cdot \vec{u}_e^{}~
 =~\frac{d}{dt} \, \mathcal{L}(u_e^{} +t \vec{u}_e^{}) \>\!
   \Big|_{t=0} \qquad
 \mbox{for $\, \vec{u}_e^{} \in \vec{J}_e^{} E = L(T_x^{} M,V_e^{} E)$} \,,
\end{equation}
whereas the former encodes the entire Taylor expansion, up to first order,
of $\mathcal{L}$ around~$u_e^{}$ along the fibers, which is the affine
map from $J_e^{} E$ to $\bwedge^{\!n\,} T_x^* M$ given by
\begin{equation} \label{eq:LEGT4}
 \mathbb{F} \mathcal{L}(u_e^{}) \cdot u_e'~
 =~\mathcal{L}(u_e^{}) \, + \,
   \frac{d}{dt} \, \mathcal{L}(u_e^{} + t (u_e' - u_e^{})) \>\!
   \Big|_{t=0} \qquad
 \mbox{for $\, u_e' \in J_e^{} E \subset L(T_x^{} M,T_e^{} E)$} \,.
\end{equation}
Of course, $\vec{\mathbb{F}} \mathcal{L}$ is just the linear part
of $\mathbb{F} \mathcal{L}$, so that it is simply its composition
with the bundle projection $\eta$ from extended to ordinary
multiphase space: $\vec{\mathbb{F}} \mathcal{L} = \eta \circ
\mathbb{F} \mathcal{L}$.
Conversely, we require the hamiltonian $\mathcal{H}$ (see equation~%
(\ref{eq:HAMILT}) above) to satisfy the same sort of composition rule,
but in the opposite direction: $\mathbb{F} \mathcal{L} = \mathcal{H}
\circ \vec{\mathbb{F}} \mathcal{L}$.
In particular, if the lagrangian $\mathcal{L}$ is supposed to be
\emph{hyperregular}, which by definition means that $\vec{\mathbb{F}}
\mathcal{L}$ should be a global diffeomorphism, then this condition can
be used to define the corresponding De Donder\,--\,Weyl hamiltonian
$\mathcal{H}$ as $\, \mathcal{H} = \mathbb{F} \mathcal{L} \smcirc
(\vec{\mathbb{F}} \mathcal{L})^{-1}$.

To conclude this discussion, we mention that fixing the configuration
bundle $E$ which has been our starting point here, the basic fields
of the theory will be sections $\varphi$ of $E$: by jet prolongation,
they give rise to sections $j\varphi = (\varphi,\partial\varphi)$
of~$JE$ and then, by composition with the Legendre transformation
$\vec{\mathbb{F}} \mathcal{L}$, to sections $\phi = (\varphi,\pi)$
of $\vec{J}^{\,\circledast} E$.
Moreover, the equations of motion that result from the principle of
stationary action, applied to the action functional as defined by
equation~(\ref{eq:ACTION}), can be formulated globally (i.e.,
without resorting to local coordinate expressions) as follows:
in the lagrangian framework, the section $\varphi$ should
satisfy the condition
\begin{equation} \label{eq:LEQMOT} 
 (\varphi,\partial\varphi)^* (i_X^{} \, \omega_{\mathcal{L}}^{})~=~0
\end{equation}
for any vector field $X$ on $JE$ which is vertical under the
projection to~$M$, or even for any vector field $X$ on $JE$
which is projectable to~$M$, where
\begin{equation} \label{eq:POCARF} 
 \omega_{\mathcal{L}}^{}~
 =~(\vec{\mathbb{F}} \mathcal{L})^* \omega_{\mathcal{H}}^{}~
 =~({\mathbb{F}} \mathcal{L})^* \omega 
\end{equation}
is the \emph{Poincar\'e\,--\,Cartan form} (Euler\,--\,Lagrange equations),
whereas in the hamiltonian framework, the section $\phi$ should satisfy
the condition
\begin{equation} \label{eq:HEQMOT} 
 \phi^* (i_X^{} \, \omega_{\mathcal{H}}^{})~=~0
\end{equation}
for any vector field $X$ on $\vec{J}^{\,\circledast} E$ which is
vertical under the projection to~$M$, or even for any vector field
$X$ on $\vec{J}^{\,\circledast} E$ which is projectable to~$M$
(De~Donder\,--\,Weyl equations).

\noindent
Graphically, we can visualize the situation in terms of the following
commutative diagram:
\begin{equation} \label{eq:DIAGR1}
 \begin{minipage}{5cm}
  \xymatrix{
   & ~J^{\circledstar} E~ \\
   ~JE~ \ar[ru]^{\mathbb{F}\mathcal{L}}
        \ar[r]_{\vec{\mathbb{F}}\mathcal{L}~~} \ar[d]^{\pi_{\!JE}^{}}
   & ~\vec{J}^{\,\circledast} E~ \ar@<0.5ex>[u];[]^(.54){\eta}
                                 \ar@<0.3ex>[u]^(.5){\mathcal{H}} \ar[ld] \\
   E \ar[d]^{\pi_{\!\!\>E}^{}} & \\
   M \ar@/^/[u]|{\varphi\rule[-0.7ex]{0em}{2ex}}
     \ar@/^2pc/[uu]|{j\varphi\rule[-1ex]{0em}{2.5ex}}
     \ar@/_/[uur]|{\phi\rule[-1ex]{0em}{3ex}} &
  }
 \end{minipage}
\end{equation}
A more detailed treatment of some aspects that, for the sake of
brevity, have been omitted here, including explicit expressions in
terms of local coordinates, can be found in Ref.~\cite{FSR}.

\section{Lie groupoids and Lie algebroids}

In this section, we shall briefly review the concept of a Lie groupoid,
as well as that of an action of a Lie groupoid on a fiber bundle, and
then present the construction of the jet groupoid of a Lie groupoid,
which is of central importance for applications to field theory.
We also discuss the infinitesimal versions of these concepts, that is,
Lie algebroids, infinitesimal actions and the construction of the jet
algebroid of a Lie algebroid. 
For details and proofs of many of these results, the reader is referred
to Ref.~\cite{Mac}: our main goal here is to simplify the notation.

\subsection{Lie groupoids}
\label{subsec:LGR}

The main feature distinguishing a (Lie) groupoid from a (Lie) group
is that, similar to a fiber bundle, it comes with a built-in base space,
but it carries two projections to that base space and not just one,
called the \emph{source projection} and the \emph{target
projection}, so that we can think of the elements of the groupoid
as transformations from one given point of the base space to another,
and composition of such transformations is allowed only when the
target of the first coincides with the source of the second.
Formally, a Lie groupoid \emph{over} a manifold $M$ is a
manifold\,%
\footnote{For some applications, it is necessary to allow the topology
of the total space $G$ not to be Hausdorff, but we shall not encounter
such types of Lie groupoids here.}
$G$ equipped with various structure maps which, when dealing with
various Lie groupoids at the same time, we shall decorate with an
index~$G$, namely, the \emph{source projection} $\, \sigma_G^{}:
G \longrightarrow M$ \linebreak and \emph{target projection}
$\, \tau_G^{}: G \longrightarrow M$, which are assumed to
be surjective submersions and are sometimes combined into
a single map $\, (\tau_G^{},\sigma_G^{}): G \longrightarrow
M \times M \,$ called the \emph{anchor}, the \emph{multiplication map}
\begin{equation} \label{eq:MULTM}
 \begin{array}{cccc}
  \mu_G^{}: & G \times_M G & \longrightarrow &  G \\
            &    (h,g)     &   \longmapsto   & h g
 \end{array}
\end{equation}
defined on the submanifold
\begin{equation} \label{eq:MULTD}
 G \times_M G~
 =~\bigl\{ (h,g) \in G \times G \,|\, \sigma_G^{}(h)=\tau_G^{}(g) \bigr\}
\end{equation}
of $G \times G$, the \emph{unit map}
\begin{equation} \label{eq:UNITM}
 \begin{array}{cccc}
  1_G: & M & \longrightarrow &  G  \\[1mm]
     & x &   \longmapsto   & (1_G)_x
 \end{array}
\end{equation}
and the \emph{inversion}
\begin{equation} \label{eq:INVER}
 \begin{array}{cccc}
  \iota_G^{}: & G & \longrightarrow &    G     \\
                & g &  \longmapsto   & g^{-1}
 \end{array}
\end{equation}
satisfying the usual axioms of group theory, whenever the
expressions involved are well-defined, such as the condition
of associativity $\, k (hg) = (kh) g \,$ for $k,h,g \in G$ such
that $\, \sigma_G^{}(k) = \tau_G^{}(h)$ \linebreak and
$\, \sigma_G^{}(h) = \tau_G^{}(g)$.
For the convenience of the reader, we quickly collect some
of the standard terminology in this area:
\begin{itemize}
 \item given $\, x \in M$, $G_x = \sigma_G^{-1}(x)$ is called
       the \emph{source fiber} over~$x$;
 \item given $\, y \in M$, ${}_y G = \tau_G^{-1}(y)$ is called
       the \emph{target fiber} over~$y$;
 \item given $\, x,y \in M$, ${}_y G_x = G_x \bigcap {}_y G \,$
       is called the \emph{joint fiber} over~$(y,x)$;
 \item given $\, x \in M$, ${}_x G_x = G_x \bigcap {}_x G \,$
       is a group, called the \emph{isotropy group} or
       \emph{stability group} or \emph{vertex group} at~$x$;
 \item given $\, x \in M$, $G \cdot x = \{ y \in M \,|\,
       {}_y G_x \neq \emptyset \} \,$ is the \emph{orbit} of~$x$;
 \item $G$ is said to be \emph{transitive} if there is a unique orbit,
       namely, all of $M$, that is, if for any $\, x \in M$, $G \cdot x = M$;
 \item $G$ is said to be \emph{totally intransitive} if the orbits are
       reduced to points, that is, if for any $\, x \in M$,
       $G\cdot x =\{x\}$, or equivalently, if the source
       and target projections coincide.
\end{itemize}
More generally, $G$ is said to be \emph{regular} if the anchor has
constant rank.
In this case, the subset of $G$ where $\sigma_G^{}$ and
$\tau_G^{}$ coincide,
\begin{equation} \label{eq:ISOTR}
 G_{\mathrm{iso}}^{}~
 =~\{ g \in G \,|\, \sigma_G^{}(g) = \tau_G^{}(g) \} \,,
\end{equation}
is a totally intransitive Lie subgroupoid of $G$, called its
\emph{isotropy subgroupoid}, and when it is locally trivial
(which may not always be the case), it will in fact be a
\emph{Lie group bundle} over~$M$.
Similarly, in this case, the orbits under the action of~$G$
provide a foliation of~$M$ called its \emph{characteristic
foliation}.

We shall also need the notion of morphism between Lie groupoids:
given two Lie groupoids $G$ and $G'$ over the same manifold~$M$,%
\footnote{For the sake of simplicity, we consider here only strict
morphisms, that is, morphisms between Lie groupoids over a fixed
base manifold~$M$ that induce the identity on~$M$. The general
case can be reduced to this one by employing the construction of
pull-back of Lie groupoids.}
a smooth map $\, f: G \longrightarrow G' \,$ is said to be a \emph%
{morphism}, or \emph{homomorphism}, of Lie groupoids if
$\, \sigma_{G'}^{} \smcirc f = \sigma_G^{}$, $\tau_{G'}^{}
\smcirc f = \tau_G^{} \,$ and
\begin{equation}
 f(hg) = f(h) f(g)
 \qquad \mbox{for $\, (h,g) \in G \times_M G$}~.
\end{equation}

\pagebreak

Of course, when $M$ reduces to a point, we recover the definition
of a Lie group.
At the other extreme, we have the following example:
\begin{exam}~ \label{ex:PAIRGR}
 Given a manifold $M$, consider the cartesian product $M \times M$
 of two copies of $M$ and define the projection onto the second/first
 factor as source/target projection ($\sigma(y,x) = x$, $\tau(y,x) = y$),
 juxtaposition with omission as multiplication ($(z,y)(y,x) = (z,x)$), the
 diagonal as unit ($1_x = (x,x)$) and switch as inversion ($(y,x)^{-1}
 = (x,y)$). Then $M \times M$ is a Lie groupoid over~$M$, called the
 \emph{pair groupoid} of~$M$.
\end{exam}
Here is another example that will be important in the sequel:
\begin{exam}~ \label{ex:ONFRGR}
 Given a manifold $M$, consider its tangent bundle $TM$ and set
 \[
  GL(TM)~=~\dot{\bigcup_{x,y \in M}} \; {}_y GL(TM)_x
  \qquad \mbox{where} \qquad
  {}_y GL(TM)_x = GL(T_x M,T_y M)
 \]
 is the set of invertible linear transformations from $T_x M$ to $T_y M$.
 Then $GL(TM)$, equipped with the obvious operations, is a Lie groupoid
 over~$M$, called the \emph{linear frame groupoid} of~$M$.
 Similarly, if in addition we fix a pseudo-riemannian metric $\mathslf{g}$
 on~$M$ and set
 \[
  O(TM,\mathslf{g})~=~\dot{\bigcup_{x,y \in M}} \; {}_y O(TM,\mathslf{g})_x
  \qquad \mbox{where} \qquad
  {}_y O(TM,\mathslf{g})_x = O((T_x M.\mathslf{g}_x),(T_y M,\mathslf{g}_y))
 \]
 is the set of orthogonal linear transformations from $(T_x M,\mathslf{g}_x)$
 to $(T_y M,\mathslf{g}_y)$, we arrive at what is called the \emph{orthonormal
 frame groupoid} of~$M$ (with respect to~$\mathslf{g}$).
\end{exam}
In fact, there is a wealth of examples of Lie groupoids, some
of which are direct generalizations of Lie groups (such as the
frame groupoids of Example~2) and some of which are not
(such as the pair groupoid of Example~1).
In particular, Example~2 is a special case of a more general
construction which associates to each principal bundle over~$M$ a
transitive Lie groupoid over~$M$ called its \emph{gauge groupoid},
here applied to the linear frame bundle and the orthonormal frame
bundle of~$M$, as a principal $GL(n,\mathbb{R})$-bundle and
$O(n)$-bundle, respectively, where $n = \dim M$.

The insertion of the groupoid concept into group theory becomes apparent
when we introduce the notion of bisection of a groupoid, which is the precise
analogue of the notion of section of a bundle, the only modification coming
from the fact that we now have to deal with two projections rather than one.
Namely, just as a (smooth) \emph{section} of a fiber bundle $E$ over a
manifold~$M$ is a (smooth) map $\, \varphi: M \longrightarrow E \,$ such
that $\, \pi_E^{} \smcirc \varphi = \mathrm{id}_M^{}$, a (smooth) \emph%
{bisection} of a Lie groupoid $G$ over a manifold~$M$ is a (smooth) map
$\, \beta: M \longrightarrow G \,$ such that $\, \sigma_G^{} \smcirc \beta
= \mathrm{id}_M^{}$ \linebreak and $\, \tau_G^{} \smcirc \beta \in
\mathrm{Diff}(M)$.
The point here is that the set $\mathrm{Bis}(G)$ of all bisections of a
Lie groupoid~$G$ is a group, with product defined by
\begin{equation} \label{eq:BISPR1}
 (\beta_2 \beta_1)(x)~=~\beta_2(\tau_G^{}(\beta_1(x))) \, \beta_1(x)
 \qquad \mbox{for $\, x \in M$},
\end{equation}
so that
\begin{equation} \label{eq:BISPR2}
 \tau_G^{} \smcirc (\beta_2 \beta_1)~
 =~(\tau_G^{} \smcirc \beta_2) \smcirc (\tau_G^{} \smcirc \beta_1),
\end{equation}
with unit given by the unit map of equation~(\ref{eq:UNITM}), which
clearly is a bisection, and with inversion defined by
\begin{equation} \label{eq:BISINV}
 \beta^{-1}(x)~
 =~\bigl( \beta \bigl( (\tau_G^{} \smcirc \beta)^{-1}(x) \bigr) \bigr)^{-1}
 \qquad \mbox{for $\, x \in M$}.
\end{equation}
Moreover, at least in the regular case (which is the only one of interest
here), $\mathrm{Bis}(G)$ has a natural subgroup, namely the group
$\mathrm{Bis}(G_{\mathrm{iso}}^{})$ of bisections (= sections) of
the isotropy groupoid of~$G$, and a natural quotient group, namely
the group $\mathrm{Diff}^G(M)$ of diffeomorphisms of~$M$
obtained by composition of bisections of~$G$ with the target
projection of~$G$, that is, the image of $\mathrm{Bis}(G)$
under the homomorphism
\begin{equation}
 \begin{array}{ccc}
  \mathrm{Bis}(G) & \longrightarrow &    \mathrm{Diff}(M)     \\[1mm]
       \beta      &   \longmapsto   & \tau_G^{} \smcirc \beta
 \end{array}
\end{equation}
which fit together into the following exact sequence of groups:
\begin{equation}
 \{1\}~\longrightarrow~\mathrm{Bis}(G_{\mathrm{iso}}^{})
      ~\longrightarrow~\mathrm{Bis}(G)~\longrightarrow
      ~\mathrm{Diff}^G(M)~\longrightarrow~\{1\} \,.
\end{equation}

One more concept of central importance that we shall introduce is
that of an \emph{action} of a Lie groupoid $G$ on a fiber bundle $E$,
both over the same base manifold~$M$: this is simply a map
\begin{equation} \label{eq:ACTLG1}
 \begin{array}{cccc}
  \Phi_E^{}: & G \times_M E & \longrightarrow &     E    \\
             &    (g,e)     &   \longmapsto   & g \cdot e
 \end{array}
\end{equation}
defined on the submanifold
\begin{equation} \label{eq:ACTLG2}
 G \times_M E~
 =~\bigl\{ (g,e) \in G \times E \,|\, \sigma_G^{}(g)=\pi_E^{}(e) \bigr\}
\end{equation}
of $G \times E$, satisfying $\, \pi_E^{} \smcirc \Phi_E^{} = \tau_G^{}
\smcirc \mathrm{pr}_1^{}$ (so that for any $\, g \in G$, left translation
by $g$ becomes a well-defined map from $E_{\sigma_G(g)}$ to
$E_{\tau_G(g)}$), together with the usual axioms of an action,
adapted to the requirement that the expressions involved should
be well-defined, such as the composition rule $\, h \cdot (g \cdot e)
= (hg) \cdot e \,$ for $h,g \in G$ and $e \in E$ such that
$\, \sigma_G^{}(h) = \tau_G^{}(g) \,$ and
$\, \sigma_G^{}(g) = \pi_E^{}(e)$. \linebreak
It is then easy to check that any such action induces a representation
of the group $\mathrm{Bis}(G)$ of bisections of~$G$ by automorphisms
of~$E$, that is, a homomorphism
\begin{equation} \label{eq:ACTREP1}
 \begin{array}{cccc}
  \Pi_E^{}: & \mathrm{Bis}(G) & \longrightarrow & \mathrm{Aut}(E)  \\[1mm]
            &      \beta      &   \longmapsto   & \Pi_E^{}(\beta)
 \end{array}
\end{equation}
defined by
\begin{equation} \label{eq:ACTREP2}
 \Pi_E^{}(\beta)~
 =~\Phi_E^{} \smcirc (\beta \smcirc \pi_E^{},\mathrm{id}_E^{}) \,.
\end{equation}
Note that the automorphism $\Pi_E^{}(\beta)$ of~$E$ covers the
diffeomorphism $\tau_G^{} \smcirc \beta$ of~$M$, so in particular
$\Pi_E^{}(\beta)$ will be a strict automorphism when $\beta$ is
a bisection (= section) of the isotropy groupoid of~$G$ and hence
we get the following commutative diagram,
\begin{equation} \label{eq:EXSEQ1}
 \begin{array}{c}
  \xymatrix{
   \{1\} \ar[r] & \mathrm{Aut}_s^{}(E) \ar[r] &
   \mathrm{Aut}(E) \ar[r] &
   \mathrm{Diff}^E(M) \ar[r] & \{1\} \\
   \{1\} \ar[r] & \mathrm{Bis}(G_{\mathrm{iso}}) \ar[r] \ar[u] &
   \mathrm{Bis}(G) \ar[r] \ar[u] &
   \mathrm{Diff}^G(M) \ar[r] \ar[u] & \{1\}
  }
 \end{array}
\end{equation}
where $\mathrm{Aut}_s^{}(E)$ denotes the group of strict automorphisms
of~$E$ and $\mathrm{Diff}^E(M)$ denotes the group of diffeomorphisms
of~$M$ admitting some lift to an automorphism of~$E$.


\subsection{The jet groupoid of a Lie groupoid}

There are various ways of constructing new Lie groupoids from a given one.
As examples, we may mention the \emph{tangent groupoid} and the \emph%
{action groupoid}, both of which are discussed and widely used in the
literature.
However, it must be emphasized that both of these constructions imply
a change of base manifold.
Briefly, given a Lie groupoid $G$ over~$M$, the tangent bundle~$TG$
of the total space~$G$ is again a Lie groupoid, but over the tangent
bundle $TM$ of the base space~$M$ rather than over $M$ itself: this
is easily seen by defining all structure maps of~$TG$ by applying
the tangent functor to the corresponding structure maps of~$G$.
Similarly, given a Lie groupoid $G$ and a fiber bundle $E$ over~$M$,
together with an action of~$G$ on~$E$, the submanifold $G \times_M E$
of $G \times E$ introduced above becomes a Lie groupoid, but over
the total space~$E$ of the given fiber bundle rather than over~$M$
(here, the source projection is just projection onto the first factor
while the target projection is the action map $\Phi_E^{}$ itself).
Mathematically, such a change of base space does not present any
serious problem, but it does jeopardize the physical interpretation
because the whole idea of localizing symmetries that underlies
the use of Lie groupoids in field theory refers to space-time and
not to any other manifold mathematically constructed from it.
Therefore, we should look for constructions that preserve the
base space.

This argument leads us directly to the construction of the jet
groupoid of a Lie groupoid, which is almost entirely analogous
to that of the jet bundle of a fiber bundle, involving just one
slight modification:
\begin{prp}~
 Given a Lie groupoid $G$ over a manifold $M$, set
 \[
  JG~=~\dot{\bigcup_{g \in G}} \; J_g G
 \]
 where, for $g \in G$ with $\sigma_G^{}(g) = x \,$ and
 $\, \tau_G^{}(g) = y$,
 \begin{equation} \label{eq:JETG1}
  J_g^{} G~=~\bigl\{ u_g^{} \in L(T_x^{} M,T_g^{} G) \,|\,
             T_g^{} \sigma_G^{} \smcirc u_g^{} = \mathrm{id}_{T_x M}^{}
             ~\mbox{and}~
             T_g^{} \tau_G^{} \smcirc u_g^{} \in GL(T_x^{} M,T_y^{} M) \bigr\}
 \end{equation}
 which is an open dense subset of the affine space as defined in
 equation~(\ref{eq:FJB1}), with $E$ replaced by~$G$, $\pi_E^{}$
 replaced by $\sigma_G^{}$ and $e$ replaced by $g$.%
 \footnote{The fact that we have slightly modified the meaning of
 the symbol $J$ when applying it to Lie groupoids, rather than to
 fiber bundles or, more generally, fibered manifolds, is unlikely to
 cause any confusion because we are dealing here with distinct
 categories: equation~(\ref{eq:FJB1}), with the aforementioned
 substitutions, would only apply if we were to consider $G$
 not as a Lie groupoid but just as a fibered manifold with
 respect to its source projection~-- which we do not.}
 Then $JG$ is again a Lie groupoid over~$M$, called
 the \textbf{jet groupoid} of~$G$, with source projection
 $\, \sigma_{JG}^{}: JG \longrightarrow M$, target projection
 $\, \tau_{JG}^{}:JG \longrightarrow M$, multiplication map
 $\, \mu_{JG}^{}: JG \times_M JG \longrightarrow JG$, unit
 map $\, 1_{JG}^{}: M \longrightarrow JG \,$ and inversion
 $\iota_{JG}^{}: JG \longrightarrow JG$ defined as follows:
 \begin{itemize}
  \item for $g \in G$ and $\, u_g^{} \in J_g^{} G$,
        \[
         \sigma_{JG}^{}(u_g) = \sigma_G^{}(g)~~,~~
         \tau_{JG}^{}(u_g) = \tau_G^{}(g) \,;
        \]
  \item for $g,h \in G$ with $\tau_G^{}(g) = \sigma_G^{}(h)$,
        $u_g^{} \in J_g^{} G \,$ and $\, v_h^{} \in J_h^{} G$,
        putting $\sigma_G^{}(g) = x$ and concatenating
        $v_h^{} \smcirc (T_g^{} \tau_G^{} \smcirc u_g^{})$
        and $u_g^{}$ into a single linear map
        $\bigl( v_h^{} \smcirc (T_g^{} \tau_G^{} \smcirc u_g^{}) ,
        u_g^{} \bigr)$ from $T_x^{} M$ into $\, T_h^{} G \oplus
        T_g^{} G \,$ and checking that this in fact takes values in
        the subspace \linebreak $T_{(h,g)}^{} (G \times_M G)$,
        \[
         v_h^{} u_g^{} \equiv \mu_{JG}^{}(v_h^{},u_g^{})
         = T_{(h,g)}^{} \mu_G^{} \smcirc
           \bigl( v_h^{} \smcirc (T_g^{} \tau_G^{} \smcirc u_g^{}) ,
                  u_g^{} \bigr) \,;
        \]
  \item for $x \in M$,
        \[
         (1_{JG}^{})_x^{} = T_x^{} 1_G^{} \,;
        \]
  \item for $g \in G$ and $\, u_g^{} \in J_g^{} G$,
        \[
         u_g^{-1} \equiv \iota_{JG}^{}(u_g)
         = T_g \iota_G^{} \smcirc u_g^{} \smcirc
           (T_g^{} \tau_G^{} \smcirc u_g^{})^{-1}.
        \]
        \vspace{-2ex}
 \end{itemize}
 Note that $JG$ is not only a Lie groupoid over~$M$ but also a fiber
 bundle over the total space of the original Lie groupoid~$G$.
\end{prp}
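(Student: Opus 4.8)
The plan is to verify, one map at a time, that the five structure maps are well defined with values in $JG$, then that they satisfy the groupoid axioms, and finally that everything is smooth; the manifold and fibre-bundle structure will be obtained essentially for free by recognizing $J_gG$ as an open subset of the affine fibre of equation~(\ref{eq:FJB1}). I would begin with this last point. Write $\widehat{J}_gG$ for the affine space obtained by applying equation~(\ref{eq:FJB1}) to $G$, regarded as a fibred manifold over $M$ via $\sigma_G^{}$ (i.e.\ with $E,\pi_E^{},e$ replaced by $G,\sigma_G^{},g$); then $J_gG$ is cut out inside $\widehat{J}_gG$ by the single condition that $T_g^{}\tau_G^{}\circ u_g^{}$ be invertible. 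In bases this condition is the non-vanishing of a determinant, a polynomial in $u_g^{}$, so its complement is either all of $\widehat{J}_gG$ or a nowhere-dense subvariety; hence $J_gG$ is open, and it is dense as soon as it is nonempty. Nonemptiness follows from the standard fact that through every $g\in G$ there passes a local bisection $\beta$ of $G$, whose $1$-jet $T_{\sigma_G(g)}^{}\beta$ lies in $J_gG$. Consequently $JG$ is an open subbundle of the affine bundle $\widehat{J}G\to G$, hence a fibre bundle over $G$, which is the final assertion of the statement and supplies the required smooth structure.

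Next I would check that each structure map lands in $JG$, in every case by differentiating the defining identity of the corresponding operation on $G$. From $\sigma_G^{}\circ 1_G^{}=\mathrm{id}_M^{}=\tau_G^{}\circ 1_G^{}$ one gets $T_x^{}1_G^{}\in J_{(1_G)_x}^{}G$. For multiplication I would first confirm that the concatenation $\bigl(v_h^{}\circ(T_g^{}\tau_G^{}\circ u_g^{}),u_g^{}\bigr)$ indeed takes values in $T_{(h,g)}^{}(G\times_M G)$: applied to $\xi\in T_x^{}M$ it gives a pair $(w_h^{},w_g^{})$ with $T_h^{}\sigma_G^{}(w_h^{})=(T_g^{}\tau_G^{}\circ u_g^{})(\xi)=T_g^{}\tau_G^{}(w_g^{})$, using $T_h^{}\sigma_G^{}\circ v_h^{}=\mathrm{id}$, which is exactly the tangency condition defining the fibre product. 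Differentiating $\sigma_G^{}\circ\mu_G^{}=\sigma_G^{}\circ\mathrm{pr}_2^{}$ and $\tau_G^{}\circ\mu_G^{}=\tau_G^{}\circ\mathrm{pr}_1^{}$ then yields $T_{hg}^{}\sigma_G^{}\circ\mu_{JG}^{}(v_h^{},u_g^{})=T_g^{}\sigma_G^{}\circ u_g^{}=\mathrm{id}$ and $T_{hg}^{}\tau_G^{}\circ\mu_{JG}^{}(v_h^{},u_g^{})=(T_h^{}\tau_G^{}\circ v_h^{})\circ(T_g^{}\tau_G^{}\circ u_g^{})$, a composite of invertible maps, so $\mu_{JG}^{}(v_h^{},u_g^{})\in J_{hg}^{}G$. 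In the same way, differentiating $\sigma_G^{}\circ\iota_G^{}=\tau_G^{}$ and $\tau_G^{}\circ\iota_G^{}=\sigma_G^{}$ shows $\iota_{JG}^{}(u_g^{})\in J_{g^{-1}}^{}G$, its invertibility being supplied precisely by the factor $(T_g^{}\tau_G^{}\circ u_g^{})^{-1}$, which in turn is defined because we are working on $JG$.

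Finally I would establish the groupoid axioms and smoothness. The conceptual shortcut, which I would use to organize the argument, is that $J_gG$ is exactly the set of $1$-jets at $\sigma_G^{}(g)$ of local bisections of $G$ through $g$, and that the three formulae above are nothing but the jet prolongations of the product, unit and inverse of the bisection group (equations~(\ref{eq:BISPR1}), (\ref{eq:UNITM}) and (\ref{eq:BISINV})); since these operations depend only on first-order jet data and representatives through any given jet always exist, the associativity, unit and inverse laws for the maps on $JG$ follow by differentiating the corresponding identities in $\mathrm{Bis}(G)$. Smoothness of all five maps is then clear, as they are assembled from the tangent maps $T\mu_G^{}$, $T1_G^{}$, $T\iota_G^{}$, $T\sigma_G^{}$, $T\tau_G^{}$ (smooth because $G$ is a Lie groupoid) together with pointwise linear algebra; the only delicate ingredient is the map $u_g^{}\mapsto(T_g^{}\tau_G^{}\circ u_g^{})^{-1}$ occurring in the inversion, which is smooth exactly because $JG$ consists of jets on which this factor is invertible. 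I expect the main obstacle to be the direct verification of associativity of $\mu_{JG}^{}$: differentiating $k(hg)=(kh)g$ through $\mu_G^{}$ forces one to carry the reparametrization factors $T\tau_G^{}\circ u$ through the chain rule and to check repeatedly that the concatenated linear maps land in the tangent spaces to the iterated fibre products $G\times_M G\times_M G$, which is where the bookkeeping is heaviest; reading the computation through the bisection interpretation is the cleanest way to keep it under control.
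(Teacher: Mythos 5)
The paper itself states this proposition without proof --- the authors remark that the jet groupoid ``does not appear at all'' in the standard reference \cite{Mac} and simply assert the construction --- so your argument cannot be checked against a proof in the text, only on its own merits. On those merits, the main body of your proof is correct, and it follows the route the statement implicitly suggests: (i) $J_g^{} G$ is the complement, inside the affine space of equation~(\ref{eq:FJB1}) applied to $(G,\sigma_G^{})$, of the zero set of a determinant, hence open, and dense once nonempty, nonemptiness being supplied by the $1$-jet of a local bisection through $g$; (ii) each structure map lands in the right fibre because differentiating $\sigma_G^{} \smcirc \mu_G^{} = \sigma_G^{} \smcirc \mathrm{pr}_2^{}$, $\tau_G^{} \smcirc \mu_G^{} = \tau_G^{} \smcirc \mathrm{pr}_1^{}$, $\sigma_G^{} \smcirc \iota_G^{} = \tau_G^{}$ and $\tau_G^{} \smcirc \iota_G^{} = \sigma_G^{}$ yields exactly the two defining conditions in equation~(\ref{eq:JETG1}); (iii) the groupoid axioms follow from the identification of $J_g^{} G$ with the set of $1$-jets of \emph{local} bisections through $g$, the three formulas being the $1$-jet truncations of the operations (\ref{eq:BISPR1}) and (\ref{eq:BISINV}), which depend only on first-order data. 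Step (iii) is fully consonant with the paper, which develops precisely this correspondence (jet prolongation $j\beta$ of bisections) immediately after the proposition; your use of local rather than global bisections is the careful choice, since global bisections with a prescribed $1$-jet need not exist. Only one routine item should still be recorded for the Lie groupoid axioms: that $\sigma_{JG}^{}$ and $\tau_{JG}^{}$ are surjective submersions, which is immediate because they factor as $\sigma_G^{} \smcirc \pi_{JG}^{}$ and $\tau_G^{} \smcirc \pi_{JG}^{}$ with $\pi_{JG}^{}$ a surjective submersion (its fibres are nonempty).

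There is, however, one genuine gap, located in your treatment of the final assertion that $JG$ is a fibre bundle over $G$. Openness of $JG$ in the affine bundle $\widehat{J} G$ gives a smooth manifold and a surjective submersion $\pi_{JG}^{}: JG \longrightarrow G$ --- which is all that steps (i)--(iii) actually require --- but it does \emph{not} give local triviality: an open union of open subsets of the fibres of a bundle need not be locally trivial over the base, because the diffeomorphism type of those open fibres can jump. And it really can jump here. Take $G = \mathbb{R} \ltimes \mathbb{R}$, the action groupoid of the flow of the vector field $x \, \partial_x$ on $\mathbb{R}$, with $\sigma_G^{}(t,x) = x$ and $\tau_G^{}(t,x) = e^t x$; this is a Lie groupoid (though not a regular one), and a one-line computation gives $J_{(t,x)}^{} G \cong \{ \lambda \in \mathbb{R} \,|\, \lambda x \neq -1 \}$, which is connected for $x = 0$ and disconnected for $x \neq 0$, so $\pi_{JG}^{}$ admits no local trivialization around the points $(t,0)$. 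Hence ``open subbundle, hence fibre bundle'' is not a proof; genuine local triviality needs an additional argument and, apparently, an additional hypothesis such as regularity of $G$ (elsewhere the paper declares the regular case to be the only one of interest). In fairness, the paper asserts this last claim without proof as well, so this is a gap you share with the authors rather than one you introduced; but a complete write-up should either prove local triviality under explicitly stated hypotheses or weaken the conclusion to ``fibred manifold over $G$''.
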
 
We would like to emphasize that $JG$ does \emph{not} contain $G$ itself
as a Lie subgroupoid; rather, $G$ is a quotient of $JG$ under the natural
projection
\begin{equation} \label{eq:JETGP1}
 \begin{array}{cccc}
  \pi_{JG}^{}:
  &   JG   & \longrightarrow & G \\
  & u_g^{} &   \longmapsto   & g
 \end{array}
\end{equation}
which by construction is a morphism of Lie groupoids over~$M$.
Similarly, there is a natural projection
\begin{equation} \label{eq:JETGP2}
 \begin{array}{cccc}
  \pi_{JG}^{\mathrm{fr}}:
  &   JG   & \longrightarrow &              GL(TM)             \\
  & u_g^{} &   \longmapsto   & T_g^{} \tau_G^{} \smcirc u_g^{}
 \end{array}
\end{equation}
which again is a morphism of Lie groupoids over $M$.
Taking the fibered product $GL(TM) \times_M G$ of $GL(TM)$ and $G$
over $M$, which is again a Lie groupoid over~$M$, these two combine 
into a natural projection
\begin{equation} \label{eq:JETGP3}
 \begin{array}{cccc}
  \pi_{JG}^{\mathrm{fr}} \times_M^{} \pi_{JG}^{}:
  &   JG   & \longrightarrow &          GL(TM) \times_M G          \\
  & u_g^{} &   \longmapsto   & (T_g^{} \tau_G^{} \smcirc u_g^{},g)
 \end{array}
\end{equation}
which is another morphism of Lie groupoids over~$M$.

One can also make $J$ act on morphisms between Lie groupoids 
(over the same base manifold): given any two Lie groupoids $G$ and $G'$
over~$M$ and a morphism $\, f: G \longrightarrow G'$, one defines a
morphism $\, Jf: JG \longrightarrow JG' \,$ (sometimes called its jet
prolongation or jet extension) by setting
\begin{equation}
 J_g^{} f (u_g {})~=~T_g^{} f \smcirc u_g
 \qquad \mbox{for $\, g \in G, u_g^{} \in J_g^{} G$} \,.
\end{equation}
Obviously, $Jf$ covers $f$, and one has the following commutative diagram:
\begin{equation} \label{eq:JFUNC2}
 \begin{array}{c}
  \xymatrix{
   JG \ar[rr]^{Jf} \ar[d] & & JG' \ar[d] \\
   G \ar[rr]^f \ar@<-0.4ex>[dr]_-{\sigma_G^{}\hspace{-3pt}}
   \ar@<0.4ex>[dr]^>>>>>{\!\!\!\!\tau_{G^{\vphantom{\prime}}}^{}} 
   &   & G' \ar@<-0.4ex>[dl]_>>>>>{\sigma_{G^\prime}^{}\hspace*{-6pt}}
            \ar@<0.4ex>[dl]^-{\tau_{G'}^{}} \\
   & M &
  }
 \end{array}
\end{equation}
In this way, $J$ becomes a \emph{functor} in the category of Lie groupoids,
just as in the category of fiber bundles (always over a fixed base manifold).
\begin{exam}~ \label{ex:LFRGR}
 Up to a canonical isomorphism, the linear frame groupoid $GL(TM)$ is the
 jet groupoid of the pair groupoid $M \times M$,
 \[
  GL(TM) \cong J(M \times M) \,,
 \]
 and under this isomorphism, the projection in equation~(\ref{eq:JETGP2})
 above corresponds to the jet prolongation of the anchor map $\, (\tau_G^{},
 \sigma_G^{}): G \longrightarrow M \times M$.
\end{exam}

Just like jet bundles, jet groupoids can be expected to play an important
role in differential geometry because they provide the adequate geometric
setting for taking derivatives of bisections (which replace sections).
Namely, any bisection of~$G$, say~$\beta$, induces canonically a
bisection of~$JG$ which will be called its \emph{jet prolongation}
or \emph{jet extension} and which we may denote by $j\beta$ or
by $(\beta,\partial\beta)$: it is again just a reinterpretation
of the tangent map $T\beta$ to~$\beta$, since $\, \beta \in
\mathrm{Bis}(G)$ implies that, for any $\, x \in M$, $T_x \beta
\in J_{\beta(x)}^{} G$.
Obviously, given $\, g \in G \,$ with $\, \sigma_G^{}(g) = x$,
every jet $\, u_g^{} \in J_g^{} G \,$ can be represented as the
derivative at~$x$ of some bisection $\beta$ of~$G$ satisfying
$\, \beta(x) = g$, i.e., we can always find $\beta$ such that
$\, u_g^{} = T_x^{} \beta$, but this does of course not mean
that every bisection of~$JG$, as a Lie groupoid over~$M$, can
be written as the jet prolongation of some bisection of~$G$:
those that can be so written are called \emph{holonomous}, and
it is then easy to see that a bisection $\tilde{\beta}$ of $JG$
will be holonomous if and only if $\, \tilde{\beta} = j\beta \,$
where $\,  \beta = \pi_{JG}^{} \smcirc \tilde{\beta}$.
This leads us directly to the definition of the group of
holonomous bisections of~$JG$ and, more generally, of a
full Lie subgroupoid of~$JG$:
\begin{dfn} \label{def:FULLSG}~
 Let\/ $G$ be a Lie groupoid over a manifold\/~$M$.
 We say that a Lie subgroupoid\/~$\tilde{G}$ of its
 jet groupoid\/~$JG$ is \textbf{full} if the projection
 of\/~$JG$ to\/~$G$ remains a bundle projection when
 restricted to\/~$\tilde{G}$.%
 \footnote{It should be noted that this usage of the adjective
 "full", which only makes sense for Lie subgroupoids of a jet
 groupoid, is more restrictive than the one in Ref.~\cite{Mac}.}
\end{dfn}
\begin{dfn} \label{def:HOLBIS}~
 Let $G$ be a Lie groupoid over a manifold $M$ and $\tilde{G}$ a
 full Lie subgroupoid of its jet groupoid $JG$. Then we define the
 \textbf{group of holonomous bisections} of $\tilde{G}$,
 denoted by\/ $H\!B(G,\tilde{G})$, to be the subgroup
 of~$\, \mathrm{Bis}(\tilde{G})$ given by
 \begin{equation}
  H\!B(G,\tilde{G})~
  =~\{ \tilde{\beta} \in \mathrm{Bis}(\tilde{G})~|~\tilde{\beta} = j\beta~
       \mbox{for some $\, \beta \in \mathrm{Bis}(G)$} \} \,,
 \end{equation}
 which can also be viewed as a subgroup of~$\, \mathrm{Bis}(G)$,
 namely, the one given by
 \begin{equation}
  H\!B(G,\tilde{G})~
  \cong~\{ \beta \in \mathrm{Bis}(G)~|~
           j\beta \in \mathrm{Bis}(\tilde{G}) \} \,.
 \end{equation}
\end{dfn}
In what follows, we shall often switch between these two
interpretations without further mention.

The usefulness of this definition stems from the fact that, in general,
the group of holonomous bisections of a full Lie subgroupoid $\tilde{G}$
of~$JG$ cannot be represented as the group of all bisections of some
other Lie groupoid over~$M$. It also leads us to the definition of a
special class of full Lie subgroupoids of a jet groupoid:
\begin{dfn} \label{def:SUBHOL}~
 Let $G$ be a Lie groupoid over a manifold $M$ and $\tilde{G}$ a
 full Lie subgroupoid of its jet groupoid $JG$. Then we say that
 $\tilde{G}$ is \textbf{holonomous} or \textbf{integrable} if it is
 generated by its holonomous bisections, i.e., if for any $\, g \in G$
 with $\, \sigma_G^{}(g) = x \,$ and any $\, u_g^{} \in \tilde{G}_g^{}$,
 there exists $\, \tilde{\beta} \in H\!B(G,\tilde{G}) \,$ such that
 $\, \tilde{\beta}(x) = u_g^{}$. Similarly, we say that $\tilde{G}$ is
 \textbf{locally holonomous} or \textbf{locally integrable} if for any
 $\, g \in G$ with $\, \sigma_G^{}(g) = x \,$ and any $\, u_g^{} \in
 \tilde{G}_g^{}$, there exists an open neighborhood $U$ of~$x$
 and $\, \tilde{\beta} \in H\!B(G|_U,\tilde{G}|_U) \,$ such that
 $\, \tilde{\beta}(x) = u_g^{}$.
\end{dfn}
Let us also observe that the group $H\!B(G,\tilde{G})$ may be
very restricted~-- to the point of collapsing to the trivial group
$\{1\}$~-- but it may also be quite large, depending on the choice
of $G$ and $\tilde{G}$.
(For example, taking $\, \tilde{G} = JG \,$ gives $\, H\!B(G,JG)
\cong \mathrm{Bis}(G)$.)

Here are two important examples, both based on the ``minimal" choice
for~$G$, which is the pair groupoid $M \times M$, so that as stated in
Example~\ref{ex:LFRGR} above, the corresponding jet groupoid $JG$
will be the linear frame groupoid $GL(TM)$ of~$M$.
\begin{exam}~
 Let $M$ be a manifold equipped with a pseudo-riemannian metric $\mathslf{g}$
 and consider its orthonormal frame groupoid $O(TM,\mathslf{g})$ as a Lie
 subgroupoid of the linear frame groupoid $GL(TM)$.
 Then the group of holonomous bisections of $O(TM,\mathslf{g})$ is precisely
 the isometry group of~$(M,\mathslf{g})$ and hence $O(TM,\mathslf{g})$ is
 usually not holonomous: more precisely, it will be (locally) holonomous if and
 only if $(M,\mathslf{g})$ is (locally) strongly isotropic; in particular, it must
 be a space of constant curvature.
\end{exam}
\begin{exam}~
 Let $M$ be a manifold equipped with a symplectic form $\omega$ and
 consider its symplectic frame groupoid $Sp(TM,\omega)$ as a Lie
 subgroupoid of the linear frame groupoid $GL(TM)$.
 Then the group of holonomous bisections of $Sp(TM,\mathslf{g})$ is
 precisely the symplectomorphism group of~$(M,\omega)$ and hence,
 by the Darboux theorem, $Sp(TM,\omega)$ is always locally holonomous.
\end{exam}
Analogous concepts can be introduced for subbundles of jet bundles,
but we do not spell them out explicitly here because they are not needed
for what follows.


In order to handle second order derivatives, we shall need
the \emph{second order jet groupoid} $J^2 G$: as in the case of
fiber bundles, this can be constructed either directly or else by
iteration of the previous construction and subsequent reduction,
which occurs in two steps.
First, observe that the iterated first order jet groupoid $J(JG)$
admits two natural projections to $JG$, namely, the standard
projection $\, \pi_{J(JG)}^{}: J(JG) \longrightarrow JG \,$ and
the jet prolongation $\, J\pi_{JG}^{}: J(JG) \longrightarrow JG$
\linebreak
of the standard projection $\, \pi_{JG}^{}: JG \longrightarrow G$,
explicitly defined as follows: for $\, g \in G$, $u_g^{} \in J_g^{} G \,$
and $\, u'_{u_g} \in  J_{u_g}^{}(JG)$,
\begin{equation} \label{eq:SOJG1}
 (\pi_{J(JG)})_{u_g}^{} (u'_{u_g}) = u_g^{} \,,
\end{equation}
whereas
\begin{equation} \label{eq:SOJG2}
 (J\pi_{JG})_{u_g}^{} (u'_{u_g})
 = T_{u_g}^{} \pi_{JG}^{} \,\smcirc\, u'_{u_g} \,.
\end{equation}
They fit into the following commutative diagram:
\begin{equation} \label{eq:SOJG3}
 \begin{array}{c}
  \xymatrix{
   & J(JG) \ar[dl]_{\pi_{J(JG)}^{}} \ar[dr]^{J\pi_{JG}^{}} & \\
   JG \ar[dr]_{\pi_{JG}^{}} & & JG \ar[dl]^{\pi_{JG}^{}} \\
   & G &
  }
 \end{array}
\end{equation}
The first step of the reduction mentioned above is then again to restrict to
the  subset of $J(JG)$ where the two projections coincide: for $\, g \in G \,$ 
and $\, u_g^{} \in J_g^{} G$, define
\begin{equation} \label{eq:SOJG4}
  \bar{J}_{u_g}^{\,2} G~
  =~\bigl\{ u'_{u_g} \in J_{u_g}^{}(JG) \, | \;
            T_{u_g}^{} \pi_{JG}^{} \,\smcirc\, u'_{u_g} = u_g^{} \bigr\} \,.
\end{equation}
Taking the disjoint union as $u_g^{}$ varies over~$JG$, this defines what
we shall call the \emph{semiholonomic second order jet groupoid} of~$G$,
denoted by $\bar{J}^{\,2} G$: since $\pi_{J(JG)}$ and $J\pi_{JG}$ are both
morphisms of Lie groupoids over $M$, it is naturally a Lie groupoid over~$M$
and it is also a fiber bundle over~$JG$.
The second step consists in decomposing this, as a fiber product of fiber
bundles over~$JG$, into a symmetric part and an antisymmetric part:
the former is precisely $J^2 G$ whereas the latter does not seem to
play any significant role in the theory; we shall therefore not pursue
this decomposition any further. But what will be important in the sequel
is that, once again, given any bisection $\tilde{\beta}$ of $JG$, its jet
prolongation $j\tilde{\beta}$ will be a bisection of $J(JG)$ which will take
values in $\bar{J}^{\,2} G$ if and only if $\tilde{\beta}$ is holonomous,
in which case $\, \tilde{\beta} = j\beta \,$ and $\, j\tilde{\beta} =
j^2 \beta \,$ where $\, \beta = \pi_{JG}^{} \smcirc \tilde{\beta}$.
Using the terminology introduced above, we may summarize this
in the statement that the group of holonomous bisections of~%
$\bar{J}^{\,2} G$ is precisely the group of bisections of~$J^2 G$,
which in turn is isomorphic to the group of bisections of~$G$:
\begin{equation} \label{eq:HOLBIS}
 H\!B(JG,\bar{J}^{\,2} G)~
 =~\mathrm{Bis}(J^2 G)~\cong~\mathrm{Bis}(G) \,.
\end{equation}
We may reexpress this as the statement that $J^2 G$ is the unique
maximal holonomous Lie subgroupoid of~$\bar{J}^{\,2} G$.

\subsection{Lie algebroids}

Just as Lie algebras are the infinitesimal version of Lie groups,
Lie algebroids are the infinitesimal version of Lie groupoids.
Formally, a Lie algebroid \emph{over} a manifold~$M$ is a
vector bundle $\mathfrak{g}$ over~$M$ endowed with two
structure maps which, when dealing with various Lie algebroids
at the same time, we shall decorate with an index~$\mathfrak{g}$,
namely, the \emph{anchor} $\, \alpha_{\mathfrak{g}}: \mathfrak{g}
\longrightarrow TM$, which is required to be a homomorphism of
vector bundles over~$M$ and, by push-forward of sections, induces
a homomorphism $\, \alpha_{\mathfrak{g}}: \varGamma(\mathfrak{g})
\longrightarrow \mathfrak{X}(M) \,$ of modules over the function ring
$C^\infty(M)$ (we follow the common abuse of notation of denoting
both by the same symbol), and the \emph{bracket}
\begin{equation} \label{eq:BRACK}
 [.\,,.]_{\mathfrak{g}}:~
 \varGamma(\mathfrak{g}) \times \varGamma(\mathfrak{g})~~
 \longrightarrow~~\varGamma(\mathfrak{g}) \,,
\end{equation}
satisfying the usual axioms of Lie algebra theory (bilinearity over
$\mathbb{R}$, antisymmetry and the Jacobi identity), together with
the Leibniz identity
\begin{equation} \label{eq:LEIBN}
 [X,fY]_{\mathfrak{g}}~
 = \; f[X,Y]_{\mathfrak{g}} \, + \, (L_{\alpha_{\mathfrak{g}}(X)} f) \, Y
 \qquad \mbox{for $\, f \in C^\infty(M)$,
                      $X,Y \in \varGamma(\mathfrak{g})$} \,.
\end{equation}
Once again, we gather some standard terminology.
For example, $\mathfrak{g}$ is said to be \emph{transitive} if the
anchor is surjective and \emph{totally intransitive} if it is zero.
More generally, $\mathfrak{g}$ is said to be \emph{regular} if
the anchor has constant rank.
In this case, the kernel of the anchor,
\begin{equation} \label{eq:KANCH}
 \mathfrak{g}_{\mathrm{iso}}~=~\ker \alpha_{\mathfrak{g}} \,,
\end{equation}
is a totally intransitive Lie subalgebroid of $\mathfrak{g}$, called
its \emph{isotropy algebroid}, and when it is locally trivial (which
may not always be the case), it will in fact be a \emph{Lie algebra
bundle} over~$M$.
Similarly, in this case, the image of the anchor,
\begin{equation} \label{eq:IANCH}
 \mathfrak{X}^{\mathfrak{g}}(M)~
 =~\mathrm{im} \, \alpha_{\mathfrak{g}} \,,
\end{equation}
is an involutive distribution on~$M$ called its \emph{characteristic
distribution}.

\pagebreak

We shall skip the definition of morphism between Lie algebroids (over
the same manifold), which is the obvious one, as well as the discussion
of examples, and just mention that the insertion of the Lie algebroid
concept into Lie algebra theory is even more direct than in the group
case, since the space $\varGamma(\mathfrak{g})$ of sections of a Lie
algebroid $\mathfrak{g}$ is a Lie algebra by definition.
Moreover, at least in the regular case (which is the only one of interest
here), $\varGamma(\mathfrak{g})$ has a natural Lie subalgebra, namely
the Lie algebra $\varGamma(\mathfrak{g}_{\mathrm{iso}}^{})$ of sections
of the isotropy algebroid of~$\mathfrak{g}$, and a natural quotient Lie
algebra, namely the Lie algebra $\mathfrak{X}^{\mathfrak{g}}(M)$ of
sections of the characteristic distribution on~$M$, which fit together
into the following exact sequence of Lie algebras:
\begin{equation}
 \{0\}~\longrightarrow~\varGamma(\mathfrak{g}_{\mathrm{iso}}^{})
     ~\longrightarrow~\varGamma(\mathfrak{g})~\longrightarrow
     ~\mathfrak{X}^{\mathfrak{g}}(M)~\longrightarrow~\{0\} \,.
\end{equation}

Next, let us pass to the construction of the Lie algebroid of a
Lie groupoid, which once again follows closely that of the Lie
algebra of a Lie group; the same goes for the exponential.
\linebreak
Given a Lie groupoid $G$ over a manifold~$M$, we use the unit map
$1_G^{}$ to consider $M$ as an embedded submanifold of~$G$ and
restrict the tangent maps to the source and target projections to this
submanifold, which provides us with two homomorphisms of vector
bundles over~$M$ that we can put into the following diagram:
\begin{equation}
 \begin{array}{c}
  \xymatrix{
   TG|_M^{} \ar[rd]
   \ar@<0.3ex>[rr]^{T\sigma_G|_M}
   \ar@<-0.3ex>[rr]_{T\tau_G|_M}
   &   & TM \ar[ld] \\
   & M &
  }
 \end{array}
\end{equation}
As a vector bundle over~$M$, the corresponding Lie algebroid~%
$\mathfrak{g}$ is then defined as the restriction to~$M$ of the
vertical bundle with respect to the source projection:  
\begin{equation} \label{eq:ALL1}
 \mathfrak{g}~=~(V^\sigma G)|_M~
 =~\ker T\sigma_G^{}|_M^{} \,.
\end{equation}
More explicitly, this means that, for any $\, x \in M$, $\mathfrak{g}_x$
is the tangent space to the source fiber at~$x$:
\begin{equation} \label{eq:ALL2}
 \mathfrak{g}_x^{}~=~V_x^\sigma G~=~T_x^{}(G_x^{}) \,.
\end{equation}
The anchor of $\mathfrak{g}$ is then defined as the restriction
of the target projection:
\begin{equation} \label{eq:ALL3}
 \alpha_{\mathfrak{g}}~
 =~(T\tau_G^{}|_M^{}) \big|_{\mathfrak{g}}^{} \,.
\end{equation}
To construct the bracket between sections of $\mathfrak{g}$, we
note that given $g$ in~$G$ with $\, \sigma_G^{}(g) = x \,$ and
$\, \tau_G^{}(g) = y$, right translation by $g$ is a diffeomorphism
from the source fiber at~$y$ to the source fiber at~$x$:
\begin{equation}
 \begin{array}{cccc}
  R_g: & G_y & \longrightarrow & G_x \\[0.5ex]
       &  h  &   \longmapsto   & hg
 \end{array}
\end{equation}
Thus its derivative at the point $h \in G_y$ is a linear isomorphism
$\, T_h R_g: T_h (G_y) \longrightarrow T_{hg} (G_x)$, and hence
given a vector field $Z$ on~$G$ that is $\sigma_G^{}$-vertical,
$Z \in \varGamma(V^\sigma G)$, it makes sense to say that $Z$
is \emph{right invariant} if, for any $\, (h,g) \in G \times_M G$,
\[
 T_h R_g (Z(h))~=~Z(hg) \,.
\]
Now observe that (a) the space $\mathfrak{X}^{ri}(G)$ of right
invariant vector fields on~$G$ is a Lie subalgebra of the space
$\mathfrak{X}(G)$ of all vector fields on~$G$ and (b) restriction
to~$M$ provides a linear isomorphism from $\mathfrak{X}^{ri}(G)$ to
the space $\varGamma(\mathfrak{g})$ of sections of~$\mathfrak{g}$
whose inverse can be described explicitly as follows: given $\, X \in
\varGamma(\mathfrak{g})$, the corresponding right invariant vector
field $\, X^r \in \mathfrak{X}^{ri}(G)$ is defined by
\[
 X^r(g)~=~T_{\tau_G^{}(g)} R_g^{}  \bigl( X(\tau_G^{}(g)) \bigr) \,.
\]
This linear isomorphism is used to transfer the structure of Lie algebra
from~$\mathfrak{X}^{ri}(G)$ to $\varGamma(\mathfrak{g})$. \linebreak
The same idea is used to construct the \emph{exponential}, as a map
\begin{equation} \label{eq:EXP1}
 \exp:~\varGamma(\mathfrak{g})~\longrightarrow~\mathrm{Bis}(G)
\end{equation}
defined by taking the flow $F_{X^r}$ of the right invariant vector field
$X^r$ on~$G$ corresponding to $X$ at time $1$:
\[
 \exp(X)(x)~=~F_{X^r}(1,1_x) \qquad \mbox{for $\, x \in M$} \,.
\]
Then, more generally,
\begin{equation} \label{eq:EXP2}
 F_{X^r}(t,g)~=~\exp(tX)(y) \, g
 \qquad \mbox{for $\, t \in \mathbb{R}, \, g \in {}_y G_x$} \,,
\end{equation}
and conversely,
\begin{equation} \label{eq:EXP3}
 \exp(tX)(x) \big|_{t=0}~=~1_x~~,~~
 \frac{d}{dt} \exp(tX)(x) \Big|_{t=0}~=~X(x)
 \qquad \mbox{for $\, x \in M$}.
\end{equation}

Continuing the analogies between Lie group theory and Lie groupoid
theory, there is also the concept of an \emph{infinitesimal action}
of a Lie algebroid $\mathfrak{g}$ on a fiber bundle~$E$, both over
the same base manifold~$M$: denoting by $\mathfrak{X}^P(E)$
the Lie algebra of projectable vector fields on~$E$,%
\footnote{A vector field $X_E$ on the total space~$E$ of a fiber bundle
is said to be projectable if for any two points $e_1,e_2$ in the same fiber,
we have $T_{e_1} \pi_E (X_E(e_1)) = T_{e_2} \pi_E (X_E(e_2))$, that is,
if there exists a (necessarily unique) vector field $X_M$ on the base
space~$M$ to which it is $\pi_E$-related: one then says that $X_E$
projects to~$X_M$. Vertical vector fields are those that project to $0$.}
this is defined to be a linear map
\begin{equation} \label{eq:ACTLA1}
 \begin{array}{cccc}
  \dot{\Phi}_E:
  & \varGamma(\mathfrak{g}) & \longrightarrow & \mathfrak{X}^P(E) \\[1mm]
  &            X            &   \longmapsto   &        X_E
  \end{array}
\end{equation}
which is compatible with the structures involved: it is linear over the
pertinent function rings, i.e.,
\begin{equation} \label{eq:ACTLA2}
 (fX)_E^{}~=~(f \smcirc \pi_E^{}) \, X_E^{}
 \qquad \mbox{for $\, f \in C^\infty(M)$, $X \in \varGamma(\mathfrak{g})$} \,,
\end{equation}
takes the anchor to the projection, i.e.,
\begin{equation} \label{eq:ACTLA3}
 X_M^{}~=~\alpha_{\mathfrak{g}}(X)
 \qquad \mbox{for $\, X \in \varGamma(\mathfrak{g})$} \,,
\end{equation}
and preserves brackets, i.e., is a homomorphism of Lie algebras; we
call $X_E^{}$ the \emph{fundamental vector field} associated to $X$. 
The terminology is justified by noting that if $\mathfrak{g}$ is the Lie
algebroid of a Lie groupoid~$G$, then an action of $G$ on $E$ induces
an infinitesimal action of $\mathfrak{g}$ on $E$ defined as follows:
given $\, X \in \varGamma(\mathfrak{g})$ and recalling that, for any
$\, x \in M$,
\begin{equation} \label{eq:ACTLA4}
 X(x)~=~\frac{d}{dt} \, \exp(tX)(x) \, \Big|_{t=0},
\end{equation}
we have, for any $\, e \in E \,$ with $\, x = \pi_E^{}(e) \in M$,
\begin{equation} \label{eq:ACTLA5}
 X_E^{}(e)~=~\frac{d}{dt} \, \exp(tX)(x) \cdot e \, \Big|_{t=0}.
\end{equation}
In fact, the above definition of infinitesimal action provides nothing
more and nothing less than a representation of the Lie algebra of sections
of the Lie algebroid $\mathfrak{g}$ by projectable vector fields on~$E$
which is the formal derivative of the representation $\Pi_E^{}$ of the
group of bisections of the Lie groupoid $G$ by automorphisms of~$E$
in equations~(\ref{eq:ACTREP1}) and~(\ref{eq:ACTREP2}).
Note that the fundamental vector field $X_E^{}$ associated to a
section $X$ of the isotropy algebroid of~$\mathfrak{g}$ will be
vertical and hence we get the following commutative diagram,
\begin{equation} \label{eq:EXSEQ2}
 \begin{array}{c}
  \xymatrix{
   \{0\} \ar[r] & \mathfrak{X}^V(E) \ar[r] &
   \mathfrak{X}^P(E) \ar[r] & \mathfrak{X}^E(M) \ar[r] & \{0\} \\
   \{0\} \ar[r] & \varGamma(\mathfrak{g}_{\mathrm{iso}}) \ar[r] \ar[u] &
   \varGamma(\mathfrak{g}) \ar[r] \ar[u] &
   \mathfrak{X}^{\mathfrak{g}}(M) \ar[r] \ar[u] & \{0\}
  }
 \end{array}
\end{equation}
where $\mathfrak{X}^V(E)$ denotes the Lie algebra of vertical vector
fields on~$E$ and $\mathfrak{X}^E(M)$ denotes the algebra of vector
fields on $M$ admitting some lift to a projectable vector field on~$E$. 

\subsection{The jet algebroid of a Lie algebroid}

As in the case of Lie groupoids, there are various ways of constructing
new Lie algebroids from a given one.
Here, we want to mention just one of them, namely that of the jet
algebroid of a Lie algebroid.
To do so, we use the functor $J$ on fiber bundles and observe that,
for any manifold~$M$, the jet bundle of a fiber bundle over~$M$
carrying some additional structure will in many cases inherit that
additional structure.
In particular, if $E$ is a vector bundle over~$M$, its jet bundle $JE$
is again a vector bundle over~$M$: this can be most easily seen by
writing points in~$JE$ as values of jets of sections of~$E$ and defining
a linear combination of points in~$JE$ over the same point of~$M$ as
the value of the jet of the corresponding linear combination of sections
of~$E$: this is the necessary and sufficient condition for the jet
prolongation of sections (from sections of~$E$ to sections of~$JE$)
to be a linear map $\, j: \varGamma(E) \longrightarrow \varGamma(JE)$.
\begin{prp}~
 Given a Lie algebroid $\mathfrak{g}$ over a manifold $M$ with anchor
 $\alpha_{\mathfrak{g}}^{}$ and bracket $[.\,,.]_{\mathfrak{g}}^{}$,
 its jet bundle $J\mathfrak{g}$ is again a Lie algebroid over~$M$,
 with anchor $\alpha_{J\mathfrak{g}}^{}$ and bracket $[.\,,.]%
 _{J\mathfrak{g}}^{}$ defined so that the jet prolongation of sections
 (from sections of~$\mathfrak{g}$ to sections of~$J\mathfrak{g}$)
 preserves the anchor and the bracket:
 \begin{equation} \label{eq:JETALG1}
  \alpha_{J\mathfrak{g}}^{}(j\xi)~=~\alpha_{\mathfrak{g}}^{}(\xi)~~,~~
  [j\xi,j\eta]_{J\mathfrak{g}}^{}~=~j([\xi,\eta]_{\mathfrak{g}})^{} 
  \qquad \mbox{for $\, \xi,\eta \in \varGamma(\mathfrak{g})$} \,.
\end{equation}
\end{prp}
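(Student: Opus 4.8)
The plan is to build everything on the single structural fact that makes the jet functor manageable at the level of sections: as a module over $C^\infty(M)$, the space $\Gamma(J\mathfrak{g})$ is generated by the \emph{holonomous} sections $j\xi$, $\xi \in \Gamma(\mathfrak{g})$. Recall that, $\mathfrak{g}$ being a vector bundle, $J\mathfrak{g}$ is again a vector bundle over $M$ and fits into the short exact sequence
\[
 0 \longrightarrow T^{*} M \otimes \mathfrak{g}
   \longrightarrow J\mathfrak{g}
   \stackrel{\pi_{J\mathfrak{g}}}{\longrightarrow} \mathfrak{g}
   \longrightarrow 0 ,
\]
whose kernel is the linearized jet bundle $\vec{J}\mathfrak{g} = T^{*}M \otimes \mathfrak{g}$, and that jet prolongation obeys the Leibniz rule $\, j(f\xi) = f\,j\xi + df \otimes \xi$. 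Thus, given any $s \in \Gamma(J\mathfrak{g})$, setting $\xi = \pi_{J\mathfrak{g}} \circ s$ one has $s - j\xi \in \Gamma(T^{*}M \otimes \mathfrak{g})$, and writing $s - j\xi = \sum_i df_i \otimes \eta_i = \sum_i \bigl( j(f_i \eta_i) - f_i\,j\eta_i \bigr)$ exhibits $s$ as a $C^\infty(M)$-combination of holonomous sections. This reduces the construction of both structure maps to prescribing them on the $j\xi$ and extending.

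First I would fix the anchor by the only admissible choice, $\alpha_{J\mathfrak{g}} = \alpha_{\mathfrak{g}} \circ \pi_{J\mathfrak{g}}$, a composite of vector-bundle morphisms and hence a vector-bundle morphism $J\mathfrak{g} \to TM$; since $\pi_{J\mathfrak{g}} \circ j = \mathrm{id}$, it satisfies $\alpha_{J\mathfrak{g}}(j\xi) = \alpha_{\mathfrak{g}}(\xi)$. Next I would set $[j\xi, j\eta]_{J\mathfrak{g}} = j[\xi,\eta]_{\mathfrak{g}}$ on holonomous generators and propagate this to all of $\Gamma(J\mathfrak{g})$ by imposing $\mathbb{R}$-bilinearity, antisymmetry and the Leibniz identity relative to $\alpha_{J\mathfrak{g}}$. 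Expanding $s = \sum_i f_i\, j\xi_i$ and $t = \sum_j g_j\, j\eta_j$, these requirements force the explicit formula
\[
 [s,t]_{J\mathfrak{g}}
 = \sum_{i,j} \Bigl( f_i g_j\, j[\xi_i,\eta_j]_{\mathfrak{g}}
   + f_i\,\bigl(\alpha_{\mathfrak{g}}(\xi_i)\,g_j\bigr)\, j\eta_j
   - g_j\,\bigl(\alpha_{\mathfrak{g}}(\eta_j)\,f_i\bigr)\, j\xi_i \Bigr) .
\]

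The hard part will be well-definedness, since the holonomous sections generate but are not free: their relations are exactly the identities $\sum_i f_i\, j\xi_i = 0$. Applying $\pi_{J\mathfrak{g}}$ and using the jet Leibniz rule, such a relation is equivalent to the \emph{pair} of conditions $\sum_i f_i \xi_i = 0$ in $\Gamma(\mathfrak{g})$ and $\sum_i df_i \otimes \xi_i = 0$ in $\Gamma(T^{*}M \otimes \mathfrak{g})$. I would therefore verify that, whenever these two hold, the right-hand side of the bracket formula vanishes for every $t$ (and symmetrically in the second slot); this is a direct check that substitutes the two relations and uses only the Leibniz identity~(\ref{eq:LEIBN}) and the derivation property of $\alpha_{\mathfrak{g}}$ for the original bracket, splitting the verification into its $\mathfrak{g}$-part (via $\pi_{J\mathfrak{g}}$) and its $T^{*}M \otimes \mathfrak{g}$-part. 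A coordinate-free alternative that avoids the relation-chasing is to pick an auxiliary linear connection $\nabla$ on $\mathfrak{g}$, use the resulting splitting $J\mathfrak{g} \cong \mathfrak{g} \oplus (T^{*}M \otimes \mathfrak{g})$ under which $j\xi \leftrightarrow (\xi, \nabla\xi)$, write the bracket intrinsically in this splitting, and confirm independence of $\nabla$.

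Finally I would dispatch the remaining Lie-algebroid axioms. Antisymmetry and the Leibniz identity hold by construction, the consistency of that construction being precisely what the well-definedness check secured. For the Jacobi identity, I would first note that $\alpha_{J\mathfrak{g}}$ is a bracket homomorphism: the defect $(s,t) \mapsto \alpha_{J\mathfrak{g}}[s,t]_{J\mathfrak{g}} - [\alpha_{J\mathfrak{g}}s, \alpha_{J\mathfrak{g}}t]$ is $C^\infty(M)$-bilinear, hence vanishes once it vanishes on generators, where it reduces to $\alpha_{\mathfrak{g}}$ preserving $[.\,,.]_{\mathfrak{g}}$. Since the Jacobiator of a Leibniz bracket with bracket-preserving anchor is $C^\infty(M)$-trilinear, it suffices to check Jacobi on the $j\xi$, where $[[j\xi,j\eta]_{J\mathfrak{g}},j\zeta]_{J\mathfrak{g}}$ and its cyclic permutations equal $j$ applied to the corresponding cyclic sum for $[.\,,.]_{\mathfrak{g}}$, which vanishes by Jacobi in $\mathfrak{g}$. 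Tensoriality then propagates the vanishing to all sections, proving that $(J\mathfrak{g}, \alpha_{J\mathfrak{g}}, [.\,,.]_{J\mathfrak{g}})$ is a Lie algebroid with the two stated prolongation properties.
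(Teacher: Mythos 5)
Your proposal is correct, but it takes a genuinely different route from the paper's. The paper proves existence ``the physicist's way'': it observes that anchors and brackets are local (the anchor a differential operator of order~$0$, the bracket a bidifferential operator of order~$1$, by the Leibniz identity~(\ref{eq:LEIBN})), so that uniqueness is immediate, and then it simply \emph{exhibits} the structure of $J\mathfrak{g}$ in local coordinates: given a local basis $T_a^{}$ of sections of $\mathfrak{g}$ and the induced basis $(T_a^{},T_a^\mu)$ of $J\mathfrak{g}$, it writes out the anchor and all brackets $[T_a^{},T_b^{}]$, $[T_a^{},T_b^\mu]$, $[T_a^\mu,T_b^\nu]$ in terms of the structure functions $f_a^\mu$, $f_{ab}^c$ (equations~(\ref{eq:FESTR4}) and~(\ref{eq:FESTR5})), leaving the verification of the axioms and the patching of charts implicit. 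You instead work globally and coordinate-free: you use the exact sequence $\, 0 \rightarrow T^*M \otimes \mathfrak{g} \rightarrow J\mathfrak{g} \rightarrow \mathfrak{g} \rightarrow 0 \,$ and the jet Leibniz rule $\, j(f\xi) = f\,j\xi + df \otimes \xi \,$ to show that holonomous sections generate $\Gamma(J\mathfrak{g})$ as a $C^\infty(M)$-module, let the Leibniz identity force the bracket formula, check well-definedness against the relations $\sum_i f_i^{} \xi_i^{} = 0$, $\sum_i df_i^{} \otimes \xi_i^{} = 0$, and obtain Jacobi from $C^\infty(M)$-trilinearity of the Jacobiator once the anchor is known to preserve brackets. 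Both arguments rest on the same underlying principle (locality plus the values on jet prolongations determine everything, which is exactly the paper's uniqueness remark), but they buy different things: yours yields an intrinsic global construction in which the algebroid axioms are verified abstractly, at the price of the relation-chasing (or the auxiliary connection you mention); the paper's yields explicit structure functions ready for computation, at the price of leaving globalization and the axioms to the reader. One expository remark: your well-definedness step is only sketched, but the sketch is sound --- for $t = j\eta$ the formula applied to a relation reduces to showing $\sum_i d(\alpha_{\mathfrak{g}}(\eta)f_i^{}) \otimes \xi_i^{} = \sum_i df_i^{} \otimes [\xi_i^{},\eta]_{\mathfrak{g}}$, which follows by applying the Lie derivative $L_{\alpha_{\mathfrak{g}}(\eta)}$ to the second relation, using that Lie derivatives commute with $d$ --- so the gap is one of exposition, not of substance.
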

We note that this condition of compatibility with the procedure of
taking  the jet prolongation of sections fixes the anchor and the
bracket in $J\mathfrak{g}$ uniquely.
As a vector bundle over~$M$,
\begin{equation} \label{eq:JETALG2}
 J\mathfrak{g}~=~\mathfrak{g} \oplus L(TM,\mathfrak{g})~
 =~\mathfrak{g} \oplus \bigl( T^* M \otimes \mathfrak{g} \bigr) \,,
\end{equation}
and this direct decomposition is also adequate for describing
the anchor ($\alpha_{J\mathfrak{g}}^{}$ is just projection onto
the first summand followed by $\alpha_{\mathfrak{g}}^{}$) but
not for describing the bracket.
To do that, one can employ various methods, but the most
transparent one of them, at least for physicists, is in terms of the
corresponding structure functions, using the fact that the anchors
and brackets, in $\mathfrak{g}$ as well as in $J\mathfrak{g}$,
are differential operators in each argument, and hence local.
(More precisely, the anchors are differential operators of
order~$0$, since they result from push-forward of sections
by a vector bundle homomorphism, while the brackets are
bidifferential operators of order~$1$, since they satisfy a
Leibniz rule, as in equation~(\ref{eq:LEIBN}), in each factor.)
Namely, given local coordinates $x^\mu$ of~$M$ and a basis
of local sections $T_a$ of~$\mathfrak{g}$, together with
the induced basis of local sections $(T_a,T_a^\mu)$
of~$J\mathfrak{g}$ ($T_a^\mu = dx^\mu \otimes T_a$),
we have that if $\xi$ is a section of~$\mathfrak{g}$, locally
represented as
\begin{equation} \label{eq:JETALG3}
  \xi = \xi^a T_a^{} \,,
\end{equation}
then $j\xi$ is a section of~$J\mathfrak{g}$ locally represented as
\begin{equation} \label{JETALG4}
  j\xi = \xi^a T_a^{} + \partial_\mu \xi^a \, T_a^\mu \,.
\end{equation}
Assuming that the anchor and the bracket of $\mathfrak{g}$ can
be written in terms of structure functions $f_{a\vphantom{i}}^\mu$
and~$f_{ab}^c$, according to
\begin{equation} \label{eq:JETALG5}
 \alpha_{\mathfrak{g}}^{}(T_a^{})~
 =~f_{a\vphantom{i}}^\mu \, \partial_\mu^{} \,,
\end{equation}
 and
\begin{equation} \label{eq:JETALG6}
 [T_a^{},T_b^{}]_{\mathfrak{g}}^{}~=~f_{ab}^c T_c^{} \,,
\end{equation}
the anchor and the bracket of~$J\mathfrak{g}$ are given by
\begin{equation} \label{eq:JETALG7}
 \begin{array}{c}
  \alpha_{J\mathfrak{g}}^{}(T_a^{})~
  =~f_{a\vphantom{i}}^\mu \, \partial_\mu^{} \\[2mm]
  \alpha_{J\mathfrak{g}}^{}(T_{a\vphantom{i}}^\mu)~=~0
 \end{array} \,,
\end{equation}
and
\begin{equation} \label{eq:JETALG8}
 \begin{array}{c}
  [T_a^{},T_b^{}]_{J\mathfrak{g}}^{}~
  =~f_{ab}^{c\vphantom{\mu}} T_c^{} \, + \, \partial_\mu^{}
    f_{ab}^{c\vphantom{\mu}} \, T_{c\vphantom{i}}^\mu \\[2mm]
  [T_a^{},T_b^\mu]_{J\mathfrak{g}}^{}~
  =~f_{ab}^{c\vphantom{\mu}} T_{c\vphantom{i}}^\mu \, + \,
    \partial_\nu^{} f_{a\vphantom{i}}^\mu \, T_b^{\nu\vphantom{p}} \\[2mm]
  [T_{a\vphantom{b}}^\mu,T_b^{\nu\vphantom{\mu}}]_{J\mathfrak{g}}^{}~
  =~f_{a\vphantom{i}}^{\nu\vphantom{\mu}} T_b^\mu \, - \,
    f_b^{\nu\vphantom{\mu}} T_{a\vphantom{i}}^\mu
 \end{array} \,.
\end{equation}
We leave it to the reader to convince himself that, up to a canonical
isomorphism, the functor $J$ commutes with the process of passing
from a Lie groupoid to its corresponding Lie algebroid, that is, if
$\mathfrak{g}$ is the Lie algebroid associated to the Lie groupoid
$G$, then $J\mathfrak{g}$ is the Lie algebroid associated to the
Lie groupoid~$JG$.

\section{Induced actions}

Starting from a given action of a Lie groupoid on a fiber bundle,
we shall in this section show how to construct ``induced actions'' of
certain other Lie groupoids, derived from the original one, on certain
other fiber bundles, derived from the original one: this is an essential
technical feature needed to make the theory work.
Throughout the section, we maintain the notation used before: $G$
will be a Lie groupoid over a manifold~$M$, with source projection
$\sigma_G^{}$ and target projection $\tau_G^{}$, multiplication
$\mu_G^{}$, unit map~$1_G^{}$ and inversion~$\iota_G^{}$,
and $E$ will be a fiber bundle over the same manifold~$M$, with
projection $\pi_E^{}$, endowed with an action $\Phi_E^{}$ as in
equation~(\ref{eq:ACTLG1}).
Note that under these circumstances, a groupoid element
$g \in G$ with source $\, \sigma_G^{}(g) = x \,$ and target
$\, \tau_G^{}(g) = y \,$ will provide a diffeomorphism
$\, L_g: E_x \longrightarrow E_y \,$ called
\emph{left translation} by~$g$.

Our first and most elementary example of such an induced action is that
of the original Lie groupoid $G$ on the vertical bundle $VE$ of~$E$,
\begin{equation} \label{eq:IACT01}
 \begin{array}{cccc}
  \Phi_{VE}^{}: & G \times_M VE & \longrightarrow &      VE        \\[1mm]
                &  (g,v_e^{})   &   \longmapsto   & g \cdot v_e^{}
 \end{array}
\end{equation}
defined by requiring left translation by~$g$ in~$VE$ to be simply the
derivative of left translation by~$g$ in~$E$, i.e., given $g \in G$ with
$\, \sigma_G^{}(g) = x \,$ and $\, \tau_G^{}(g) = y$, $e \in E \,$ with
$\, \pi_E^{}(e) = x \,$ and a vertical vector $\, v_e^{} \in V_e^{} E$,
we have
\begin{equation} \label{eq:IACT02}
 g \cdot v_e^{}~=~T_e^{} L_g^{} (v_e^{}) \,.
\end{equation}
This means that given a vertical curve $\, t \mapsto e(t)$ in $E$
passing through the point $e$ (i.e., $e(0) = e$), we have
\begin{equation} \label{eq:IACT03}
 g \cdot \frac{d}{dt} \, e(t) \Big|_{t=0}~
 =~\frac{d}{dt} \; g \cdot e(t) \Big|_{t=0} \,.
\end{equation}
An important property of the action~(\ref{eq:IACT01}) is that it respects 
the structure of $VE$ as a vector bundle over $E$, since (a) it covers the
original action, i.e., the diagram
\begin{equation} \label{eq:IACT04}
 \begin{array}{c}
  \xymatrix{
   G \times_M VE \ar[rr]^{\qquad \Phi_{VE}^{}} \ar[d] & & VE \ar[d] \\
   G \times_M E \ar[rr]_{\qquad \Phi_E^{}} & & E
  }
 \end{array}
\end{equation}
commutes, and (b)~for any $\, g \in G$, left translation by~$g$
is well defined on the whole vertical space $V_e^{} E$ provided
that $\, \pi_E^{}(e) = \sigma_G^{}(g) \,$ (and otherwise is not well
defined at any point in $V_e^{} E$), its restriction to this space
being a linear transformation $\, L_g^{}: V_e^{} E \longrightarrow
V_{g \cdot e}^{} E$, since it is the tangent map at~$e$ to left
translation $\, L_g^{}: E_x^{} \longrightarrow E_y^{}$.

Our next example of an induced action is functorial.
Namely, applying the jet functor to all structural maps of the original
action, we obtain an action of the jet groupoid $JG$ of~$G$ on the jet
bundle $JE$ of~$E$,
\begin{equation} \label{eq:IACT05}
 \begin{array}{cccc}
  \Phi_{JE}: & JG \times_M JE & \longrightarrow &      JE       \\[1mm]
             &   (u_g,u_e)    &   \longmapsto   & u_g \cdot u_e
 \end{array}
\end{equation}
defined as follows: given $\, g \in G \,$ with $\, \sigma_G^{}(g) = x \,$
and $\, \tau_G^{}(g) = y$, $e \in E \,$ with $\, \pi_E^{}(e) = x \,$ and
jets $\, u_g^{} \in J_g^{} G \subset L(T_x^{} M,T_g^{} G) \,$ and
$\, u_e^{} \in J_e^{} E \subset L(T_x^{} M,T_e^{} E)$, we concatenate
both in a linear map $\, (u_g^{},u_e^{}) \in L(T_x^{} M , T_g^{} G
\oplus T_e^{} E) \,$ (which actually takes values in $\, L(T_x^{} M,
T_{(g,e)}(G \times_M E)) \,$ since $\, T_x^{} \sigma_G^{}
\smcirc u_g^{} = \mathrm{id}_{T_x M} = T_e^{} \pi_E^{}
\smcirc u_e^{}$) and compose:
\begin{equation} \label{eq:IACT06}
 u_g^{} \cdot u_e^{}~
 =~T_{(g,e)}^{} \Phi_E^{} \smcirc (u_g^{},u_e^{}) \smcirc
   (T_g^{} \tau_G^{} \smcirc u_g^{})^{-1} \,.
\end{equation}
Essentially the same procedure also provides an action of the jet
groupoid $JG$ of~$G$ on the linearized jet bundle $\vec{J} E$ of~$E$,
\begin{equation} \label{eq:IACT07}
 \begin{array}{cccc}
  \Phi_{\vec{J} E}:
  & JG \times_M \vec{J} E & \longrightarrow &      \vec{J} E      \\[1mm]
  &    (u_g,\vec{u}_e)    &   \longmapsto   & u_g \cdot \vec{u}_e
 \end{array}
\end{equation}
Unlike the previous one, this admits a simplification because it factorizes
through the morphism~(\ref{eq:JETGP3}) of Lie groupoids to yield an action
of the Lie groupoid $\, GL(TM) \times_M G \,$ on the linearized jet bundle
$\vec{J} E$ of~$E$,
\begin{equation} \label{eq:IACT08}
 \begin{array}{ccc}
  \bigl( GL(TM) \times_M G \bigr) \times_M \vec{J} E
  & \longrightarrow & \vec{J} E \\[1mm]
             \bigl( (a,g),\vec{u}_e \bigr)          
  &   \longmapsto   & (a,g) \cdot \vec{u}_e
 \end{array}
\end{equation}
This action is suggested by the isomorphism
\begin{equation} \label{eq:LJB3}
 \vec{J} E~\cong~\pi_E^* \bigl( T^* M \bigr) \otimes VE
\end{equation}
(see equation~(\ref{eq:LJB2})), together with the fact that the tangent
bundle and the cotangent bundle of~$M$ are endowed with natural actions
of the linear frame groupoid $GL(TM)$ and the vertical bundle of~$E$ is
endowed with the induced action of~$G$ as explained above.
However, it should be noted that all the groupoids involved are groupoids
over~$M$ while the isomorphism~(\ref{eq:LJB3}) is one of vector bundles
over~$E$.
Therefore, it is worthwhile specifying that the action~(\ref{eq:IACT08})
is explicitly defined as follows: given $\, (a,g) \in GL(TM) \times_M G \,$
with $\, \sigma_{GL(TM)}(a) = x = \sigma_G(g) \,$ and $\, \tau_{GL(TM)}(a)
= y = \tau_G(g)$, $e \in E \,$ with $\, \pi_E(e) = x \,$ and $\, \vec{u}_e
\in \vec{J}_e E = L(T_x M,V_e E)$, we obtain $\, (a,g) \cdot \vec{u}_e \in
\vec{J}_{g \cdot e} E = L(T_y M,V_{g \cdot e} E) \,$ by composition: 
\begin{equation} \label{eq:IACT09}
 (a,g) \cdot \vec{u}_e~
 =~T_e L_g \,\smcirc\, \vec{u}_e \,\smcirc\, a^{-1} \,.
\end{equation}

An important property of the actions~(\ref{eq:IACT05}) and~%
(\ref{eq:IACT07}) is that they respect the structure of~$JE$ as
an affine bundle and of $\vec{J} E$ as a vector bundle over $E$, 
since (a) they cover the original action, i.e., the diagrams
\begin{equation} \label{eq:IACT10}
 \begin{array}{c}
  \xymatrix{
   JG \times_M JE \ar[rr]^{\qquad \Phi_{JE}^{}} \ar[d] & & JE \ar[d] \\
   G \times_M E \ar[rr]_{\qquad \Phi_E^{}} & & E
  }
 \end{array}
\end{equation}
and
\begin{equation} \label{eq:IACT11}
 \begin{array}{c}
  \xymatrix{
   JG \times_M \vec{J} E \ar[rr]^{\qquad \Phi_{\vec{J} E}} \ar[d] & &
   \vec{J} E \ar[d] \\
   G \times_M E \ar[rr]_{\qquad \Phi_E^{}} & & E
  }
 \end{array}
\end{equation}
commute, and (b)~for any $\, g \in G$ and $u_g \in J_g G$, left translation 
by~$u_g$ is well defined on the whole jet space $J_e E$ and on the whole
linearized jet space $\vec{J}_e E$ provided that $\, \pi_E(e) = \sigma_G(g)$
\linebreak
(and otherwise is not well defined at point in $J_e E$ or $\vec{J}_e E$), its
restriction to these spaces being an affine transformation $\, L_{u_g}:
J_e^{} E \longrightarrow J_{g \cdot e} E$ or a linear transformation
$\, L_{u_g}: \vec{J}_e^{} E \longrightarrow \vec{J}_{g \cdot e} E$.

The compatibility of the actions~(\ref{eq:IACT05}) and~(\ref{eq:IACT07})
with the structure of $JE$ as an affine bundle and of $\vec{J} E$ as a
vector bundle over~$E$ allows us to transfer these actions from jets to
cojets (ordinary or twisted), by dualization.
Concentrating on the twisted case, which is the more important one for
the applications we have in mind, we obtain an action
\begin{equation} \label{eq:IACT12}
 \begin{array}{cccc}
  \Phi_{J^{\circledstar} E}:
  & JG \times_M J^{\circledstar} E & \longrightarrow & J^{\circledstar} E
  \\[1mm]
  &            (u_g,z_e)           &   \longmapsto   &   u_g \cdot z_e
 \end{array}
\end{equation}
of $JG$ on $J^{\circledstar} E$ defined as follows: given $\, g \in G \,$
with $\, \sigma_G^{}(g) = x \,$ and $\, \tau_G^{}(g) = y$, $e \in E \,$
with $\, \pi_E^{}(e) = x$, $u_g \in J_g G$, $z_e \in J^{\circledstar}_e E \,$
and $\, u_{g \cdot e} \in J_{g \cdot e} E$,
\begin{equation} \label{eq:IACT13}
 \langle u_g \cdot z_e^{} \,,\, u_{g \cdot e} \rangle~
 =~(T_g \tau_G^{} \smcirc u_g)^{-1^{\,\scriptstyle{*}}}
   \langle z_e^{} \,,\, u_g^{-1} \cdot u_{g \cdot e}^{} \rangle \,.
\end{equation}
In other words, we require the following diagram to commute:
\begin{equation} \label{eq:IACT14}
 \begin{array}{c}
  \xymatrix{
   J_e E~ \ar[rr]^{L_{u_g}} \ar[d]_{z_e} & &
   ~J_{g \cdot e} E \ar[d]^{u_g \cdot z_e} \\
   \bwedge^{\!n\,} T_x^* M~ \ar[rr]_{(T_g \tau_G^{} \smcirc u_g)^{-1^{\,*}}} & &
   ~\bwedge^{\!n\,} T_y^* M
  }
 \end{array}
\end{equation}
Similarly, we obtain an action
\begin{equation} \label{eq:IACT15}
 \begin{array}{cccc}
  \Phi_{\vec{J}^{\,\circledast} E}:
  & JG \times_M \vec{J}^{\,\circledast} E & \longrightarrow &
  \vec{J}^{\,\circledast} E \\[1mm]
  &           (u_g,\vec{z}_e)             &   \longmapsto   &
  u_g \cdot \vec{z}_e
 \end{array}
\end{equation}
of $JG$ on $\vec{J}^{\,\circledast} E$ that factorizes through the
composition of the morphism~(\ref{eq:JETGP3}) of Lie groupoids to
yield an action
\begin{equation} \label{eq:IACT16}
 \begin{array}{ccc}
  \bigl( GL(TM) \times_M G \bigr) \times_M \vec{J}^{\,\circledast} E
  & \longrightarrow & \vec{J}^{\,\circledast} E \\[1mm]
                   \bigl( (a,g),\vec{z}_e \bigr)                  
  &   \longmapsto   & (a,g) \cdot \vec{z}_e
 \end{array}
\end{equation}
of $\, GL(TM) \times_M G \,$ on $\vec{J}^{\,\circledast} E$
defined as follows: given $\, (a,g) \in GL(TM) \times_M G \,$
with $\, \sigma_{GL(TM)}(a) = x = \sigma_G(g) \,$  and
$\, \tau_{GL(TM)}(a) = y = \tau_G(g)$, $e \in E \,$ with
$\, \pi_E(e) = x$, $\vec{z}_e \in \vec{J}^{\,\circledast}_e E \,$
and $\, \vec{u}_{g \cdot e} \in \vec{J}_{g \cdot e} E$,
\begin{equation} \label{eq:IACT17}
 \langle (a,g) \cdot \vec{z}_e \,,\, \vec{u}_{g \cdot e} \rangle~
 =~a^{-1^{\,\scriptstyle{*}}}
   \langle \vec{z}_e \,,\, (a,g)^{-1} \cdot \vec{u}_{g \cdot e} \rangle \,.
\end{equation}
In other words, we require the following diagram to commute:
\begin{equation} \label{eq:IACT18}
 \begin{array}{c}
  \xymatrix{
   \vec{J}_e E \ar[rr]^{L_{(a,g)}} \ar[d]_{\vec{z}_e} & &
   \vec{J}_{g \cdot e} E \ar[d]^{(a,g) \cdot \vec{z}_e} \\
   \bwedge^{\!n\,} T_x^* M \, \ar[rr]_{a^{-1^{\,*}}} & &
   \bwedge^{\!n\,} T_y^* M
  }
 \end{array}
\end{equation}
All these actions again satisfy the property of compatibility with
the structure of the bundles involved as vector bundles over $E$.

Passing to our next example, which is once again functorial, let us
now apply the tangent functor $T$ to all structural maps of the
original action to obtain an action of the tangent groupoid $TG$
of~$G$ on the tangent bundle $TE$ of~$E$, 
\begin{equation} \label{eq:IACT19}
 \begin{array}{ccc}
  TG \times_{TM} TE & \longrightarrow &      TE       \\[1mm]
      (v_g,v_e)     &   \longmapsto   & v_g \cdot v_e
 \end{array}
\end{equation}
where we have used the canonical identification of
$\, T \bigl( G \times_M E \bigr)$ with $\, TG \times_{TM} TE$, under
which a pair of vectors $\, (v_g,v_e) \in T_g G \oplus  T_e E \,$
belongs to the subspace $\, T_{(g,e)} \bigl( G \times_M E \bigr) =
(TG \times_{TM} TE)_{(g,e)}$ \linebreak if and only if its two
components are related according to $\, T_g \sigma_G (v_g)
= T_e \pi_E (v_e)$, and in this case, \vspace{-1ex}
\begin{equation} \label{eq:IACT20}
 v_g \cdot v_e~=~T_{(g,e)} \Phi_E^{} (v_g,v_e) \,.
\vspace{0.5ex}
\end{equation}
Now even though this action still covers the original one, i.e., the diagram
\begin{equation} \label{eq:IACT21}
 \begin{array}{c}
  \xymatrix{
   TG \times_{TM} TE \ar[rr]^{\qquad T\Phi_E^{}} \ar[d] & & TE \ar[d] \\
   G \times_M E \ar[rr]_{\qquad \Phi_E^{}} & & E
  }
 \end{array}
\end{equation}
commutes, the problem is that it involves a change of base space, from~$M$
to~$TM$, and as a result it does not respect the structure of $TE$ as a vector
bundle over $E$.
Namely, given $\, g \in G \,$ with $\, \sigma_G(g) = x \,$ and $\, \tau_G(g)
= y$, $e \in E \,$ with $\, \pi_E(e) = x \,$ and $\, v_g \in T_g G \,$ with
$\, T_g \sigma_G(v_g) = v_x \in T_x M \,$ and $\, T_g \tau_G(v_g) = v_y 
\in T_y M$, left translation by~$v_g$ is not well defined on the
whole tangent space $T_e E$, but only on its affine subspace
$(T_e \pi_E)^{-1}(v_x)$, and its restriction to this subspace is an
affine transformation $\, L_{v_g}: (T_e \pi_E)^{-1}(v_x) \longrightarrow
(T_{g \cdot e} \pi_E)^{-1}(v_y)$.

This is a serious defect because it prevents the transfer of this action
to cotangent vectors or, more generally, tensors on~$E$ and thus makes it
almost useless.

Fortunately, and that is perhaps the central message of this paper, there is
a way out of this impasse: it consists in replacing the tangent groupoid $TG$
by the jet groupoid~$JG$.
In fact, as we will show now, there is a natural induced action of the jet
groupoid $JG$ of~$G$ on the tangent bundle $TE$ of~$E$,
\begin{equation} \label{eq:IACT22}
 \begin{array}{cccc}
  \Phi_{TE}: & JG \times_M TE & \longrightarrow &      TE       \\[1mm]
             &    (u_g,v_e)   &   \longmapsto   & u_g \cdot v_e
 \end{array}
\end{equation}
defined as follows: given $\, g \in G \,$ and $\, e \in E \,$ with
$\, \sigma_G(g) = \pi_E(e)$,  $u_g \in J_g G \,$ and $\, v_e \in T_e E$,
\begin{equation} \label{eq:IACT23}
 u_g \cdot v_e^{}~
 =~T_{(g,e)} \Phi_E \bigl( (u_g \smcirc T_e \pi_E)(v_e) , v_e \bigr) \,.
\end{equation}
This prescription is less obvious than the previous ones because it mixes
the two functors $J$ and $T$, \linebreak so it may be worthwhile to check
explicitly that it does indeed define an action.
To this end, \linebreak we note first that, with $g$, $e$, $u_g$ and $v_e$
as before,
\begin{equation} \label{eq:IACT24}
 \begin{aligned}
  T_{g \cdot e} \pi_E (u_g \cdot v_e)~
  &=~T_{(g,e)} (\pi_E \smcirc \Phi_E)
     \bigl( (u_g \smcirc T_e \pi_E)(v_e) , v_e \bigr) \\
  &=~T_{(g,e)} (\tau_G \smcirc \mathrm{pr}_1)
     \bigl( (u_g \smcirc T_e \pi_E)(v_e) , v_e \bigr) \\
  &=~T_g \tau_G \bigl( (u_g \smcirc T_e \pi_E)(v_e) \bigr)~
   =~(T_g \tau_G \smcirc u_g) (T_e \pi_E(v_e)) \,.
 \end{aligned}
\end{equation}
Therefore, given $\, g,h \in G \,$ and $\, e \in E \,$ with
$\, \sigma_G(g) = \pi_E(e)$, $\sigma_G(h) = \tau_G(g)
= \pi_E(g \cdot e) \,$ and $\, u_g \in J_g G$, $u_h \in J_h G$,
$v_e \in T_e E$,
\[
 \begin{aligned}
  u_h \cdot (u_g \cdot v_e)~
  &=~T_{(h,g \cdot e)} \Phi_E
     \bigl( (u_h \smcirc T_{g \cdot e} \pi_E)(u_g \cdot v_e) \,,
            u_g \cdot v_e \bigr) \\[-0.3ex]
  &=~T_{(h,g \cdot e)} \Phi_E
     \Bigl( (u_h \smcirc (T_g \tau_G \smcirc u_g) \smcirc T_e \pi_E)(v_e) \,,
            T_{(g,e)} \Phi_E \bigl( (u_g \smcirc T_e \pi_E)(v_e) , v_e \bigr)
            \Bigr) \\[-0.3ex]
  &=~T_{(h,g,e)} \bigl( \Phi_E \smcirc (\mathrm{id}_G \times \Phi_E) \bigr)
     \bigl( (u_h \smcirc (T_g \tau_G \smcirc u_g) \smcirc T_e \pi_E)(v_e) \,,
            (u_g \smcirc T_e \pi_E)(v_e) \,, v_e \bigr) \\
  &=~T_{(h,g,e)} \bigl( \Phi_E \smcirc (\mu_G \times \mathrm{id}_E) \bigr)
     \bigl( (u_h \smcirc (T_g \tau_G \smcirc u_g) \smcirc T_e \pi_E)(v_e) \,,
            (u_g \smcirc T_e \pi_E)(v_e) \,, v_e \bigr) \\
  &=~T_{(hg,e)} \Phi_E
     \bigl( \bigl( T_{(h,g)} \mu_G \smcirc
                   \bigl( u_h \smcirc (T_g \tau_G \smcirc u_g) , u_g \bigr)
                   \smcirc T_e \pi_E \bigr)(v_e) \,, v_e \bigr) \\
  &=~T_{(hg,e)} \Phi_E \bigl( (u_hu_g \smcirc T_e \pi_E)(v_e) , v_e \bigr) \\
  &=~(u_h u_g) \cdot v_e \,.
 \end{aligned}
\]
Similarly, given $\, e \in E \,$ with $\, \pi_E(e) = x \,$ and $\, v_e \in T_e E$,
\[
 \begin{aligned}
  1_{JG,x} \cdot v_e~
  &=~T_{(1_{G,x},e)} \Phi_E
     \bigl( (T_x 1_G \smcirc T_e \pi_E)(v_e) \,, v_e \bigr) \\
  &=~T_e \bigl( \Phi_E \smcirc (1_G \smcirc \pi_E \,, \mathrm{id}_E) \bigr)
     (v_e) \\
  &=~T_e \, \mathrm{id}_E (v_e)~=~v_e \,.
 \end{aligned}
\]
Moreover, it follows from equation~(\ref{eq:IACT24}) that this action
preserves the vertical bundle $VE$, and comparing equation~(\ref{eq:IACT02})
or~(\ref{eq:IACT03}) with equation~(\ref{eq:IACT23}) shows that its restriction 
to~$VE$ factorizes through the projection from $JG$ to~$G$ so as to yield the
action~(\ref{eq:IACT01}) introduced above.

A fundamental property of the action~(\ref{eq:IACT22}) is that this one
does respect the structure of $TE$ as a vector bundle over $E$, since
(a) it covers the original action, i.e., the diagram
\begin{equation} \label{eq:IACT25}
 \begin{array}{c}
  \xymatrix{
   JG \times_M TE \ar[rr]^{\qquad \Phi_{TE}^{}} \ar[d] & & TE \ar[d] \\
   G \times_M E \ar[rr]_{\qquad \Phi_E^{}} & & E
  }
 \end{array}
\end{equation}
commutes, and (b)~for any $\, g \in G \,$ and $\, u_g \in J_g G$, left
translation by~$u_g$ is well defined on the whole tangent space $T_e E$
provided that $\, \pi_E(e) = \sigma_G(g) \,$ (and otherwise is not well
defined at any point in $T_e E$), its restriction to this space being
a linear transformation $\, L_{u_g}: T_e E \longrightarrow T_{g \cdot e} E$,
because it is the composition of two linear maps:
\[
 L_{u_g}~=~T_{(g,e)} \Phi_E^{} \,\smcirc\,
           \bigl( u_g \smcirc T_e \pi_E^{} \,, \mathrm{id}_{T_e E} \bigr) \,.
\]
And finally, we note that equation~(\ref{eq:IACT24}) states that the
action~(\ref{eq:IACT22}) also covers the natural action of $GL(TM)$
on~$TM$, i.e., the diagram
\begin{equation} \label{eq:IACT26}
 \begin{array}{c}
  \xymatrix{
   JG \times_M TE \ar[rr]^{\qquad \Phi_{TE}^{}} \ar[d] & & TE \ar[d] \\
   GL(TM) \times_M TM \ar[rr] & & TM
  }
 \end{array}
\end{equation}
commutes.

Another argument showing that the action~(\ref{eq:IACT22}) is the
correct one comes from considering bisections of~$G$ and the auto%
morphisms of~$E$ they generate, according to equation~(\ref{eq:ACTREP1}).
Namely, we can understand this action as the derivative of the push-forward
of curves by automorphisms: given a bisection $\beta$ of~$G$ and a curve
$\gamma$ in~$E$, we set $\, e = \gamma(0)$, $x = \pi_E(e)$ and
$g = \beta(x) \,$ to conclude that if
\begin{equation} \label{eq:IACT27}
 u_g~=~T_x \beta \qquad \mbox{and} \qquad
 v_e~=~\frac{d}{dt} \, \gamma(t) \, \Big|_{t=0} \,,
\end{equation}
then
\begin{equation} \label{eq:IACT28}
 u_g \cdot v_e~
 =~T_e \Pi_E(\beta) (v_e)~
 =~\frac{d}{dt} \, \Pi_E(\beta) (\gamma(t)) \, \Big|_{t=0} \,.
 \vspace*{2mm}
\end{equation}
Actually, repeating the construction at the end of Section~\ref{subsec:LGR},
we can use the action~(\ref{eq:IACT22}) to obtain a representation of the
group $\mathrm{Bis}(JG)$ of bisections of~$JG$ by automorphisms
of~$TE$ (not only as a fiber bundle over~$M$ but also as a vector
bundle over~$E$), that is, a homomorphism
\begin{equation} \label{eq:ACTREP3}
 \begin{array}{cccc}
  \Pi_{TE}^{}:
  & \mathrm{Bis}(JG) & \longrightarrow &      \mathrm{Aut}(TE)      \\[1mm]
  &   \tilde{\beta}  &   \longmapsto   & \Pi_{TE}^{}(\tilde{\beta})
 \end{array}
\end{equation}
which covers the homomorphism $\Pi_E^{}$ defined previously (see equation
(\ref{eq:ACTREP1})) in the following sense: if $\tilde{\beta}$ is holonomous,
say $\, \tilde{\beta} = j\beta$, then equations~(\ref{eq:IACT27}) and~%
(\ref{eq:IACT28}) state that $\Pi_{TE}^{}(\tilde{\beta})$ is the tangent
map to $\Pi_E^{}(\beta)$:
\begin{equation} \label{eq:ACTREP4}
 \Pi_{TE}^{}(j\beta)~=~T \, \Pi_E^{}(\beta) \,.
\end{equation}

The compatibility of the action~(\ref{eq:IACT22}) with the structure of $TE$
as a vector bundle over $E$ allows us to transfer this action to all of its
descendants.
Thus, for example, we obtain an action of $JG$ on tensors of any degree
and type,
\begin{equation} \label{eq:IACT29}
 \begin{array}{cccc}
  \Phi_{T_s^r E}:
  & JG \times_M^{} T_s^r E & \longrightarrow &    T_s^r E    \\[1mm]
  &         (u_g,t_e)      &   \longmapsto   & u_g \cdot t_e
 \end{array}
\end{equation}
and, in particular, on $r$-forms,
\begin{equation} \label{eq:IACT30}
 \begin{array}{cccc}
  \Phi_{\bigwedge^{\!r} T^* E}:
  & JG \times_M^{} \bwedge^{\!r\,} T^* E & \longrightarrow
  & \bwedge^{\!r\,} T^* E \\[1mm]
  &            (u_g,\alpha_e)            &   \longmapsto  
  & u_g \cdot \alpha_e
 \end{array}
\end{equation}
that can be restricted to an action of~$JG$ on partially horizontal $r$-forms,
\begin{equation} \label{eq:IACT31}
 \begin{array}{cccc}
  \Phi_{\bigwedge_s^r T^* E}:
  & JG \times_M^{} \dbwedge{s}{r} T^* E & \longrightarrow
  & \dbwedge{s}{r} T^* E \\[1mm]
  &            (u_g,\alpha_e)           &   \longmapsto  
  & u_g \cdot \alpha_e
 \end{array}
\end{equation}
where $\dbwedge{s}{r} T^* E$ denotes the bundle of $(r-s)$-horizontal
$r$-forms on $E$,%
\addtocounter{footnote}{-6}\footnotemark \addtocounter{footnote}{5}
giving rise to representations of the group of bisections of $JG$, namely
\begin{equation} \label{eq:ACTREP5}
 \begin{array}{cccc}
  \Pi_{T_s^r E}^{}:
  & \mathrm{Bis}(JG) & \longrightarrow &      \mathrm{Aut}(T_s^r E)
  \\[1mm]
  &   \tilde{\beta}  &   \longmapsto   & \Pi_{T_s^ r E}^{}(\tilde{\beta})
 \end{array}
\end{equation}
and, in particular,
\begin{equation} \label{eq:ACTREP6}
 \begin{array}{cccc}
  \Pi_{{\scriptsize \bwedge^{\!r\,}} T^* E}^{}:
  & \mathrm{Bis}(JG) & \longrightarrow
  & \mathrm{Aut}(\bwedge^{\!r\,} T^* E) \\[1mm]
  &   \tilde{\beta}  &   \longmapsto   
  & \Pi_{{\scriptsize \bwedge^{\!r\,}} T^* E}^{}(\tilde{\beta})
 \end{array}
\end{equation}
and
\begin{equation} \label{eq:ACTREP7}
 \begin{array}{cccc}
  \Pi_{{\scriptsize \dbwedge{s}{r}} T^* E}^{}:
  & \mathrm{Bis}(JG) & \longrightarrow 
  & \mathrm{Aut}(\dbwedge{s}{r} T^* E) \\[1mm]
  &   \tilde{\beta}  &   \longmapsto  
  & \Pi_{{\scriptsize \dbwedge{s}{r}} T^* E}^{}(\tilde{\beta})
 \end{array}
\end{equation}

\noindent
As a further consistency check, we note the following:
\begin{prp} \label{prp:ISOINV}~
 Let\/ $E$ be a fiber bundle over a manifold\/~$M$, endowed with the action
 of a Lie groupoid\/ $G$ over the same manifold\/~$M$. Then the canonical 
 strict isomorphism~(\ref{eq:ISOJST1}) is \linebreak $JG$-equivariant.
\end{prp}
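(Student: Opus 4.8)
The plan is to make the canonical isomorphism~(\ref{eq:ISOJST1}) completely explicit and then reduce the equivariance statement to a single compatibility relation between the action~(\ref{eq:IACT22}) of $JG$ on $TE$ and the action~(\ref{eq:IACT05}) of $JG$ on $JE$. First I would recall how the isomorphism works: to an $(n-1)$-horizontal $n$-form $\, \alpha_e^{} \in \dbwedge{1}{n} T_e^* E \,$ one associates the map $\, z_e^{}: J_e^{} E \longrightarrow \bwedge^{\!n\,} T_x^* M \,$ given by pull-back along the jet, $\, z_e^{}(u_e^{}) = u_e^* \alpha_e^{}$, that is, $\, z_e^{}(u_e^{})(w_1^{}, \ldots, w_n^{}) = \alpha_e^{}(u_e^{}(w_1^{}), \ldots, u_e^{}(w_n^{})) \,$ for $\, w_1^{}, \ldots, w_n^{} \in T_x^{} M$. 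Writing $u_e^{}$ as a fixed jet plus a vertical part in $\, \vec{J}_e^{} E = L(T_x^{} M, V_e^{} E) \,$ and expanding multilinearly, the $(n-1)$-horizontality of $\alpha_e^{}$ annihilates every term containing two or more vertical arguments, so $z_e^{}$ is indeed \emph{affine} in $u_e^{}$ and hence lies in $J_e^{\circledstar} E$; this is exactly the content of~(\ref{eq:ISOJST1}). Denote the resulting assignment by $\, \Lambda_e^{}: \dbwedge{1}{n} T_e^* E \longrightarrow J_e^{\circledstar} E$, so that the claim is $\, \Lambda_{g \cdot e}^{}(u_g^{} \cdot \alpha_e^{}) = u_g^{} \cdot \Lambda_e^{}(\alpha_e^{})$.

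The key step is to establish the identity
\begin{equation} \label{eq:plan1}
 L_{u_g^{}}^{} \smcirc u_e^{}~=~(u_g^{} \cdot u_e^{}) \smcirc (T_g^{} \tau_G^{} \smcirc u_g^{})
 \qquad \mbox{for $\, u_e^{} \in J_e^{} E$} \,,
\end{equation}
relating left translation $L_{u_g^{}}^{}$ on $TE$ (the action~(\ref{eq:IACT22})) to left translation on $JE$ (the action~(\ref{eq:IACT05})), where $\, T_g^{} \tau_G^{} \smcirc u_g^{} \in GL(T_x^{} M, T_y^{} M) \,$ is the frame $\pi_{JG}^{\mathrm{fr}}(u_g^{})$. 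This follows at once by inserting $\, u_e^{}(w) \,$ into the defining formula~(\ref{eq:IACT23}) and using $\, T_e^{} \pi_E^{} \smcirc u_e^{} = \mathrm{id}_{T_x M}^{}$, which collapses $\, (u_g^{} \smcirc T_e^{} \pi_E^{})(u_e^{}(w)) \,$ to $\, u_g^{}(w)$, so that $\, L_{u_g^{}}^{}(u_e^{}(w)) = T_{(g,e)}^{} \Phi_E^{} \smcirc (u_g^{}, u_e^{})(w) \,$, which by~(\ref{eq:IACT06}) equals $\, (u_g^{} \cdot u_e^{})((T_g^{} \tau_G^{} \smcirc u_g^{}) w)$. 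Applying~(\ref{eq:plan1}) to the inverse jet $\, u_g^{-1} \in J_{g^{-1}}^{} G$, whose associated frame is $\, (T_g^{} \tau_G^{} \smcirc u_g^{})^{-1} \,$ (a short computation using $\, \tau_G^{} \smcirc \iota_G^{} = \sigma_G^{} \,$) and for which $\, L_{u_g^{-1}}^{} = (L_{u_g^{}}^{})^{-1} \,$ by the action axioms verified above for~(\ref{eq:IACT22}), yields the dual relation
\begin{equation} \label{eq:plan2}
 (L_{u_g^{}}^{})^{-1} \smcirc u_{g \cdot e}^{}~
 =~(u_g^{-1} \cdot u_{g \cdot e}^{}) \smcirc (T_g^{} \tau_G^{} \smcirc u_g^{})^{-1}
 \qquad \mbox{for $\, u_{g \cdot e}^{} \in J_{g \cdot e}^{} E$} \,.
\end{equation}

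With~(\ref{eq:plan2}) in hand, equivariance becomes a direct comparison. On one side, since the induced action on $n$-forms is pull-back by $(L_{u_g^{}}^{})^{-1}$, the value $\, \Lambda_{g \cdot e}^{}(u_g^{} \cdot \alpha_e^{})(u_{g \cdot e}^{}) \,$ at $\, w_1^{}, \ldots, w_n^{} \in T_y^{} M \,$ equals $\, \alpha_e^{}((L_{u_g^{}}^{})^{-1} u_{g \cdot e}^{}(w_1^{}), \ldots, (L_{u_g^{}}^{})^{-1} u_{g \cdot e}^{}(w_n^{}))$. On the other side, the defining formula~(\ref{eq:IACT13}) for the twisted-dual action, together with $\, z_e^{} = \Lambda_e^{}(\alpha_e^{}) \,$, gives $\, (u_g^{} \cdot z_e^{})(u_{g \cdot e}^{}) \,$ at the same vectors as $\, \alpha_e^{}((u_g^{-1} \cdot u_{g \cdot e}^{})(a^{-1} w_1^{}), \ldots, (u_g^{-1} \cdot u_{g \cdot e}^{})(a^{-1} w_n^{})) \,$ with $\, a = T_g^{} \tau_G^{} \smcirc u_g^{}$. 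These two expressions agree argument by argument precisely by~(\ref{eq:plan2}), which establishes the asserted $JG$-equivariance of~(\ref{eq:ISOJST1}).

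The main obstacle is the bookkeeping in~(\ref{eq:plan1})--(\ref{eq:plan2}): one must track carefully where the frame $\, a = T_g^{} \tau_G^{} \smcirc u_g^{} \,$ and its inverse enter, since $a$ appears both implicitly inside the definition~(\ref{eq:IACT06}) of the jet action and explicitly, through the pull-back $\, (T_g^{} \tau_G^{} \smcirc u_g^{})^{-1^{\,\scriptstyle{*}}}$, in the definition~(\ref{eq:IACT13}) of the twisted-dual action. Once the relation~(\ref{eq:plan2}) is correctly aligned, the horizontality of $\alpha_e^{}$ plays no further role and the comparison is purely formal; moreover the verification that the action on forms really does preserve the $(n-1)$-horizontal subbundle, needed for~(\ref{eq:IACT31}) to make sense, is already guaranteed by the fact, noted after~(\ref{eq:IACT24}), that the action~(\ref{eq:IACT22}) preserves $VE$.
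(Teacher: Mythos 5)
Your proof is correct. Note that the paper itself states Proposition~\ref{prp:ISOINV} without proof~--- it is offered only as a ``consistency check''~--- so there is no official argument to compare against; your computation supplies exactly the missing verification. The two pillars of your argument check out: the identity $\, L_{u_g} \smcirc u_e = (u_g \cdot u_e) \smcirc (T_g \tau_G \smcirc u_g) \,$ follows from equations~(\ref{eq:IACT23}) and~(\ref{eq:IACT06}) precisely as you say (the condition $\, T_e \pi_E \smcirc u_e = \mathrm{id}_{T_x M} \,$ collapses the first argument of $T_{(g,e)}\Phi_E$ to $u_g$), and the frame of the inverse jet is indeed $\, (T_g \tau_G \smcirc u_g)^{-1}$, by the inversion formula of Proposition~1 together with $\, \tau_G \smcirc \iota_G = \sigma_G \,$ and $\, T_g \sigma_G \smcirc u_g = \mathrm{id}_{T_x M}$. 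With these in hand, evaluating both sides on an arbitrary jet $\, u_{g \cdot e} \in J_{g \cdot e} E \,$ and comparing~(\ref{eq:IACT32}) against~(\ref{eq:IACT13}) argument by argument is exactly the right reduction, and your closing remark on preservation of $(n-1)$-horizontality (via preservation of $VE$) covers the only well-definedness issue. Your argument is also consistent with how the paper implicitly uses this identification in the proof of Theorem~\ref{te:MSPINV}, where the action of $JG$ on $\Lambda = J^{\circledstar} E$ is manipulated through the form-evaluation rule~(\ref{eq:IACT32}); your proposition is precisely what licenses that step.
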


The fact that the induced action of $JG$ on $TE$ passes to all its
descendants is the key to understanding what is meant by invariance
of tensor fields under the  action of a Lie groupoid: given a fiber
bundle $E$ over a manifold $M$ endowed with an action of a Lie
groupoid $G$ over the same manifold $M$, invariance of a tensor
field on $E$ refers to the induced action of the jet groupoid $JG$
on the tangent bundle $TE$ and its descendants.
More precisely, a ``pointwise" definition of invariance will involve
some ``stability" Lie subgroupoid $\tilde{G}$ of~$JG$:
\begin{dfn} \label{def:INVTEN1}~
 Let\/ $E$ be a fiber bundle over a manifold\/~$M$, endowed with the action
 of a Lie groupoid\/~$G$ over the same manifold\/~$M$, and let\/ $\tilde{G}$
 be a full Lie subgroupoid of its jet groupoid\/~$JG$. Then we say that a
 tensor field $\, t \in \mathcal{T}_s^r(E) \,$ on\/~$E$ is \textbf%
 {$\tilde{G}$-invariant} if, under the action~(\ref{eq:IACT29}),
 we have
 \begin{equation}
  t_{g \cdot e}~=~u_g \cdot t_e
 \end{equation}
 for all $\, g \in G \,$ and $\, e \in E \,$ such that $\, \sigma_G(g)
= \pi_E(e) \,$ and all $\, u_g \in \tilde{G}_g \subset J_g G$.
\end{dfn}

There is, however, one basic problem with this concept: it is stable
under algebraic operations but, in general, not under operations that
involve differentiation. For example, contraction of a $\tilde{G}$-%
invariant differential form with a $\tilde{G}$-invariant vector field
will produce a $\tilde{G}$-invariant differential form, but the
exterior derivative of a $\tilde{G}$-invariant differential form
will in general no longer be a $\tilde{G}$-invariant differential form.

To see how this comes about and what can be done to remedy the situation,
let us first recast the invariance condition in Definition~\ref{def:INVTEN1}
above in terms of pull-back (or push-forward) under automorphisms generated
by bisections of~$\tilde{G}$. Indeed, it is clear that this invariance condition
is equivalent to condition~(a) of the following
\begin{dfn} \label{def:INVTEN2}~
 Let\/ $E$ be a fiber bundle over a manifold\/~$M$, endowed with the action
 of a Lie groupoid\/~$G$ over the same manifold\/~$M$, and let\/ $\tilde{G}$
 be a full Lie subgroupoid of its jet groupoid\/~$JG$. Given a tensor field $\, t \in
 \mathcal{T}_s^r(E) \,$ on\/~$E$, we say that $t$ is
 \begin{enumerate}[(a)]
  \item \textbf{$\tilde{G}$-invariant} if
        $\, \Pi_{T_s^r E}(\tilde{\beta})_\sharp^{} \, t = t \,$
        for all bisections $\, \tilde{\beta} \in \mathrm{Bis}(\tilde{G})$,
  \item \textbf{$\tilde{G}$-holonomous-invariant} if
        $\, \Pi_{T_s^r E}(\tilde{\beta})_\sharp^{} \, t = t \,$
        for all holonomous bisections $\, \tilde{\beta} \in H\!B(G,\tilde{G})$,
 \end{enumerate}
 where $\,.\,_\sharp$ denotes the push-forward of sections by automorphisms
 as given in equation~(\ref{eq:ACTREP5}).
\end{dfn}
Note that if $\tilde{\beta}$ is holonomous, $\tilde{\beta} = j\beta$, then the
push-forward as defined here coincides with the standard push-forward of
tensor fields on a manifold by diffeomorphisms of the base manifold, i.e.,
$\, \Pi_{T_s^r E}(j\beta)_\sharp \, t = \Pi_E^{}(\beta)_* \, t$.

Now we can make our previous statement about the lack of stability under
differentiation more precise. For the sake of definiteness, let us consider
the case of differential forms and the exterior derivative. Indeed, the
basic property that the exterior derivative $d$ commutes with pull-back
under diffeomorphisms leads immediately to the following
\begin{prp} \label{prp:INVEXD}~
 Let\/ $E$ be a fiber bundle over a manifold\/~$M$, endowed with the action
 of a Lie groupoid\/~$G$ over the same manifold\/~$M$, and let\/ $\tilde{G}$
 be a full Lie subgroupoid of its jet groupoid\/~$JG$. Given a differential
 form $\, \alpha \in \Omega^r(E) \,$ on\/~$E$ which is $\tilde{G}$-holo%
 nomous-invariant, its exterior derivative $\, d\alpha \in \Omega^{r+1}(E) \,$
 is $\tilde{G}$-holonomous-invariant as well.
\end{prp}
It is easy to construct counterexamples showing that the same statement
with ``$\tilde{G}$-holonomous-invariant" replaced by ``$\tilde{G}$-invariant"
is false.

\begin{proof}
 As an exercise to familiarize ourselves with the terminology we are using,
 let us write down explicitly what the different statements of invariance
 mean for differential forms. First, the action (\ref{eq:IACT30}) on
 $r$-forms is explicitly determined from the action (\ref{eq:IACT22})
 on tangent vectors as follows: given $\, g \in G \,$ and $\, e \in E \,$
 with $\, \sigma_G(g) = \pi_E(e)$, $\, u_g \in J_g G$, $\alpha_e \in
 \bwedge^{\!r\,} T_e^* E \,$ and $\, v_1,\ldots,v_r \in T_e^{} E$,
 \begin{equation} \label{eq:IACT32}
  (u_g^{} \cdot \alpha_e^{})
  (u_g^{} \cdot v_1^{} \,, \ldots , u_g^{} \cdot v_r^{})~
  =~\alpha_e^{}(v_1^{} \,, \ldots , v_r^{}) \,.
 \end{equation}
 Thus $\alpha$ will be $\tilde{G}$-invariant iff for any $\, g \in G \,$
 and $\, e \in E \,$ such that $\, \sigma_G(g) = \pi_E(e)$, any
 $\, u_g \in \tilde{G}_g \subset J_g G$ and any tangent vectors
 $\, v_1,\ldots,v_r \in T_e E$, we have
 \begin{equation} \label{eq:INVFOR1}
  \alpha_{g \cdot e} ( u_g \cdot v_1 \,, \ldots , u_g \cdot v_r )~
  =~\alpha_e ( v_1 \,, \ldots , v_r ) \,,
 \end{equation}
 or equivalently, iff, for any bisection $\tilde{\beta}$ of $\tilde{G}$ with
 projected bisection $\,  \beta = \pi_{JG}^{} \smcirc \tilde{\beta} \,$
 of~$G$, any $\, x \in M \,$ and $\, e \in E \,$ such that $\, \pi_E(e)
 = x \,$ and any tangent vectors $\, v_1,\ldots,v_r \in T_e E$, we have
 \begin{equation}
  \alpha_{\beta(x) \cdot e} 
  ( \tilde{\beta}(x) \cdot v_1 \,, \ldots , \tilde{\beta}(x) \cdot v_r )~
  =~\alpha_e ( v_1 \,, \ldots , v_r ) \,,
 \end{equation}
 while it will be $\tilde{G}$-holonomous-invariant iff this is only true
 for any holonomous bisection $\tilde{\beta}$ of~$\tilde{G}$, that is,
 when $\tilde{\beta}$ is reconstructed from $\beta$ as its jet
 prolongation, $\tilde{\beta} = j\beta$, so that the previous
 equation becomes
 \begin{equation}
  \alpha_{\beta(x) \cdot e} 
  ( T_x \beta \cdot v_1 \,, \ldots , T_x \beta \cdot v_r )~
  =~\alpha_e ( v_1 \,, \ldots , v_r ) \,.
 \end{equation}
 This makes it clear that the condition of being $\tilde{G}$-holonomous-%
 invariant is stable under exterior differentiation, but that of being
 $\tilde{G}$-invariant is not.
 \qed
\end{proof}

The final problem that remains is to convert the property of being $\tilde{G}$-%
holonomous-invariant back into a ``pointwise defined" condition of invariance
under some Lie groupoid over~$M$. This may be possible in some cases and
impossible in others, since the group of holonomous bisections of $\tilde{G}$
may or may not be equal to the group of all bisections of some Lie groupoid
over~$M$: after all, the constraint that a bisection should be holonomous
involves not the values of that bisection but rather its first order derivatives,
through the Frobenius integrability condition. So this question will have to be
handled case by case.

With all these preliminaries out of the way, we are finally ready to
show in which sense the multisymplectic structure of classical field
theory is invariant:
\begin{thm}~ \label{te:MSPINV}
 Let\/ $E$ be a fiber bundle over a manifold\/~$M$, endowed with the
 action of a Lie groupoid\/~$G$ over the same manifold\/~$M$, and
 consider the induced actions of\/~$JG$ on the extended multiphase
 space\/ $J^{\circledstar} E$ and of the second order jet groupoids
 $\, J^2 G \subset \bar{J}^{\,2} G \subset J(JG) \,$ on its tangent
 bundle\/ $T(J^{\circledstar} E)$ and its descendants.
 Then the multicanonical form\/~$\theta$ is invariant under the action
 of the semiholonomous second order jet groupoid\/~$\bar{J}^{\,2} G$,
 whereas the multisymplectic form\/~$\omega$ is only invariant under
 the action of the second order jet groupoid\/~$J^2 G$.
%
\end{thm}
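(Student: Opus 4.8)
The plan is to handle $\theta$ and $\omega = -d\theta$ separately: I will show that $\theta$ is invariant already under the larger \emph{semiholonomic} groupoid $\bar{J}^{\,2} G$, and that it is precisely the passage to $\omega$ via exterior differentiation that forces the restriction to the holonomous subgroupoid $J^2 G$. Throughout I use the isomorphism~(\ref{eq:ISOJST1}) to realize $\theta$ as the tautological $(n-1)$-horizontal $n$-form on $\dbwedge{1}{n} T^\ast E$, so that (as in~(\ref{eq:TAUTF})) $\, \theta_\alpha(w_1,\ldots,w_n) = \alpha\bigl( Tp\,w_1,\ldots,Tp\,w_n \bigr)$ with $\, p: J^{\circledstar} E \longrightarrow E \,$ the bundle projection. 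By Proposition~\ref{prp:ISOINV} this identification is $JG$-equivariant, hence the induced $J(JG)$-action on $T(J^{\circledstar} E)$ agrees with the one obtained from the action~(\ref{eq:IACT30}) on forms; I may therefore compute entirely in $\dbwedge{1}{n} T^\ast E$.

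The decisive step is to determine how the induced action~(\ref{eq:IACT22}), applied with $JG$ in place of $G$ and $J^{\circledstar} E$ in place of $E$, interacts with the projection $Tp$. Since $p$ intertwines the $JG$-action on $J^{\circledstar} E$ with the $G$-action on $E$ through the morphism $\pi_{JG}$, a direct computation from the defining formula~(\ref{eq:IACT23})~-- using $\, p \circ \Phi_{J^{\circledstar} E} = \Phi_E \circ (\pi_{JG} \times p) \,$ together with $\, q = \pi_E \circ p \,$ for the source projection $\, q: J^{\circledstar} E \longrightarrow M$~-- yields the covering identity
\[
 Tp\,(u'_{u_g} \cdot w)~=~\bigl( J\pi_{JG}^{}(u'_{u_g}) \bigr) \cdot Tp\,(w),
\]
where the right-hand side is the $JG$-action on $TE$ and $\, J\pi_{JG}^{}(u'_{u_g}) = T_{u_g}^{} \pi_{JG}^{} \circ u'_{u_g} \,$ as in~(\ref{eq:SOJG2}). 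This is the heart of the matter: the projected action is governed not by $\, \pi_{J(JG)}^{}(u'_{u_g}) = u_g$, which controls the transformation of $\alpha$ itself, but by the \emph{a priori different} element $J\pi_{JG}^{}(u'_{u_g})$.

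I would then verify the pointwise invariance condition~(\ref{eq:INVFOR1}). Writing $\, \alpha' = u_g \cdot \alpha \,$ for the transformed form and inserting the covering identity, the left-hand side of~(\ref{eq:INVFOR1}) becomes $\, (u_g \cdot \alpha)\bigl( J\pi_{JG}^{}(u'_{u_g}) \cdot Tp\,w_1 , \ldots , J\pi_{JG}^{}(u'_{u_g}) \cdot Tp\,w_n \bigr)$. Now the defining condition~(\ref{eq:SOJG4}) of $\bar{J}^{\,2} G$ is exactly $\, \pi_{J(JG)}^{}(u'_{u_g}) = J\pi_{JG}^{}(u'_{u_g})$, i.e.\ $\, u_g = J\pi_{JG}^{}(u'_{u_g})$; hence on $\bar{J}^{\,2} G$ the form $u_g \cdot \alpha$ is evaluated on vectors transformed by the very same groupoid element, and~(\ref{eq:IACT32}) collapses the expression to $\, \alpha(Tp\,w_1,\ldots,Tp\,w_n) = \theta_\alpha(w_1,\ldots,w_n)$. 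Thus $\theta$ is $\bar{J}^{\,2} G$-invariant, and \emph{a fortiori} $J^2 G$-invariant.

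For $\omega$, I would invoke the holonomy of $J^2 G$, which by construction is the maximal holonomous subgroupoid of $\bar{J}^{\,2} G$ in the sense of Definition~\ref{def:SUBHOL}. By the analogue of~(\ref{eq:ACTREP4}) for $JG$ acting on $J^{\circledstar} E$, on a holonomous bisection $\, j^2 \beta = j(j\beta) \,$ the induced automorphism of $T(J^{\circledstar} E)$ is the tangent map of the honest diffeomorphism $\Pi_{J^{\circledstar} E}(j\beta)$, so that $\Pi(j^2\beta)_\sharp$ is ordinary push-forward. Since every element of $J^2 G$ lies on such a holonomous bisection, $J^2 G$-invariance of a form is equivalent to invariance under these push-forwards, and $d$ commutes with push-forward; therefore $\, \omega = -d\theta \,$ is $J^2 G$-invariant as well. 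The same remark makes clear why $\omega$ cannot be expected to be $\bar{J}^{\,2} G$-invariant: $\bar{J}^{\,2} G$ is not holonomous, so exterior differentiation need not commute with its action. I expect the covering computation producing $J\pi_{JG}^{}$ rather than $\pi_{J(JG)}^{}$ to be the main obstacle, since it is this identity that singles out $\bar{J}^{\,2} G$ as the natural domain of $\theta$-invariance; the remaining difficulty is purely the bookkeeping of the three projections $q$, $p$, $\pi_E$ and the verification, via Proposition~\ref{prp:ISOINV}, that the tautological-form computation transports back to $J^{\circledstar} E$.
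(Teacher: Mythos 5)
Your proposal is correct and follows essentially the same route as the paper's proof: the core computation in both is the covering identity $T\pi_{J^{\circledstar}E}(u'_{u_g}\cdot w) = \bigl(T_{u_g}\pi_{JG}\smcirc u'_{u_g}\bigr)\cdot T\pi_{J^{\circledstar}E}(w)$, which on $\bar{J}^{\,2}G$ collapses to $u_g\cdot T\pi_{J^{\circledstar}E}(w)$ and yields $\bar{J}^{\,2}G$-invariance of the tautological form, followed by the holonomy argument (exterior derivative commutes with genuine push-forward along holonomous bisections $j^2\beta$) to pass to $\omega$ on $J^2G$. The only difference is presentational~-- you isolate the covering identity for general $u'_{u_g}\in J_{u_g}(JG)$ before imposing the semiholonomic condition, whereas the paper imposes it midway through the same computation~-- and your closing explanation of why $\bar{J}^{\,2}G$-invariance does not survive exterior differentiation matches the paper's remarks preceding the theorem.
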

\begin{proof}
 Initially, we remember that the induced actions of~$JG$ on~$JE$ and on~$TE$
 cover the original action of~$G$ on~$E$ (see the diagrams~(\ref{eq:IACT10})
 and~(\ref{eq:IACT25})), and hence the same holds for the induced action of
 $JG$ on the extended multiphase space of equation~(\ref{eq:ISOJST1}), which
 for the sake of brevity we shall here denote by $\Lambda$.
 In other words, the diagram
 \begin{equation} \label{eq:IACT33}
  \begin{array}{c}
   \xymatrix{
    JG \times_M \Lambda \ar[rr]^{\qquad \Phi_\Lambda^{}}
    \ar[d]_{(\pi_{JG}^{},\pi_\Lambda^{})} & &
    \Lambda \ar[d]^{\pi_\Lambda^{}} \\
    G \times_M E \ar[rr]_{\qquad \Phi_E^{}} & & E
   }
  \end{array}
 \end{equation}
 commutes, that is,
 \vspace{1ex}
 \[
  \pi_\Lambda \smcirc\, \Phi_\Lambda
  = \Phi_E \smcirc (\pi_{JG},\pi_\Lambda) \,.
 \vspace{1ex}
 \]
 Therefore, given $\, u_g \in J_g G \,$ and $\, \alpha_e \in \Lambda_e \,$ with
 $\, \sigma_G(g) = \pi_E(e)$, $u'_{u_g} \in \bar{J}_{u_g}^{\,2} G \,$ and 
 $\, w \in T_{\alpha_e} \Lambda$, we have by equation~(\ref{eq:IACT23}),
 \[
  \begin{aligned}
   T_{u_g \cdot \alpha_e}^{} \pi_\Lambda^{}(u'_{u_g} \!\cdot w)~
   &=~(T_{u_g \cdot \alpha_e}^{} \pi_\Lambda^{} \smcirc
       T_{(u_g,\alpha_e)}^{} \Phi_\Lambda^{})
      \bigl( (u'_{u_g} \!\smcirc T_{\alpha_e}^{}
              (\pi_E^{} \smcirc \pi_{\Lambda}^{}))(w) \,, w \bigr) \\[0.5ex]
   &=~\bigl( T_{(g,e)} \Phi_E^{} \smcirc
             (T_{u_g} \pi_{JG}^{},T_{\alpha_e}^{} \pi_\Lambda^{}) \bigr)
      \bigl( (u'_{u_g} \!\smcirc T_{\alpha_e}^{}
              (\pi_E^{} \smcirc \pi_{\Lambda}^{}))(w) \,, w \bigr) \\
   &=~T_{(g,e)} \Phi_E^{}
      \Bigl( \bigl( T_{u_g} \pi_{JG}^{} \smcirc u'_{u_g} \!\smcirc
                    T_{\alpha_e} (\pi_E^{} \smcirc \pi_{\Lambda}^{}) \bigr)
             (w) \,, T_{\alpha_e} \pi_{\Lambda}^{}(w) \Bigr) \\
   &=~T_{(g,e)} \Phi_E^{}
      \bigl( (u_g^{} \smcirc T_e^{} \pi_E^{})
             (T_{\alpha_e} \pi_{\Lambda}^{}(w)) \,,
             T_{\alpha_e} \pi_{\Lambda}^{}(w) \bigr) \\[0.5ex]
   &=~u_g \cdot (T_{\alpha_e} \pi_{\Lambda}^{}(w)) \,.
  \end{aligned}
 \]
 Thus, given $\, w_1,\ldots,w_n \in T_{\alpha_e} \Lambda$, we get
 \[
  \begin{aligned}
   &\theta_{u_g \cdot \alpha_e}^{}
    \bigl( u'_{u_g} \!\cdot w_1^{} \,,\ldots, u'_{u_g} \!\cdot w_n^{} \bigr) \\
   &~=~(u_g^{} \cdot \alpha_e^{})
       \bigl( T_{u_g \cdot \alpha_e}^{} \pi_\Lambda^{}
              (u'_{u_g} \!\cdot w_1^{}) \,, \ldots,
              T_{u_g \cdot \alpha_e}^{} \pi_\Lambda^{}
              (u'_{u_g} \!\cdot w_n^{}) \bigr) \\
   &~=~(u_g \cdot \alpha_e)
       \bigl( u_g \cdot (T_{\alpha_e} \pi_\Lambda^{} (w_1)) \,, \ldots,
              u_g \cdot (T_{\alpha_e} \pi_\Lambda^{} (w_n)) \bigr) \\
   &~=~\alpha_e
       \bigl( T_{\alpha_e} \pi_\Lambda^{} (w_1) \,, \ldots,
              T_{\alpha_e} \pi_\Lambda^{} (w_n) \bigr) \\
   &~=~\theta_{\alpha_e} \bigl( w_1^{} \,,\ldots, w_n^{} \bigr) \,,
  \end{aligned}
 \]
 proving that $\theta$ is $\bar{J}^{\,2} G$\,-\,invariant.
 To prove that that $\, \omega = - d\theta \,$ is $J^2 G$\,-\,invariant,
 we shall apply Proposition~\ref{prp:INVEXD}, but in order to do so,
 we have to convert both of these "pointwise" invariance statements to
 statements of invariance under pull-back with automorphisms generated
 by the appropriate group of holonomous bisections.
 Now as we have seen above, $\bar{J}^{\,2} G$\,-\,invariance of~%
 $\theta$ is equivalent to invariance of~$\theta$ under pull-back with
 automorphisms generated by arbitrary bisections of~$\bar{J}^{\,2} G$,
 which of course trivially implies that the same property holds for
 holonomous bisections of~$\bar{J}^{\,2} G$, and this property
 is what extends from $\theta$ to~$\omega$, according to
 Proposition~\ref{prp:INVEXD}.
 But as we have seen above (see equation~(\ref{eq:HOLBIS})),
 the holonomous bisections of~$\bar{J}^{\,2} G$ are precisely the
 bisections of~$J^2 G$, which concludes the proof.
 \qed
\end{proof}

It may seem strange that $\theta$ has a higher degree of symmetry
than~$\omega$ since in various other contexts, one faces a reverse
situation where one encounters a structure form $\omega$ which is
invariant whereas its "potential" $\theta$ is only invariant up to
addition of an exact form.
But that's the way things turn out here.

The next step is to formulate the concept of invariance of the lagrangian
and/or hamiltonian under a Lie groupoid action:
\begin{dfn} \label{def:INVLH}~
 Let\/ $E$ be a fiber bundle over a manifold\/~$M$, endowed with the action of
 a Lie groupoid\/~$G$ over the same manifold\/~$M$, and consider the induced
 actions of\/~$JG$ on the jet bundle\/ $JE$, the extended multiphase space\/
 $J^{\circledstar} E$ and the ordinary multiphase space\/
 $\vec{J}^{\,\circledast} E$.
 Given a full Lie subgroupoid\/~$\tilde{G}$ of\/~$JG$, we say that a
 lagrangian $\, \mathcal{L}: JE \longrightarrow \bwedge^{\!n\,} T^* M \,$
 and/or a hamiltonian $\, \mathcal{H}: \vec{J}^{\,\circledast} E
 \longrightarrow J^{\circledstar} E \,$ is\/ \textbf{$\tilde{G}$-invariant}
 if it is equivariant with respect to the pertinent actions of\/~$JG$, when
 restricted to\/~$\tilde{G}$, i.e., if the diagrams
 \begin{equation} \label{eq:INVLAG}
  \begin{array}{c}
   \xymatrix{
    \tilde{G} \times_M JE \ar[rr]^{\qquad \Phi_{JE}^{}}
    \ar[d]_{(\pi_{JG}^{\mathrm{fr}}|_{\tilde{G}},\mathcal{L})}
    & & JE  \ar[d]^{\mathcal{L}} \\
    GL(TM) \times_M \bwedge^{\!n\,} T^* M \ar[rr]
    & & \bwedge^{\!n\,} T^* M
   }
  \end{array}
 \end{equation} 
 and/or
 \begin{equation} \label{eq:INVHAM}
  \begin{array}{c}
   \xymatrix{
    \tilde{G} \times_M \vec{J}^{\,\circledast} E
    \ar[rr]^{\qquad \Phi_{\vec{J}^{\,\circledast} E}^{}}
    \ar[d]_{(\mathrm{id}_{\tilde{G}}^{},\mathcal{H})} & &
    \vec{J}^{\,\circledast} E
    \ar[d]^{\mathcal{H}_{\vphantom{\tilde{G}}}} \\
    \tilde{G} \times_M J^{\circledstar} E
    \ar[rr]_{\qquad \Phi_{J^{\circledstar} E}} & &
    J^{\circledstar} E \vphantom{\tilde{G}}
   }
  \end{array}
 \end{equation}
 commute.
\end{dfn}
It is noteworthy that $\tilde{G}$-invariance of a lagrangian
$\, \mathcal{L}: JE \longrightarrow \bwedge^{\!n\,} T^* M \,$
implies $\tilde{G}$-invariance of its Legendre transformation
$\, \mathbb{F} \mathcal{L}: JE \longrightarrow J^{\circledstar} E$,
in the sense that the following diagram commutes:
\begin{equation} \label{eq:INVLEG}
 \begin{array}{c}
  \xymatrix{
   \tilde{G} \times_M JE
   \ar[rr]^{\qquad \Phi_{JE}^{}}
   \ar[d]_{(\mathrm{id}_{\tilde{G}}^{},\mathbb{F} \mathcal{L})} & &
   JE
   \ar[d]^{\mathbb{F} \mathcal{L}_{\vphantom{\tilde{G}}}} \\
   \tilde{G} \times_M J^{\circledstar} E
   \ar[rr]_{\qquad \Phi_{J^{\circledstar} E}} & &
   J^{\circledstar} E \vphantom{\tilde{G}}
  }
 \end{array}
\end{equation}
Indeed, given $\, g \in G \,$ with $\, \sigma_G(g) = x \,$ and
$\, \tau_G(g) = y$, $e \in E \,$ with $\, \pi_E(e) = x$, $u_g \in
\tilde{G}_g \subset J_g G \,$ and $\, u_e^{},u'_e \in J_e E$,
we get, according to equations~(\ref{eq:LEGT4}) and~%
(\ref{eq:IACT13}),
\[
 \begin{aligned}
   &\mathbb{F} \mathcal{L}(u_g \cdot u_e^{}) \cdot (u_g \cdot u'_e) \\
   &~=~\mathcal{L}(u_g \cdot u_e^{}) \, + \,
       \frac{d}{dt} \, \mathcal{L}
       \bigl( u_g \cdot u_e^{} + t (u_g \cdot u_e' - u_g \cdot u_e^{}) \bigr)
       \Big|_{t=0} \\
   &~=~\mathcal{L}(u_g \cdot u_e^{}) \, + \,
       \frac{d}{dt} \, \mathcal{L}
       \bigl( u_g \cdot (u_e^{} + t (u_e' - u_e^{})) \bigr) \Big|_{t=0} \\
   &~=~(T_g \tau_G^{} \smcirc u_g)^{-1^{\,\scriptstyle{*}}}
       \mathcal{L}(u_e^{}) \, + \, \frac{d}{dt} \, 
       (T_g \tau_G^{} \smcirc u_g)^{-1^{\,\scriptstyle{*}}}
       \mathcal{L} \bigl( u_e^{} + t (u_e' - u_e^{}) \bigr) \Big|_{t=0} \\
   &~=~(T_g \tau_G^{} \smcirc u_g)^{-1^{\,\scriptstyle{*}}}
       \Bigl( \mathcal{L}(u_e^{}) \, + \, \frac{d}{dt} \,
              \mathcal{L} \bigl( u_e^{} + t (u_e' - u_e^{}) \bigr) \Big|_{t=0}
              \Bigr) \\[1ex]
   &~=~(T_g \tau_G^{} \smcirc u_g)^{-1^{\,\scriptstyle{*}}}
       \bigl( \mathbb{F} \mathcal{L}(u_e^{}) \cdot u'_e \bigr) \\[2ex]
   &~=~(u_g \cdot \mathbb{F} \mathcal{L}(u_e^{})) \cdot
       (u_g \cdot u'_e) \,.
 \end{aligned}
\]
Therefore, we expect invariance of the lagrangian or hamiltonian to ensure
a corresponding form of invariance of the forms $\theta_{\mathcal{L}}$,
$\omega_{\mathcal{L}}$ and $\theta_{\mathcal{H}}$,
$\omega_{\mathcal{H}}$ defined by pull-back,
\[
 \theta_{\mathcal{L}}^{}~
 =~(\vec{\mathbb{F}} \mathcal{L})^* \theta_{\mathcal{H}}^{}~
 =~({\mathbb{F}} \mathcal{L})^* \theta~~,~~
 \omega_{\mathcal{L}}^{}~
 =~(\vec{\mathbb{F}} \mathcal{L})^* \omega_{\mathcal{H}}^{}~
 =~({\mathbb{F}} \mathcal{L})^* \omega \,,
\]
(see, e.g., equation~(\ref{eq:MULTCS}) and also equation~%
(\ref{eq:POCARF})), but the resulting invariance is reduced:
\begin{thm} \label{te:LHINV}~
 Let\/ $E$ be a fiber bundle over a manifold\/~$M$, endowed with the
 action of a Lie groupoid\/~$G$ over the same manifold\/~$M$, and
 consider the induced actions of\/~$JG$ on the jet bundle\/~$JE$,
 the extended multiphase space\/ $J^{\circledstar} E$ and the
 ordinary multiphase space\/ $\vec{J}^{\,\circledast} E$, as well as
 of the second order jet groupoids $\, J^2 G \subset \bar{J}^{\,2} G
 \subset J(JG) \,$ on the respective tangent bundles and their descendants.
 Given a full Lie subgroupoid\/~$\tilde{G}$ of\/~$JG$, suppose that the
 lagrangian $\, \mathcal{L}: JE \longrightarrow \bwedge^{\!n\,} T^* M \,$
 and/or hamiltonian $\, \mathcal{H}: \vec{J}^{\,\circledast} E \longrightarrow
 J^{\circledstar} E \,$ are\/~$\tilde{G}$-invariant. 
 Then the forms\/ $\theta_{\mathcal{L}}$, $\omega_{\mathcal{L}}$
 and/or $\theta_{\mathcal{H}}$, $\omega_{\mathcal{H}}$ are
 $\tilde{G}$-holonomous-invariant.
\end{thm}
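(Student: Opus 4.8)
The plan is to reduce the statement to two ingredients already at our disposal — the equivariance of the Legendre transformation (resp.\ the hamiltonian) and the invariance of $\theta$ and $\omega$ established in Theorem~\ref{te:MSPINV} — glued together by a naturality property of the induced action~(\ref{eq:IACT22}) under pull-back. Recall that $\theta_{\mathcal{L}} = (\mathbb{F}\mathcal{L})^* \theta$, $\omega_{\mathcal{L}} = (\mathbb{F}\mathcal{L})^* \omega$, while $\theta_{\mathcal{H}} = \mathcal{H}^* \theta$ and $\omega_{\mathcal{H}} = \mathcal{H}^* \omega$. For the lagrangian, the computation preceding diagram~(\ref{eq:INVLEG}) shows that $\tilde{G}$-invariance of $\mathcal{L}$ forces $\mathbb{F}\mathcal{L}: JE \longrightarrow J^{\circledstar} E$ to be $\tilde{G}$-equivariant for the actions of $JG$ on $JE$ and on $J^{\circledstar} E$ restricted to $\tilde{G}$; for the hamiltonian, $\tilde{G}$-invariance is, by Definition~\ref{def:INVLH}, precisely the $\tilde{G}$-equivariance of $\mathcal{H}: \vec{J}^{\,\circledast} E \longrightarrow J^{\circledstar} E$. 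So in both cases the relevant map $f$ is a morphism of fiber bundles over~$M$ covering the identity on~$M$ that intertwines the $JG$-actions on source and target whenever the groupoid element lies in~$\tilde{G}$.

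The key step is a naturality lemma for the construction~(\ref{eq:IACT22}): if $f: B_1 \longrightarrow B_2$ is such a $\tilde{G}$-equivariant morphism, then its tangent map $Tf$ intertwines the induced actions of $J(JG)$ on $TB_1$ and $TB_2$ whenever the acting element is restricted to lie in $J\tilde{G} \subset J(JG)$, and this passes to all descendants, in particular to forms. The verification is immediate from the defining formula~(\ref{eq:IACT23}): writing $u'_{u_g} \cdot w = T\Phi_{B_1}\bigl( (u'_{u_g} \smcirc T\pi_{B_1})(w) , w \bigr)$, applying $Tf$, replacing $f \smcirc \Phi_{B_1}$ by $\Phi_{B_2} \smcirc (\mathrm{id} \times_M f)$ and using $\pi_{B_1} = \pi_{B_2} \smcirc f$, one recognizes the result as $u'_{u_g} \cdot Tf(w)$. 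The subtle point — and the only place where the restriction to $J\tilde{G}$ is forced — is that the equivariance relation $f \smcirc \Phi_{B_1} = \Phi_{B_2} \smcirc (\mathrm{id} \times_M f)$ holds only over the submanifold $\tilde{G} \times_M B_1$, so its tangent map may be invoked only when the vector $(u'_{u_g} \smcirc T\pi_{B_1})(w)$ is tangent to $\tilde{G}$, which is exactly the condition $u'_{u_g} \in J\tilde{G}$.

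The conclusion then follows by a pull-back argument. Taking $f = \mathbb{F}\mathcal{L}$ (resp.\ $f = \mathcal{H}$), fix $u'_{u_g} \in (J^2 G \cap J\tilde{G})$ over $u_g \in \tilde{G}$ and tangent vectors $w_1, \dots, w_r$; then $(f^* \theta)_{u_g \cdot b}(u'_{u_g} \cdot w_1, \dots, u'_{u_g} \cdot w_r)$ equals $\theta_{f(u_g \cdot b)}\bigl( Tf(u'_{u_g} \cdot w_1), \dots \bigr)$, and one rewrites $f(u_g \cdot b) = u_g \cdot f(b)$ (base-level equivariance, valid since $u_g \in \tilde{G}$) and $Tf(u'_{u_g} \cdot w_i) = u'_{u_g} \cdot Tf(w_i)$ (the lemma, valid since $u'_{u_g} \in J\tilde{G}$). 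This expresses everything through the action of $u'_{u_g}$ on $\theta$ over $J^{\circledstar} E$; by Theorem~\ref{te:MSPINV} this action is trivial — $\theta$ being $\bar{J}^{\,2} G$-invariant and $\omega$ being $J^2 G$-invariant, and $u'_{u_g} \in J^2 G \subset \bar{J}^{\,2} G$ — so the pull-backs are unchanged and $\theta_{\mathcal{L}}, \omega_{\mathcal{L}}$ (resp.\ $\theta_{\mathcal{H}}, \omega_{\mathcal{H}}$) are $(J^2 G \cap J\tilde{G})$-invariant. The intersection is precisely what the two constraints demand: $J\tilde{G}$ for the equivariance of $f$ and $Tf$, and $J^2 G$ — not the larger $\bar{J}^{\,2} G$ — because $\omega = -d\theta$ inherits invariance only under the holonomous part, as noted in the remark following Theorem~\ref{te:MSPINV}.

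I expect the main obstacle to be the careful bookkeeping in the naturality lemma, specifically justifying that the tangent prolongation of the equivariance relation is legitimate exactly on $J\tilde{G}$ and no larger subgroupoid of $J(JG)$; once that is in place, the remainder is the formal interplay of equivariance and pull-back.
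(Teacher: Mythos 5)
Your proposal is correct, and it rests on the same two ingredients as the paper (equivariance of $\mathbb{F}\mathcal{L}$ resp.\ $\mathcal{H}$, plus Theorem~\ref{te:MSPINV}), but the gluing argument is genuinely different. The paper's proof is three lines long and stays at the level of bisections: commutativity of diagram~(\ref{eq:INVHAM}) is reformulated as the automorphism-level identity $\mathcal{H} \smcirc \Pi_{\vec{J}^{\,\circledast} E}(\tilde{\beta}) = \Pi_{J^{\circledstar} E}(\tilde{\beta}) \smcirc \mathcal{H}$ for every bisection $\tilde{\beta}$ of~$\tilde{G}$, and then, for every \emph{holonomous} bisection $\tilde{\beta}$ of~$\tilde{G}$, functoriality of the ordinary pull-back together with Theorem~\ref{te:MSPINV} gives $\Pi_{\vec{J}^{\,\circledast} E}(\tilde{\beta})^*\bigl(\mathcal{H}^*\alpha\bigr) = \mathcal{H}^*\bigl(\Pi_{J^{\circledstar} E}(\tilde{\beta})^*\alpha\bigr) = \mathcal{H}^*\alpha$ for $\alpha = \theta,\omega$, the lagrangian case being obtained verbatim by substituting $\mathbb{F}\mathcal{L}$ for $\mathcal{H}$ and $JE$ for $\vec{J}^{\,\circledast} E$. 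You instead prove a pointwise naturality lemma (for a $\tilde{G}$-equivariant bundle morphism $f$ over~$M$, the tangent map $Tf$ intertwines the induced actions on tangent bundles exactly when the acting element lies in $J\tilde{G}$) and then compute elementwise. The trade-off is real: the paper's argument is shorter, but it only exercises those elements of $J^2 G \cap J\tilde{G}$ swept out by second prolongations $j^2\beta$ with $j\beta \in \mathrm{Bis}(\tilde{G})$ --- a set that can be quite thin when $\tilde{G}$ admits few holonomous bisections, which is precisely the phenomenon illustrated by Example~4 --- whereas your computation verifies the invariance condition in the pointwise form of Definition~\ref{def:INVTEN}(a) for \emph{arbitrary} elements $u'_{u_g} \in J^2 G \cap J\tilde{G}$, and in addition isolates exactly why the intersection with $J\tilde{G}$ is forced (the equivariance relation holds only on $\tilde{G} \times_M B_1$, so it may be differentiated only along vectors tangent to that submanifold, i.e.\ when the image of $u'_{u_g}$ lies in $T\tilde{G}$). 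So your route is longer but both more elementary and strictly stronger in scope; the paper's route buys brevity by remaining at the level of automorphism groups and ordinary pull-backs.
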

\begin{proof}
 We begin by reformulating the commutativity of the diagram in
 equation~(\ref{eq:INVHAM}) as stating that, for any bisection
 $\tilde{\beta}$ of $\tilde{G}$,
 \[
  \mathcal{H} \smcirc \Pi_{\vec{J}^{\,\circledast} E}(\tilde{\beta})~
  =~\Pi_{J^{\circledstar} E}(\tilde{\beta}) \smcirc \mathcal{H} \,.
 \]
 Thus we conclude from Theorem~\ref{te:MSPINV} that
 \[
  \Pi_{\vec{J}^{\,\circledast} E}(\tilde{\beta})^*
  \bigl( \mathcal{H}^* \theta \bigr)~
  =~\mathcal{H}^* \bigl( \Pi_{J^{\circledstar} E}
    (\tilde{\beta})^* \theta \bigr)~=~\mathcal{H}^* \theta
 \]
 for any bisection $\tilde{\beta}$ of~$\tilde{G}$ such that
 $j \tilde{\beta}$ takes values in $\bar{J}^{\,2} G$, and
 \[
  \Pi_{\vec{J}^{\,\circledast} E}(\tilde{\beta})^*
  \bigl( \mathcal{H}^* \omega \bigr)~
  =~\mathcal{H}^* \bigl( \Pi_{J^{\circledstar} E}
    (\tilde{\beta})^* \omega \bigr)~=~\mathcal{H}^* \omega
 \]
 for any holonomous bisection $\tilde{\beta}$ of~$\tilde{G}$
 such that $j \tilde{\beta}$ takes values in $\bar{J}^{\,2} G$.
 But as we have seen before, a bisection $\tilde{\beta}$ of~%
 $\tilde{G}$ (or even of~$JG$) such that $j \tilde{\beta}$ takes
 values in $\bar{J}^{\,2} G$ is necessarily holonomous,
 $\tilde{\beta} = j \beta$ (where $\, \beta = \pi_{JG} \circ
 \tilde{\beta}$), so $\, \tilde{\beta} \in H\!B(G,\tilde{G})$,
 The proof in the lagrangian context is the same, just replacing
 equation~(\ref{eq:INVHAM}) by equation~(\ref{eq:INVLAG}),
 $\vec{J}^{\,\circledast} E$ by $JE$ and $\mathcal{H}$ by
 $\mathbb{F} \mathcal{L}$.
 \hspace*{\fill} \qed
\end{proof}

\section{Noether's Theorem}

We begin this section introducing the concept of momentum map in the
context of Lie groupoid actions: it comes in two variants, depending
on whether we work in extended or in ordinary multi\-phase space.
To prepare the ground, let us formulate the infinitesimal versions
of the Definitions~\ref{def:FULLSG} and~\ref{def:HOLBIS}:
\begin{dfn} \label{def:FULLSA}~
 Let\/ $\mathfrak{g}$ be a Lie algebroid over a manifold\/~$M$.
 We say that a Lie subalgebroid\/~$\tilde{\mathfrak{g}}$ of its jet
 algebroid\/~$J\mathfrak{g}$ is \textbf{full} if the projection
 of\/~$J\mathfrak{g}$ to\/~$\mathfrak{g}$ remains a bundle
 projection when restricted to\/~$\tilde{\mathfrak{g}}$.
\end{dfn}
\begin{dfn} \label{def:HOLSEC}~
 Let\/ $\mathfrak{g}$ be a Lie algebroid over a manifold\/~$M$ and\/
 $\tilde{\mathfrak{g}}$ a full Lie subalgebroid of its jet algebroid\/
 $J\mathfrak{g}$.
 Then we define the \textbf{Lie algebra of holonomous sections}
 of\/~$\tilde{\mathfrak{g}}$, denoted by\/ $H\!\varGamma(\mathfrak{g},
 \tilde{\mathfrak{g}})$, to be the Lie subalgebra of\/~$\varGamma%
 (\tilde{\mathfrak{g}})$ given by
 \begin{equation}
  H\!\varGamma(\mathfrak{g},\tilde{\mathfrak{g}})~
  =~\{ \tilde{X} \in \varGamma(\tilde{\mathfrak{g}})~|~\tilde{X} = jX~
       \mbox{for some $\, X \in \varGamma(\mathfrak{g})$} \} \,,
 \end{equation}
 which can also be viewed as a Lie subalgebra of\/~$\varGamma(\mathfrak{g})$,
namely, the one given by
 \begin{equation}
  H\!\varGamma(\mathfrak{g},\tilde{\mathfrak{g}})~
  \cong~\bigl\{ X \in \varGamma(\mathfrak{g})~|~
        jX \in \varGamma(\tilde{\mathfrak{g}}) \bigr\} \,.
 \end{equation}
\end{dfn}
In what follows, we shall often switch between these two
interpretations without further mention.
\begin{dfn} \label{def:MOMAP}~
 Let\/ $E$ be a fiber bundle over a manifold\/ $M$, endowed with the action of
 a Lie groupoid\/ $G$ over the same manifold\/ $M$, and consider the induced
 actions of\/~$JG$ on the extended multiphase space\/ $J^{\circledstar} E$
 and the ordinary multiphase space\/ $\vec{J}^{\,\circledast} E$, as well as
 the corresponding infinitesimal actions of the Lie algebroids\/~$\mathfrak{g}$
 (by fundamental vector fields on\/~$E$) and\/ $J\mathfrak{g}$ (by fundamental
 vector fields on\/~$J^{\circledstar} E$ and\/~$\vec{J}^{\,\circledast} E$).
 Then the \textbf{extended momentum map}\/ $\mathcal{J}^{\mathrm{ext}}$
 associated to each of these actions is the map
 \begin{equation} \label{eq:MMAPE1}
  \mathcal{J}^{\mathrm{ext}}:~\varGamma(J\mathfrak{g})~
  \longrightarrow~\Omega^{n-1} \bigl( J^{\circledstar} E \bigr)
 \end{equation}
 defined by
 \begin{equation} \label{eq:MMAPE2}
  \mathcal{J}^{\mathrm{ext}}(Z) \, 
  = \, i_{Z_{J^{\circledstar} E}}^{} \theta \,,
 \end{equation}
 and the map
 \begin{equation} \label{eq:MMAPO1}
  \mathcal{J}^{\mathrm{ext}}:~\varGamma(J\mathfrak{g})~
  \longrightarrow~\Omega^{n-1} \bigl( \vec{J}^{\,\circledast} E \bigr)
 \end{equation}
 defined by
 \begin{equation} \label{eq:MMAPO2}
  \mathcal{J}^{\mathrm{ext}}(Z) \,
  = \, i_{Z_{\vec{J}^{\,\circledast} E}}^{} \theta_{\mathcal{H}} \,,
 \end{equation}
 respectively, and the corresponding \textbf{momentum map} is its
 composition with the jet prolongation map from\/ $\varGamma(\mathfrak{g})$
 to\/~$\varGamma(J\mathfrak{g})$, so
 \begin{equation} \label{eq:MMAPE3}
  \mathcal{J}:~\varGamma(\mathfrak{g})~
  \longrightarrow~\Omega^{n-1} \bigl( J^{\circledstar} E \bigr)
 \end{equation}
 with
 \begin{equation} \label{eq:MMAPE4}
  \mathcal{J}(X) \, 
  = \, i_{X_{J^{\circledstar} E}}^{} \theta \,,
 \end{equation}
 and
 \begin{equation} \label{eq:MMAPO3}
  \mathcal{J}:~\varGamma(\mathfrak{g})~
  \longrightarrow~\Omega^{n-1} \bigl( \vec{J}^{\,\circledast} E \bigr)
 \end{equation}
 with
 \begin{equation} \label{eq:MMAPO4}
  \mathcal{J}(X) \,
  = \, i_{X_{\vec{J}^{\,\circledast} E}}^{} \theta_{\mathcal{H}} \,,
 \end{equation}
 where\/ $X_{J^{\circledstar} E}$ and\/ $X_{\vec{J}^{\,\circledast} E}$
 are the canonical (dualized jet) lifts of\/ $X_E$ from\/~$E$ to\/
 $J^{\circledstar} E$ and\/ $\vec{J}^{\,\circledast} E$, which coincide
 with the fundamental vector fields\/ $(jX)_{J^{\circledstar} E}$ and\/
 $(jX)_{\vec{J}^{\,\circledast} E}$, respectively. 
\end{dfn}
Only the ordinary multiphase space version appears directly in Noether's
theorem:
\begin{thm}[Noether's theorem] \label{te:NOETH}~
 Let\/ $E$ be a fiber bundle over a manifold\/~$M$, endowed with the action
 of a Lie groupoid\/~$G$ over the same manifold\/~$M$, and consider the
 induced action of\/~$JG$ on the ordinary multiphase space\/ $\vec{J}%
 ^{\,\circledast} E$, as well as the corresponding infinitesimal actions of
 the Lie algebroids\/ $\mathfrak{g}$ (by fundamental vector fields on\/~$E$)
 and\/ $J\mathfrak{g}$ (by fundamental vector fields on\/~$\vec{J}%
 ^{\,\circledast} E$).
 Given a full Lie subgroupoid\/ $\tilde{G}$ of\/~$JG$, with corresponding
 full Lie subalgebroid\/ $\tilde{\mathfrak{g}}$ of\/~$J\mathfrak{g}$, and a\/
 $\tilde{G}$-invariant hamiltonian $\, \mathcal{H}: \vec{J}^{\,\circledast} E
 \longrightarrow J^{\circledstar} E$, the \textbf{Noether current} associated
 with a ``generator'' $\, X \in H\!\varGamma(\mathfrak{g},\tilde{\mathfrak{g}}) \,$
 and a section\/~$\phi$ of\/~$\vec{J}^{\,\circledast} E \,$ is the pull-back
 $\, \phi^* \mathcal{J}(X) \in \Omega^{n-1}(M)$.
 Then if $\phi$ satisfies the equations of motion, i.e., the De\,Donder\,--\,%
 Weyl equations, this current is \textbf{conserved}, i.e., a closed form:
 \[
  d[\phi^* \mathcal{J}(X)] \, = \, 0 \,.
 \]
\end{thm}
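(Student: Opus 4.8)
The plan is to combine Cartan's magic formula with the two standing hypotheses~-- the equation of motion for~$\phi$ and the invariance of the hamiltonian~-- following the familiar pattern of Noether-type arguments, now adapted to the multisymplectic setting. Write $\xi = X_{\vec{J}^{\,\circledast} E}$ for the fundamental vector field on $\vec{J}^{\,\circledast} E$ associated with the generator~$X$; by Definition~\ref{def:MOMAP} it coincides with $(jX)_{\vec{J}^{\,\circledast} E}$, so that $\mathcal{J}(X) = i_\xi^{} \theta_{\mathcal{H}}$. Using the relation $\, \omega_{\mathcal{H}} = - d\theta_{\mathcal{H}} \,$ together with Cartan's formula $\, L_\xi^{} = i_\xi^{} d + d \, i_\xi^{}$, I would first establish, as an identity between forms on $\vec{J}^{\,\circledast} E$,
\[
 d\mathcal{J}(X)~=~d \, i_\xi^{} \theta_{\mathcal{H}}~
 =~L_\xi^{} \theta_{\mathcal{H}} \, + \, i_\xi^{} \omega_{\mathcal{H}} \,.
\]
Since the exterior derivative commutes with pull-back, applying $\phi^*$ gives
\[
 d[\phi^* \mathcal{J}(X)]~
 =~\phi^* \bigl( L_\xi^{} \theta_{\mathcal{H}} \bigr) \, + \,
   \phi^* \bigl( i_\xi^{} \omega_{\mathcal{H}} \bigr) \,,
\]
and the theorem reduces to showing that both terms on the right-hand side vanish.

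The second term is disposed of by the equation of motion. The fundamental vector field~$\xi$ is projectable to~$M$~-- it projects to $\alpha_{\mathfrak{g}}(X)$, since the anchor of $J\mathfrak{g}$ satisfies $\, \alpha_{J\mathfrak{g}}(jX) = \alpha_{\mathfrak{g}}(X) \,$ and the infinitesimal action sends the anchor to the base projection (equation~(\ref{eq:ACTLA3})). It is therefore an admissible test field in the De~Donder\,--\,Weyl equations~(\ref{eq:HEQMOT}), which give $\, \phi^*(i_\xi^{} \omega_{\mathcal{H}}) = 0 \,$ as soon as $\phi$ is a solution.

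The first term is where $\tilde{G}$-invariance of~$\mathcal{H}$ enters, and this is the step demanding the most care. I would argue that in fact $\, L_\xi^{} \theta_{\mathcal{H}} = 0 \,$ identically, as the infinitesimal counterpart of Theorem~\ref{te:LHINV}. The crucial observation is that the condition $\, X \in \Gamma(\mathfrak{g},\tilde{\mathfrak{g}}) \,$, that is, $\, jX \in \Gamma(\tilde{\mathfrak{g}})$, is precisely the infinitesimal form of the requirement that the bisection $\exp(tX)$ of~$G$ prolong to a holonomous bisection $\, j\exp(tX) = \exp(t \, jX) \,$ of~$\tilde{G}$ (using that the functor~$J$ commutes with the passage from a Lie groupoid to its Lie algebroid, hence with the exponential). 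Differentiating the flow of these prolonged bisections at $t = 0$ reproduces the lift of~$\xi$ that generates the action of $\, J^2 G \cap J\tilde{G} \,$ on $T(\vec{J}^{\,\circledast} E)$ and its descendants. Since Theorem~\ref{te:LHINV} asserts that $\theta_{\mathcal{H}}$ is invariant under exactly this groupoid, differentiating that finite invariance relation at the identity yields $\, L_\xi^{} \theta_{\mathcal{H}} = 0 \,$. Together with the vanishing of the second term, this gives $\, d[\phi^* \mathcal{J}(X)] = 0$, as claimed.

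The main obstacle is the bookkeeping hidden in this last step: one must check that jet prolongation of generators commutes with the formation of fundamental vector fields and with the induced second-order action on $T(\vec{J}^{\,\circledast} E)$, so that the holonomous generator $\, jX \in \Gamma(\tilde{\mathfrak{g}}) \,$ truly integrates to the $\, J^2 G \cap J\tilde{G}$-flow under which Theorem~\ref{te:LHINV} supplies invariance. I expect the remaining manipulations~-- Cartan's formula, the commutation of~$d$ with pull-back, and the verification that $\xi$ is projectable~-- to be entirely routine by comparison.
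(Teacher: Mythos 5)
Your proposal is correct and follows essentially the same route as the paper's own proof: the same Cartan-formula decomposition of $d[\phi^*\mathcal{J}(X)]$ into a Lie-derivative term and a contraction term, the same use of projectability of $X_{\vec{J}^{\,\circledast} E}$ together with the De~Donder\,--\,Weyl equations to kill the latter, and the same integration of $X$ to the one-parameter family of bisections $\exp(tX)$ whose prolongations land in $\tilde{G}$, so that Theorem~\ref{te:LHINV} (invariance of $\theta_{\mathcal{H}}$ under $J^2 G \cap J\tilde{G}$) can be differentiated at $t=0$ to give $L_{X_{\vec{J}^{\,\circledast} E}}\theta_{\mathcal{H}}=0$. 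The ``bookkeeping'' you flag as the main obstacle is handled in the paper exactly as you anticipate, by invoking Definition~\ref{def:INVTEN}(a) with $\tilde{\beta} \to j^2(\exp(tX))$.
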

\begin{proof}
 Given $\, X \in H\!\varGamma(\mathfrak{g},\tilde{\mathfrak{g}})$ and
 $\, \phi \in \varGamma(\vec{J}^{\,\circledast} E)$, we have
 \[
  \begin{aligned}
   d[\phi^* \mathcal{J}(X)]~
   &=~d[\phi^* (i_{X_{\vec{J}^{\,\circledast} E}} \theta_{\mathcal{H}})]~
    =~\phi^* d (i_{X_{\vec{J}^{\,\circledast} E}} \theta_{\mathcal{H}}) \\
   &=~\phi^* (L_{X_{\vec{J}^{\,\circledast} E}} \theta_{\mathcal{H}}) \,
      + \, \phi^* (i_{X_{\vec{J}^{\,\circledast} E}} \omega_\mathcal{H}) \,.
  \end{aligned}
 \]
 \emph{Claim 1:}
 \[
  L_{X_{\vec{J}^{\,\circledast} E}}\theta_\mathcal{H} = 0 \,.
 \]
 By hypothesis, $X$ generates a one-parameter subgroup of holonomous
 bisections $\exp(tX)$ of~$\tilde{G}$ and $X_{\vec{J}^{\,\circledast} E}$
 generates the one-parameter subgroup $\Pi_{\vec{J}^{\,\circledast} E}
 (j(\exp(tX)))$ of automorphisms of~$\vec{J}^{\,\circledast} E$, which
 just as in the proof of Theorem~\ref{te:LHINV} implies that
 \[
  \Pi_{\vec{J}^{\,\circledast} E}(j(\exp(tX)))^* \,
  \theta_{\mathcal{H}}~=~\theta_{\mathcal{H}}
 \]
 and hence
 \[
  L_{X_{\vec{J}^{\,\circledast} E}} \theta_{\mathcal{H}}~
  =~\frac{d}{dt} \; \Pi_{\vec{J}^{\,\circledast} E}(j(\exp(tX)))^* \,
    \theta_{\mathcal{H}} \, \Big|_{t=0} \,
  =~\frac{d}{dt} \, \theta_{\mathcal{H}} \, \Big|_{t=0}~=~0 \,.
 \]
 \emph{Claim 2:} If $\phi$ is a solution of the De\,Donder\,--\,Weyl equations,
 then since $X_{\vec{J}^{\,\circledast} E}$ is projectable, it follows from
 equation~(\ref{eq:HEQMOT}) that
 \[
  \phi^* (i_{X_{\vec{J}^{\,\circledast} E}} \omega_{\mathcal{H}})~=~0 \,.
 \]
 \vspace{-2ex}
 \qed
\end{proof}

\section{Example: Theory of a real scalar field}

In this section, we want to illustrate the concepts and constructions intro%
duced in this paper on what is perhaps the simplest possible example: the
theory of a single real scalar field on a Lorentz manifold $M$ with metric
tensor $\mathslf{g}$.
Here, the situation is substantially simplified because this theory has no
(continuous) internal symmetries; yet groupoids are still relevant to handle
its space-time symmetries.
The configuration bundle $E$ is the trivial real line bundle over~$M$,
whose bundle projection $\pi_E^{}$ is the projection $\mathrm{pr}_1^{}$
onto the first factor and whose (smooth) sections are just ordinary
(smooth) functions on~$M$,
\begin{equation}
 E~=~M \times \mathbb{R}~~,~~
 \varGamma(E)~=~C^\infty(M,\mathbb{R}) \,.
\end{equation}
Also, the basic Lie groupoid $G$ for symmetry considerations is the pair
groupoid of~$M$, whose source projection $\sigma_G^{}$ and target
projection $\tau_G^{}$ are the projections onto the second and the
first factor, respectively, and whose (smooth) bisections are just the
(smooth) diffeomorphisms of~$M$,
\begin{equation}
 G~=~M \times M~~,~~
 \mathrm{Bis}(G)~=~\mathrm{Diff}(M) \,,
\end{equation}
and hence, as is well known, the corresponding Lie algebroid $\mathfrak{g}$
is the tangent bundle of~$M$, whose (smooth) sections are just the
(smooth) vector fields on~$M$,
\begin{equation}
 \mathfrak{g}~=~TM~~,~~
 \varGamma(\mathfrak{g})~=~\mathfrak{X}(M) \,,
\end{equation}
so that the exponential map in equation~(\ref{eq:EXP1}) is the
standard one that associates to each vector field its flow.%
\footnote{For simplicity of presentation, we disregard questions of
completeness here.}
Next, passing to first order jets, we note that since $E$ is globally trivial,
we can identify its jet bundle with its linearized jet bundle: both are
just the pull-back of the cotangent bundle of~$M$ to~$E$,
\begin{equation}
 JE~=~\vec{J} E~=~\mathrm{pr}_1^* \bigl( T^\ast M \bigr) \,.
\end{equation}
This leads to the following identifications for the ordinary and extended
multiphase spaces:
\begin{equation}
 \vec{J}^{\,\circledast} E
 =~\mathrm{pr}_1^* \bigl( \bwedge^{\!n-1\,} T^\ast M \bigr)
 \quad , \quad
 J^{\circledstar} E~
 =~\mathrm{pr}_1^* \bigl( \bwedge^{\!n-1\,} T^\ast M \oplus
                          \bwedge^{\!n\,} T^\ast M \bigr) \,.
\end{equation}
Similarly, as already stated in Example~\ref{ex:LFRGR}, the jet groupoid
of the pair groupoid is the linear frame groupoid,
\begin{equation}
 JG~=~GL(TM) \,,
\end{equation}
and hence the jet algebroid of the tangent bundle is the linear frame
algebroid,
\begin{equation}
 J\mathfrak{g}~=~\mathfrak{gl}(TM) \,,
\end{equation}
which as a vector bundle is the direct sum $\, TM \oplus L(TM) \,$ (see
equation~(\ref{eq:JETALG2})) but whose bracket operation we shall not
specify explicitly since we shall not need it here.

With these preliminaries out of the way, we can specify the lagrangian and
the (covariant) hamiltonian of the theory: $\mathcal{L} = L \, d^{\,n} x$
and $\mathcal{H} = - H \, d^{\,n} x$ where the functions $L$ on $JE$ and
$H$ on $\vec{J}^{\,\circledast} E$ are given by
\begin{equation} \label{eq:LAGSF}
 L(\varphi,d\varphi)~=~\tfrac{1}{2} \, \mathslf{g}^{\mu\nu} \,
 \partial_\mu^{} \varphi \, \partial_\nu^{} \varphi \, - \, V(\varphi) \,,
\end{equation}
and
\begin{equation} \label{eq:HAMSF}
 H(\varphi,\pi)~=~\tfrac{1}{2} \, \mathslf{g}_{\mu\nu}^{} \,
 \pi^\mu \pi^\nu \, + \, V(\varphi) \,,
\end{equation}
respectively, where $\, d\varphi = \partial_\mu^{} \varphi \, dx^\mu \,$
and $\, \pi = \pi^\mu \, d^{\,n} x_\mu^{} \,$, and $V$ is some potential.
Obviously, then, the symmetry groupoid for this theory is the orthonormal
frame groupoid with respect to the metric~$\mathslf{g}$,
\begin{equation}
 \tilde{G}~=~O(TM,\mathslf{g}) \,.
\end{equation}
Thus the group of holonomic bisections of~$\tilde{G}$ is precisely the
isometry group of~$(M,\mathslf{g})$,
\begin{equation}
 H\!B(G,\tilde{G})~=~\mathrm{Isom}(M,\mathslf{g}) \,,
\end{equation}
and the Lie algebra of holonomous sections of $\tilde{\mathfrak{g}}$ is
precisely the Lie algebra of Killing vector fields on~$(M,\mathslf{g})$,
\begin{equation}
 H\!\varGamma(\mathfrak{g},\tilde{\mathfrak{g}})~
 =~\mathrm{Kill}(M,\mathslf{g}) \,.
\end{equation}
Given such a Killing vector field~$K$ on~$(M,\mathslf{g})$, and writing
\[
 K~=~K^\mu \, \frac{\partial}{\partial x^\mu} \,,
\]
we first see that its lift $K_E$ to~$E$ has the same form (the component
in the fiber direction of~$E$, i.e., along the vertical vector field
$\partial/\partial q$, is zero, which reflects the fact that we are
dealing with a scalar field theory, where the field values are
invariant under diffeomorphisms of~$M$) and hence its canonical
lift to $\vec{J}^{\,\circledast} E$ reads
\[
 K_{\vec{J}^{\,\circledast} E}~
 =~K^\mu \, \frac{\partial}{\partial x^\mu} \, + \,
   \Bigl( \frac{\partial K^\mu}{\partial x^\nu} \, p^\nu \,- \,
          \frac{\partial K^\nu}{\partial x^\nu} \, p^\mu \Bigr) \,
   \frac{\partial}{\partial p^\mu} \,,
\]
which, when contracted with
\[
 \theta_{\mathcal{H}}~
 =~p^\mu \; dq \wedge d^{\,n} x_\mu^{} \, - \, H \, d^{\,n} x~,
\]
gives
\[
 \mathcal{J}(K)~=~i_{K_{\vec{J}^{\,\circledast} E}} \theta_{\mathcal{H}}~
 = \; - \,  p^\mu K^\nu \; dq \wedge d^{\,n} x_{\mu\nu}^{} \, - \,
      H \, K^\mu \, d^{\,n} x_\mu^{} \,.
\]
Pulling back with a field configuration $\, \phi = (\varphi,\pi)$ amounts to
substituting $q$ by $\varphi$, $dq$ by $d\varphi = \partial_\kappa^{}
\varphi \; dx^\kappa \,$ and $p^\mu$ by $\, \pi^\mu = g^{\mu\nu} \,
\partial_\nu^{} \varphi$, and using $\, dx^\kappa \wedge
d^{\,n} x_{\mu\nu}^{} = \delta_\nu^\kappa \, d^{\,n} x_\mu^{}
\, - \, \delta_\mu^\kappa \, d^{\,n} x_\nu^{}$, we get
\begin{equation}
 \phi^* \mathcal{J}(X)~
 = \; - \, T_{\mu\nu}^{} \, K^\mu \mathslf{g}^{\nu\kappa} \,
   d^{\,n} x_\kappa^{} \,,
\end{equation}
where $T_{\mu\nu}^{}$ is the energy-momentum tensor of the theory,
\begin{equation}
 T_{\mu\nu}^{}~
 =~\partial_\mu^{} \varphi \; \partial_\nu^{} \varphi \, - \,
   \mathslf{g}_{\mu\nu}^{} L \,.
\end{equation}

Elucidating the relation between the Noether currents that appear in
our formulation and the energy-momentum tensor in a more general
context is one of the problems presently under investigation.

\section{Conclusions and Outlook}

In this paper, we have taken first steps towards a description of symmetries
in field theory using Lie groupoids and Lie algebroids, instead of the
traditional approach that uses Lie groups and Lie algebras but requires
infinite-dimensional ones as soon as local symmetries are involved. \linebreak
Our main motivation for doing so arises from the observation that in
relativistic field theories, the need to consider local symmetries is
almost unavoidable, since here the notion of a global symmetry~--
which, by definition, applies the same transformation at every point
of space-time~-- is a mathematical artifact without physical meaning.
After all, any physical implementation of such a requirement of rigidity
would violate the principle of space-time locality, according to which
no information can be exchanged between space-like separated regions
of space-time.
(This is really the same argument as the one showing that in
relativistic mechanics there is no such thing as a rigid body.)
We argue that the theory of Lie groupoids and Lie algebroids
provides the adequate mathematical machinery to describe
local symmetries in field theory; in particular, this applies
to gauge theories, whose geometric formulation using principal
bundles and connections has become standard wisdom during
the 1970s.
In this context, it may be amusing to note that Ehresmann already
invented all the essential mathematical notions (principal bundles,
connections \emph{and} Lie groupoids) during the 1950s, more or
less in one stroke, but mathematicians have for several decades
used only the first two and largely neglected the third, and
this lack of balance has proliferated into physics.%
\footnote{As a testimony to this statement, we may quote the
classical textbooks of Kobayahi and Nomizu.}
So in order to incorporate Lie groupoids and Lie algebroids
into the picture, we have to catch up on four decades of delay.

As an example of what can be gained, we may quote the classical
difficulties with unraveling the true symmetry of typical lagrangians
in field theory over curved space-times, provided we are interested
in including space-time symmetries.
In the traditional group-theoretical approach, the pertinent symmetry
is given by the isometry group of space-time (in special relativity,
the Poincar\'e group), which may collapse to a trivial group under
arbitrarily small perturbations of the metric.
Much of this instability disappears when we use groupoids: what
appears there is the orthonormal frame groupoid of space-time,
which is quite stable under arbitrary perturbations (even large ones),
but is generically non-holonomous.
Thus the notion of holonomous or non-holonomous subgroupoids
of jet groupoids, which has no analogue in traditional group theory,
appears to be an essential tool for understanding this issue.
We plan to elaborate further on this point in the second part
of this series.

\section*{Acknowledgements}

The work of the first and the last author has been done in partial
fulfillment of the requirements for the degree of Doctor in Science,
under the supervision of the second author, and has been supported
by fellowships from CAPES (Coordena\c{c}\~ao de Aperfei\c{c}oamento
de Pessoal de N\'{\i}vel Superior), Brazil. The second author
acknowledges partial financial support from CNPq (Conselho
Nacional de Desenvolvimento Cient\'{\i}fico e Tecnol\'ogico),
Brazil.

{\footnotesize

\end{document}